\newcommand{\E}{\mathbb{E}} 
\newcommand{\R}{\mathbb{R}}
\def\pr{\mathbb{P}}
\newcommand{\mean}{{\E}}
\def\var{{\rm Var}}
\def\pR{{\R_+^\circ}}
\def\ignore#1{}
\def\1{{\bf 1}}
\newcommand{\beas}{\begin{eqnarray*}}
\newcommand{\enas}{\end{eqnarray*}}
\newcommand{\bea}{\begin{eqnarray}}
\newcommand{\ena}{\end{eqnarray}}
\def\eq{\begin{equation}}
\def\en{\end{equation}}
\numberwithin{equation}{section}
\newtheorem{Theorem}{Theorem}[section]
\newtheorem{Lemma}[Theorem]{Lemma}
\newtheorem{Setup}[Theorem]{Setup}
\theoremstyle{definition}
\newtheorem{Example}[Theorem]{Example}
\renewcommand\theequation{\thesection.\@arabic\c@equation}
\def\I{\mathbb{I}}
\def\IR{\R}
\def\IN{\mathbb{N}}
\def\be#1{\begin{equation*}#1\end{equation*}}
\def\ben#1{\begin{equation}#1\end{equation}}
\def\eq#1{\eqref{#1}}
\def\ba#1{\begin{align*}#1\end{align*}}
\def\ban#1{\begin{align}#1\end{align}}
\newcommand\cov{\textrm{Cov}}
\def\eps{\varepsilon}
\def\mcI{\mathcal{I}}
\def\dtv{d_{\textrm{TV}}}
\def\law{\mathcal{L}}
\def\mcF{\mathcal{F}}
\def\t#1{^{(#1)}}
\newcommand{\mvert}{\, \vert \,}
\newcommand{\tb}{\widetilde{b}}
\newcommand{\tx}{\tilde{x}}
\newcommand{\tD}{\widetilde{D}}
\newcommand{\tZ}{\widetilde{Z}}
\newcommand{\NN}{\mathbb{N}}
\newcommand{\Rplus}{\R_{+}}
\def\norm#1{\|#1\|}
\newcommand{\pop}{\mathrm{Pop}}
\newcommand{\done}{\bar{d}_1}
\newcommand{\dtwo}{\bar{d}_2}
\newcommand{\mcb}{\mathcal{B}}
\newcommand{\mcf}{\mathcal{F}}
\newcommand{\mcn}{\mathcal{N}}
\newcommand{\mcx}{\mathcal{X}}
\newcommand{\mfn}{\mathfrak{N}}
\newcommand{\ftv}{\mcf_{\textrm{TV}}}
\newcommand{\fw}{\mcf_{\textrm{W}}}
\newcommand{\Psit}{\Psi \vert_t}
\newcommand{\Psiti}{\Psi^{(i)} \vert_t}
\newcommand{\Upsilont}{\Upsilon \vert_t}
\newcommand{\Rd}{\R^d}
\newcommand{\hbit}{\hspace*{1.5pt}}
\newcommand{\tvnorm}[1]{\| #1 \|_{\textrm{TV}}}
\newcommand{\im}{\mathrm{im}}
\newcommand{\inlawto}{\stackrel{\mathcal{D}}{\raisebox{0pt}[3.5pt][0pt]{$\longrightarrow$}}}
\newcommand{\diam}{\mathrm{diam}}
\DeclareMathOperator{\sd}{sd} 
\titleformat*{\section}{\large\bfseries}
\titleformat*{\subsection}{\itshape}
\titleformat*{\subsubsection}{\itshape}
\begin{document}

\title{
Wireless network signals with moderately correlated shadowing
still appear Poisson
}

\author{Nathan Ross\thanks{School of Mathematics and Statistics, the University of Melbourne, 
Parkville, VIC 3010, Australia; nathan.ross@unimelb.edu.au} \,
 and
Dominic Schuhmacher\thanks{Institute for Mathematical Stochastics, University of G\"ottingen,
Goldschmidtstra{\ss}e 7, 37077 G\"ottingen, Germany; dschuhm1@uni-goettingen.de}
}

\date{\today}

\maketitle

\begin{abstract}
We consider the point process of signal strengths emitted from transmitters in a 
wireless network and observed at a fixed position. In our model, transmitters are placed 
deterministically or randomly according to a hard core or Poisson point process and signals
are subjected to power law propagation loss and random propagation effects that may be
correlated between transmitters.

We provide bounds on the distance
between the point process of signal strengths and a Poisson process
with the same mean measure, assuming correlated log-normal shadowing.
For ``strong shadowing" and moderate correlations, we find that the 
signal strengths are close to a Poisson process, generalizing
a recently shown analogous result for independent shadowing.
\end{abstract}

\section{Introduction}

In a wireless network, transmitters are placed in some configuration
and emit signals to users of the system (e.g., Wi-Fi or mobile phone).
Understanding the spectrum of signal strengths received at a fixed location in such networks is 
crucial for analysis and design. A main approach to this problem
is to study the behavior of the signal spectrum in
realistic mathematical models of such networks.
(We use the term signal spectrum to mean the point process of 
signal strengths; the exact nature of the signals, e.g., interference
or propagation, is not important to our results.) 

Due to the increasing prevalence of wireless signal technologies, there is a vast and increasing body of literature devoted
to studying 
key performance metrics derived from the signal spectrum. 
A significant thread of this research stems from modeling the positions of transmitters, receivers, or users as points of a random (typically Poisson) point process, and then computing quantities of interest using the tools of stochastic geometry; some key references are \cite{Andrews2010} \cite{Andrews2011} \cite{Baccelli2008} \cite{Haenggi2008} \cite{Renzo2013} and see also the recent works \cite{Renzo2016} \cite{George2016} and their references and discussion.

\subsection{The standard model}

The generally accepted ``standard" model in this setting is that discussed in \cite{Win2009} 
(going back to \cite{Baccelli1997} \cite{Brown2000})
where transmitters are placed according to a homogeneous Poisson process in the plane~$\IR^2$ 
with the receiver at the origin
(justifiable by thinking of the ``fixed" receiver location as being randomly chosen over the area of a large network),
and the signal strength at the receiver from a transmitter at location $x\in\IR^2$ is given by
$\ell(x) S_x$, where $\ell(x)$ is a non-increasing function representing the deterministic
propagation loss of the signal over distance, and the $S_x$ are i.i.d.\ positive random variables representing random
shadow-fading effects, e.g., 
from signals traveling through large obstacles (shadowing)
and same signal interactions (multi-path fading). 
If the Poisson transmitter placements are denoted by $\Xi\subset \IR^2\backslash\{0\}$, 
then the signal spectrum is just the collection of points $\Pi:=\{\ell(x) S_x\}_{x\in \Xi}$.

In the standard model, the signal spectrum is easy to understand since basic theory 
says $(x,S_x)_{x\in\Xi}$
is a Poisson point process on $\IR^2 \times \IR_{\geq0}$ and~$\Pi$ is  a deterministic
function of this point process, hence it is also a Poisson point process, now on $\IR_{\geq0}$,
with mean measure read from the density of~$\Xi$, 
the distribution of~$S_x$, and the function~$\ell$
 \cite[(2.8) of Proposition 2.9]{Keeler2014}
 and c.f.\ \cite[Lemma~1]{Blaszczyszyn2013}.
The importance of this result is that if you believe the standard model, then
the distribution of the signal spectrum only depends on the mean measure, which can be 
estimated from the empirical signal spectrum in a number of ways, see \cite{Reynaud-Bouret2003}
and references there.

\subsection{Universality results}

The standard model can be
generalized to allow for the transmitters
to be placed deterministically, such as (historically and unrealistically) on a grid,
or according to a (not necessarily Poisson) point process \cite{Miyoshi2014}.
In these cases, the arguments of the previous paragraph do not apply
and moreover analytic or numerical computations for quantities of interest
are not always feasible. 
Much work has gone into understanding the signal spectrum
for various choices of transmitter configurations, shadow-fading distributions, and propagation loss functions.
In general, 
features of the signal spectrum may depend on model details, but 
we can ask: are there some (weak) hypotheses on the parameters of the model
such that 
a few measurable quantities approximately determine the distribution
of signal strengths? 
For the generalized standard model this question has been answered in the affirmative by
\cite{Keeler2014} where it is shown that if the density of transmitters in the plane is asymptotically regular
and the shadow variables are large in mean but small in probability
-- typically referred to as a ``strong" shadowing regime \cite{Blaszczyszyn2015} -- then the spectrum of signal strengths can be well-approximated by a Poisson process on $\IR_{\geq0}$
with intensity read from the parameters of the model 
(this general result built on the work of \cite{Blaszczyszyn2013} where the same thing is shown for a particular family of shadow-fading variables and propagation loss functions). 
The importance of this result is that under the hypotheses above,
the signal spectrum behaves approximately as if the transmitters were placed according to a Poisson
process and so again, the distribution
of the signal spectrum is (approximately) determined by the mean measure which can 
be estimated from the empirical signal spectrum.

Having such universal results for the standard model is a positive development
(though there is still work to be done
in developing tests for determining when the results of \cite{Keeler2014} can be safely applied in practice),
but there is one serious issue with this story that stems
from the standard model itself: it is clearly unrealistic
to assume that the shadow-fading variables associated to different transmitters are independent \cite{Gudmundson1991} \cite{Baccelli2015}. 
If two transmitters are close to one another, or close to the same path to the origin, 
then there should be correlation between the shadowing effects for those transmitters;
see the very nice survey \cite{Szyszkowicz2010} and references there.

\subsection{Dependent shadowing}

Similar to the standard model, there has been much work around studying 
the signal spectrum for different choices of transmitter configurations, propagation loss functions,
and correlation schemes for different shadow-fading variables. 
However, as discussed in \cite{Szyszkowicz2010}, many of these schemes
have fundamental consistency issues and moreover, even where tractable analysis is possible, the behaviour of the signal spectrum under different models can vary considerably. 

Extrapolating just a bit beyond the summary judgement of \cite{Szyszkowicz2010},
we argue that the right models to consider in the current climate is that of the generalized
standard model with the added feature that 
the variables $(S_x)_{x\in\Xi}$ are values of a \emph{log-Gaussian field}
with correlation function $\rho$ defined on $\IR^2 \times \IR^2$. That is, $S_x = e^{a Z_{x}+b}$ where for any collection of points $x_1,\ldots, x_k$ of~$\Xi\subset\IR^2$, the vector $(Z_{x_1},\ldots, Z_{x_k})$ is centered multivariate Gaussian with covariance matrix
$(\rho(x_i, x_j))_{i,j=1}^k$ and~$a,b$ are some parameters. 
(This construction is well-defined if~$\rho$ is a symmetric and positive semidefinite function.)
Moreover,~$\rho(x,y)$ should have a reasonably fast rate of decay in the distance of~$x$ and~$y$ and ideally will have
an angular component accounting for the difference of the angles 
between the lines from~$x$ and~$y$ to the origin.
The qualities on~$\rho$ are direct recommendations from \cite{Klingenbrunn1999} \cite{Szyszkowicz2010}, 
and as observed by \cite{Gudmundson1991}, \cite{Catrein2008},
modeling the shadowing effects by a 
log-Gaussian field is empirically justified.

\subsection{Paper contribution}

We take a first step in generalizing the universality results 
of \cite{Keeler2014} to more realistic  models,  by extending 
the results of \cite{Blaszczyszyn2013} to the case where the shadowing variables 
are correlated.
 In particular, we show that 
if in the standard model 
\begin{itemize}
\item the transmitters are placed according to
either 1) a deterministic or random pattern with certain regularity conditions, or 2) a homogeneous Poisson process,
\item the propagation loss function decays like the norm to the power $-\beta$ for $\beta>2$, and 
\item the shadow-fading variables satisfy $S_x:=\exp(\sigma Z_x -\sigma^2/\beta)$,
where $\sigma>0$ and the $Z_x$ are generated from a Gaussian field with correlation
function with reasonable decay to zero and a technical condition given later
(note that the conditions are satisfied by most statistically tractable correlation functions appearing in the literature, reviewed below),
\end{itemize}
then the signal spectrum converges in distribution to 
a Poisson process with explicit mean measure. The result of \cite{Blaszczyszyn2013}
is precisely the case above with independent shadow variables.
We actually prove much stronger results in Theorems~\ref{thm:mainproc}
and~\ref{thm:randnumpd2}
where we provide
rates of convergence  for this limit theorem 
in total variation distance 
and a specialized point process metric.
The rate of convergence quantifies the quality of Poisson approximation 
and we also provide simulations for finite configurations
to indicate parameter values where the approximation is good.

There are a number of studies of wireless signal strengths for correlated shadow variables,
see references in  \cite{Szyszkowicz2010}, but general results of the type presented here are 
virtually unprecedented. Perhaps closest to our
limit results are those of \cite{Szyszkowicz2014} where a limit theorem 
is shown for the \emph{sum} of the signal spectrum from a 
finite (but large) collection of transmitters assuming a specific  
transmitter layout and shadowing correlation function 
that is much less general than our setting.

\medskip

\subsection{Layout of the paper}

\medskip

In the next section we define the model more precisely, 
state our convergence results in greater detail, and 
discuss their applicability, including a brief overview of relevant correlation functions
(a more significant survey is in Appendix~\ref{sec:updfuncs}).
In Section~\ref{sec:simmarg} we provide simulation results for a finite network
where transmitters are placed according to a Poisson process and a hexagonal grid
and then in Section~\ref{sec:futurework} we provide a discussion of results and future work. We precisely state and
prove our approximation results 
in Section~\ref{sec:approxproc}.
Finally in Appendix~\ref{sec:updfuncs} we survey correlation functions
relevant to our results, and Appendix~\ref{sec:proofppp} contains some
technical results, including the proof
of a new general Poisson point process approximation, Theorem~\ref{thmppp}.

\section{Model definition and convergence results}

We first discuss the generalization of the standard model we will use throughout the paper.
As is typically done, we actually study $N:=\{g(x)/S_x)\}_{x\in \Xi}:=\{1/(\ell(x) S_x)\}_{x\in \Xi}$, the
 spectrum of \emph{inverse} signal strengths
since there tend to be many weak signals which cause the approximating Poisson
process to have a singularity at zero.

\subsection{Correlated lognormal model}
\label{ssec:model}

It is convenient to identify countable sets $\{y_i \colon i \in \mcI\}$ of distinct points with the counting measures $\sum_{i \in \mcI} \delta_{y_i}$ where $\delta_y$ is the Dirac measure at $y$. Also we identify locally finite measures $M$ on $\pR = (0,\infty)$ with non-decreasing, right-continuous functions $\widetilde{M} \colon \pR \to \Rplus$ satisfying $\lim_{t \to 0} \widetilde{M}(t) = 0$, via $M((0,t])=\widetilde{M}(t)$, $t>0$. 

\begin{Setup}
\label{setup}
Let $\xi{\subset\IR^2/\{0\}}$ be a deterministic locally finite  
collection of points in $\IR^2$ representing the transmitter locations,
and $g(x)=h(\norm{x})=(K\norm{x})^{\beta}$ for some $K>0$ and $\beta>2$.
Write $\xi=\{x_i: \ i\in \mcI^{{\xi}} \}$ where $\mcI^\xi$ is a finite or countable index set.

Let $\{Z_x,  x\in\IR^2\}$ be a Gaussian field with correlation function $\rho$, standardized so that $\mean Z_x=0$ and $\var(Z_x)=1$. Suppose that $\rho$ is radially dominated by a non-increasing function $\tilde{\varrho} \colon \Rplus \to [0,1]$, i.e., $\rho(x,y) \leq \tilde\varrho(\norm{x-y})$ for all $x,y \in \R^2$, and that $\tilde\varrho(r) < 1$ for $r>0$.
Let $\sigma > 0$ and $S_x=\exp(\sigma Z_x -\sigma^2/\beta)$, the shadow
random variable associated to location~$x$.

Set $Y_{i}=g(x_{i})/S_{i}$, where we write $S_i=S_{x_i}$, and let $N:=N\t{\xi, \sigma}$ be the
signal spectrum generated by the collection $\{Y_i\}_{i\in\mcI^{{\xi}}}$, that is, $N=\sum_{i\in \mcI^{{\xi}}}\delta_{Y_{i}}$.
Let $p^{(x)}(t)=\pr(0<g(x)/S_x\le t)$, $x\in \IR^2$, $p_i(t)={p^{(x_i)}(t)}$, and $M(t):=M\t{\xi,\sigma}(t):=\sum_{i\in\mcI^{{\xi}}} p_i(t) = \mean N(t)$. 

If the transmitter locations are random, we write $\Xi:=\{X_i: \ i\in \mcI^{{\Xi}} \}$ in place of $\xi$ and assume that $\Xi$ is independent of the Gaussian field $\{Z_x,  x\in\IR^2\}$. Note that (the cardinality of) the set $\mcI^{{\Xi}}$ is also random, although for most situations it is enough to set $\mcI^{{\Xi}} = \NN$. We assume that $\Xi$ has a mean measure $\lambda$, which means that $\lambda(A) = \mean \Xi(A) < \infty$ for every bounded Borel set $A \subset \IR^2$. The signal spectrum $N=N\t{\Xi,\sigma}=\sum_{i\in \mcI^{{\Xi}}}\delta_{Y_{i}}$ is constructed in the same way as above, but based on $\Xi$ now instead of $\xi$. Note that $M(t):=M\t{\Xi,\sigma}(t):=\int_{\IR^2} \pr(g(x)/S_x\leq t) \lambda(dx) = \mean N(t)$; see Equation~\eqref{eq:meanofrandom} below.

For deterministic or random transmitters we assume that
$\rho$ and the bounding function $\tilde\varrho$ have the following properties.
\begin{itemize}
\item[P1.](Uniform positive definiteness, u.p.d.) For every $\eps > 0$ there exists a $\delta=\delta(\eps) >0$ such that 
for all $n \in \IN$, for all $\tx_1, \ldots, \tx_n \in \R^d$ with $\min_{i \neq j} \|\tx_i-\tx_j\| \geq \eps$, and for all $v \in \R^n$, we have
\be{
  \sum_{i,j=1}^{n} v_i v_j \rho(\tx_i,\tx_j) \geq \delta \|v\|^{2}.
} 
\item[P2.]There is an $R>0$ such that the function $r\mapsto r \tilde\varrho^2\bigl(R + \sqrt{3}R(r-1)\bigr)$ is non-increasing for $r\geq 1$.
\end{itemize}
\end{Setup}

Except for Properties~P1 and~P2, the setup is easily understood from the discussion in the introduction. 
Property~P2 is satisfied for any reasonable correlation function, and
essentially requires that $r\tilde\varrho(r)$ is non-increasing eventually, but
is stated in a specific way to simplify the proof of our upcoming results. 
The u.p.d.\ Property~P1 ensures that the spectral norm of the inverse of the covariance matrix induced by
the Gaussian field observed at any finite subset of points is uniformly upper bounded (by $1/\delta$ if the minimal distance is $\eps$). Our method requires that
we compare the distribution of $Z_x$ to that of $Z_x$ given the value of the field
at a finite collection of points and this is where the inverse covariance matrix
appears, see \eqref{eq:conditionalcumulants} in the proof of Theorem~\ref{thm:mainproc} and Lemma~\ref{lem:removingnp}. In terms of practical applicability,~P2
is satisfied for a wide choice of isotropic correlation functions, i.e., $\rho$ of the form $\rho(x,y) = \rho_0(\norm{x-y})$, such as any exponential, Mat\'{e}rn or Gaussian correlation function, as well as certain finite range correlation functions. It is also satisfied for certain
variations of such functions, e.g., 
$\rho(x,y) = \rho_0(\norm{A(x-y)})$ for $A$ a regular matrix, 
or convolutions and convex combinations of such functions.
 In all of these cases $\delta$ can be computed explicitly (at varying cost);
 see Appendix~\ref{sec:updfuncs} for further discussion.

Property~P1 also resonates well with avoiding numerical problems when applying statistical methods based on wireless network data. Many procedures concerned with statements about the shadowing random field $\{S_x, x \in \IR^2\}$, such as maximum likelihood estimation of its parameters or Kriging prediction of its values at unobserved locations involve inverting the correlation matrix $C = (\rho(x_i,x_j))_{1 \leq i,j \leq n}$ of the data $x_1, \ldots, x_n$. So for feasability of the procedure it is necessary that $C$ is invertible (i.e., $\rho$ is strictly positive definite). For numerical stability of the procedure, it is important that $C$ is well-conditioned, which (since $C$ is a correlation matrix) amounts essentially to saying that the smallest eigenvalue, i.e., the maximal $\delta$ in the definition of uniform positive definiteness, is not too close to zero.

Further, considering the list of correlation functions used previously in wireless network models given as Table~1 in \cite{Szyszkowicz2010}, after ruling out those functions that are not positive semidefinite (which do not produce feasible collections of shadowing variables), all of the isotropic correlation functions are u.p.d.\ with the possible exception of 
the powered exponential $\rho(x,y)=\exp(\norm{x-y}^\nu)$, $0<\nu\leq 2$, which is not well-studied
in the literature since it is similar to but less natural than the Mat\'{e}rn model
 because the parameter $\nu$ does not interpolate nicely, e.g., there is fundamentally different behaviour for $\nu = 2$ than for other $\nu$.
For correlation functions that incorporate both distance and angle,
the question of u.p.d.\ is not so easily answered and therefore we postpone
addressing it to later study.

\subsection{Convergence results}

We study the limit as $\sigma\to\infty$ of $N\t{\xi, \sigma}$ and $N\t{\Xi, \sigma}$ under Setup~\ref{setup}.
For our convergence results, we additionally assume that deterministic
transmitter placements~$\xi$ have the asymptotic homogeneity property:
for some $\kappa > 0$,
\ben{\label{eq:asymhom}
\xi(\bar{B}(0,r))/(\pi r^2) \to \kappa,
}
and that random placements $\Xi$ are homogeneous with intensity $\kappa$: for some $\kappa>0$
and all $A\subset\IR^2$,
\ben{\label{eq:randhom}
\mean\Xi(A)=\kappa |A|,
}
where $|A|$ denotes the (Lebesgue) area of $A$.

Under these assumptions, the
following result about the mean measure established in \cite{Blaszczyszyn2013} and \cite{Keeler2014}, remains obviously valid in our correlated setting.
\begin{Theorem}\label{thm:meanlim}
Let $\xi$, $N:=N\t{\xi, \sigma}$, and $M:=M\t{\xi, \sigma}$ be defined as in Setup~\ref{setup} with $\xi$ satisfying~\eq{eq:asymhom}. Then for each $t>0$,
\be{
\lim_{\sigma\to\infty} M\t{\xi,\sigma}(t)= \kappa \pi t^{2/\beta}/K^2.
} 
If we replace $\xi$ by a random $\Xi$ satisfying~\eq{eq:randhom}, then for every $\sigma>0$ and each $t>0$,
\be{
M(t)=\kappa \pi t^{2/\beta}/K^2.
}
\end{Theorem}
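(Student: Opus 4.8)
The plan is to reduce both claims to one explicit log-normal moment computation. Writing $\bar\Phi(u) := \pr(Z > u)$ for $Z$ standard normal, the event $\{g(x)/S_x \le t\}$ is equivalent to $\{Z_x \ge \sigma^{-1}\log(g(x)/t) + \sigma/\beta\}$, so by the marginal standard normality of the field,
\be{
p^{(x)}(t) = \bar\Phi\!\Bigl( \tfrac{\beta \log(K\norm{x}) - \log t}{\sigma} + \tfrac{\sigma}{\beta} \Bigr) =: \psi_\sigma(\norm{x}).
}
The function $\psi_\sigma \colon (0,\infty) \to (0,1)$ is continuous and strictly decreasing with $\psi_\sigma(0+) = 1$ and $\psi_\sigma(\infty) = 0$, so $\mu_\sigma := -d\psi_\sigma$ is a non-atomic Borel probability measure on $(0,\infty)$; inspecting $\psi_\sigma$ shows that $\mu_\sigma$ is the law of $R_\sigma := (t^{1/\beta}/K)\exp(\sigma W/\beta - \sigma^2/\beta^2)$ with $W \sim N(0,1)$. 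A one-line Gaussian integral (equivalently, the log-normal moment formula) then gives $\int_{(0,\infty)} r^2 \, \mu_\sigma(dr) = \mean R_\sigma^2 = t^{2/\beta}/K^2$ for \emph{every} $\sigma > 0$.

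Next I would rewrite the mean measure as a $\mu_\sigma$-average of ball counts. Since $\psi_\sigma(\norm{x_i}) = \int_{(0,\infty)} \Ind[\norm{x_i} \le r]\,\mu_\sigma(dr)$, Tonelli's theorem gives
\be{
M\t{\xi,\sigma}(t) = \sum_{i \in \mcI^{\xi}} \psi_\sigma(\norm{x_i}) = \int_{(0,\infty)} \xi(\bar B(0,r))\, \mu_\sigma(dr).
}
For the random case, applying the same manipulation to the formula $M\t{\Xi,\sigma}(t) = \int_{\R^2} \pr(g(x)/S_x \le t)\,\lambda(dx)$ from Setup~\ref{setup}, taking expectations, and using $\mean \Xi(\bar B(0,r)) = \kappa\pi r^2$ from \eqref{eq:randhom}, we get $M\t{\Xi,\sigma}(t) = \kappa\pi \int_{(0,\infty)} r^2\,\mu_\sigma(dr) = \kappa\pi t^{2/\beta}/K^2$ exactly, for every $\sigma > 0$; this settles the random statement.

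For the deterministic statement it remains to show $\int_{(0,\infty)} \varepsilon(r)\,\mu_\sigma(dr) \to 0$ as $\sigma \to \infty$, where $\varepsilon(r) := \xi(\bar B(0,r)) - \kappa\pi r^2$, since then $M\t{\xi,\sigma}(t) \to \kappa\pi \int r^2\,\mu_\sigma(dr) = \kappa\pi t^{2/\beta}/K^2$. I would split the integral over $[0,r_1)$, $[r_1,R_0]$ and $(R_0,\infty)$ for fixed constants $0 < r_1 < R_0$. Given $\delta > 0$, choose $R_0$ so large that $\lvert \xi(\bar B(0,r))/(\pi r^2) - \kappa \rvert < \delta$ for $r > R_0$ by \eqref{eq:asymhom}; then the tail contribution is at most $\delta\pi \int r^2\,\mu_\sigma(dr) = \delta\pi t^{2/\beta}/K^2$. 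Choose $r_1$ smaller than the distance from the origin to $\xi$ (positive since $0 \notin \xi$ and $\xi$ is locally finite), so that $\xi(\bar B(0,r)) = 0$ and hence $\lvert\varepsilon(r)\rvert = \kappa\pi r^2 \le \kappa\pi r_1^2$ on $[0,r_1)$. The middle piece is bounded by $\bigl(\xi(\bar B(0,R_0)) + \kappa\pi R_0^2\bigr)\,\mu_\sigma([r_1,R_0])$, and the point is that $\mu_\sigma$ escapes to the origin: $\mu_\sigma([r_1,\infty)) = \pr(R_\sigma \ge r_1) = \bar\Phi\bigl(\tfrac{\beta}{\sigma}\log(Kr_1 t^{-1/\beta}) + \tfrac{\sigma}{\beta}\bigr) \to 0$ as $\sigma \to \infty$ with $r_1, R_0$ fixed. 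Letting $\sigma \to \infty$, then $\delta \to 0$ and $r_1 \to 0$ finishes the proof.

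The main obstacle is this last step: although $\mu_\sigma$ collapses to the Dirac mass at $0$, its second moment stays pinned at $t^{2/\beta}/K^2$, so the entire value of $M\t{\xi,\sigma}(t)$ comes from the rare event that $R_\sigma$ is of order one or larger, and one must ensure that the $o(r^2)$ error in the asymptotic homogeneity hypothesis does not accumulate over that tail — which is exactly what the uniform bound $\lvert\varepsilon(r)\rvert \le \delta\pi r^2$ for $r > R_0$, combined with the fixed second moment of $\mu_\sigma$, delivers. The remaining estimates are routine.
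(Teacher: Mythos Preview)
Your argument is correct. The representation $M\t{\xi,\sigma}(t)=\int_{(0,\infty)}\xi(\bar B(0,r))\,\mu_\sigma(dr)$ together with the exact identity $\int r^2\,\mu_\sigma(dr)=t^{2/\beta}/K^2$ reduces both statements to the three-piece tail estimate you describe, and each piece is controlled exactly as you say. One cosmetic point: when you write ``$M\t{\xi,\sigma}(t)\to\kappa\pi\int r^2\,\mu_\sigma(dr)$'' the right-hand side still carries a $\sigma$; it is clearer to say that the difference $\int\varepsilon(r)\,\mu_\sigma(dr)\to 0$ while $\kappa\pi\int r^2\,\mu_\sigma(dr)$ is the \emph{constant} $\kappa\pi t^{2/\beta}/K^2$ for every $\sigma$.

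As for comparison with the paper: the paper does not prove Theorem~\ref{thm:meanlim} at all. It simply records that the statement was established in \cite{Blaszczyszyn2013} and \cite{Keeler2014} for independent shadowing and observes that, because $M\t{\xi,\sigma}(t)$ depends only on the one-dimensional marginals of the field (which are standard normal regardless of $\rho$), the result carries over verbatim. Your argument is essentially the argument those references use, written out in full; the layer-cake/second-moment trick you employ is the standard route and is also echoed (in a slightly different guise) in the tail estimate~\eqref{eq:polart1} later in the paper. So you have supplied a complete, self-contained proof where the paper is content to cite.
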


Before stating a convergence implication of our main results, we need some terminology
to define one of the classes of transmitter configurations we study.
Call a point process~$\Xi$ a \emph{hard core process} with (minimal) distance $\eps_*>0$ if $\pr(\inf_{i,j \in \mcI^{\Xi}, i \neq j} \norm{X_i-X_j} \geq \eps_*)=1$. We call a homogeneous point process $\Xi$ \emph{second-order stationary} if for any bounded Borel sets $A,B \subset \IR^2$ the ``second moments'' $\mean \bigl( \Xi(A) \Xi(B) \bigr)$ are finite and remain the same if we shift the whole point process by an arbitrary vector $x \in \IR^2$. Our convergence result for the hard core setting also requires a mild ``$\mathbf{B}^+_2$-mixing'' condition (existence of a reduced covariance measure with values in $[-\infty,\infty)$) described in detail around~\eqref{eq:defbgamma}. Intuitively $\mathbf{B}^+_2$-mixing says that the potential of a point of the process to excite further points at some distance $\geq r$ decreases to zero as $r \to \infty$.

To explain these conditions: having a hard core distance is intuitive from an engineering point of view
and the second order stationarity and mixing conditions are mild from a modeling perspective and are satisfied by almost all models currently used in the engineering literature. For example the Poisson process, many cluster point processes, and \emph{all} determinantal point process are $\mathbf{B}^+_2$-mixing, when used in their stationary variants. In any of these models we can introduce a hard core distance in a number of ways without destroying the mixing and stationarity properties; see the more detailed discussion after~\eqref{eq:defbgamma}.

To state a convergence implication of our main results, 
denote convergence in distribution of point processes
by $\inlawto$ (technically this is
weak convergence of point process distributions with respect to the vague topology on the space $\mfn$ of locally finite counting measures on $\IR$), and
note that the convergences $N \inlawto \Upsilon$ below imply convergence in distribution of any continuous statistical function of the point processes as well as joint convergence in distribution of vectors $\bigl( N((s_1,t_1]), \ldots, N((s_n,t_n]) \bigr)$ for any $n \in \NN$ and any $s_i,t_i \in \pR$ with $s_i < t_i$ for $1 \leq i \leq n$.

\begin{Theorem}\label{thm:mainintro}
Assume Setup~\ref{setup} and let $\Upsilon$ 
be a Poisson process on $(0,\infty)$ with mean measure
given by the function $L(t)= \kappa \pi t^{2/\beta}/K^2$.
\begin{enumerate}
\item[(i)]
Suppose that the deterministic transmitter configuration $\xi$ satisfies~\eq{eq:asymhom}
and is such that $\min_{x\in\xi} \norm{x}>0$
and $\inf_{x,y\in\xi} \norm{x-y}>0$.

If $\tilde\varrho(r)=\textrm{O}(r^{-(1+a)})$ for some $a>0$, then 
as $\sigma\to\infty$,
 \be{
N\t{\xi, \sigma} \inlawto \Upsilon.
}
 \item[(ii)] Suppose that the random transmitter configuration $\Xi$ is a
second-order stationary hard core process on $\IR^2$ with intensity~$\kappa$
that satisfies the $\mathbf{B}^+_2$-mixing condition.

If $\tilde\varrho(r)=\textrm{O}(r^{-(1+a)})$ for some $a>0$, then 
as $\sigma\to\infty$,
 \be{
N\t{\Xi, \sigma} \inlawto \Upsilon,
 }

\item[(iii)]
Suppose that the random transmitter configuration $\Xi$
is a homogeneous Poisson point process with intensity $\kappa$. 
If for some $c>0$, 
we have that the constant $\delta$ of {\rm P1} satisfies $\delta=\delta(\eps)=\Omega(\eps^c)$ as $\eps\to0$, and for some $a>0$ we have that  $\tilde\varrho(r)=\textrm{O}(e^{-ar})$ as $r\to\infty$,
 then as $\sigma\to\infty$,
 \be{
N\t{\Xi, \sigma} \inlawto \Upsilon.
 }
\end{enumerate}
\end{Theorem}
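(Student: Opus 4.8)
The plan is to derive all three parts as corollaries of the quantitative estimates in Theorems~\ref{thm:mainproc} (deterministic configurations) and~\ref{thm:randnumpd2} (random configurations), by showing that the explicit error bounds there vanish as $\sigma\to\infty$. First I would reduce to bounded windows: since the limiting intensity $L(t)=\kappa\pi t^{2/\beta}/K^2$ is finite for every $t>0$, it suffices to show $N\res{(0,t]}\inlawto\Upsilon\res{(0,t]}$ as (a.s.\ finite) point processes for each fixed $t>0$, because a compactly supported test function on $(0,\infty)$ has support inside some interval $(0,t]$; and this is exactly the form in which the quantitative theorems operate, bounding the distance between $N\res{(0,t]}$ and the Poisson process with mean measure $M\res{(0,t]}$ (in total variation for a deterministic $\xi$; in a Wasserstein-type point-process metric for a random $\Xi$, where the number of points in $(0,t]$ is itself random). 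Fix $t>0$. By Theorem~\ref{thm:meanlim}, $M\t{\xi,\sigma}(s)\to L(s)$ for each $s$ (and $M\t{\Xi,\sigma}\equiv L$ in the homogeneous random cases), and since $L$ is continuous this gives weak convergence of mean measures and hence convergence of the Poisson process with mean measure $M\res{(0,t]}$ to $\Upsilon\res{(0,t]}$; by the triangle inequality it therefore remains to show the error bound tends to $0$.

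In the deterministic case (i) the error bound splits into a ``local'' part and a ``dependency'' part. The local part, already present for independent shadowing in~\cite{Blaszczyszyn2013}, is controlled by $\sum_i p_i(t)^2 \leq \bigl(\sup_i p_i(t)\bigr)\, M\t{\xi,\sigma}(t)$; writing $p_i(t)=\bar\Phi\bigl(\sigma/\beta + \log(g(x_i)/t)/\sigma\bigr)$, which is non-increasing in $\norm{x_i}$, the hypothesis $\min_{x\in\xi}\norm{x}>0$ forces $\sup_i p_i(t)\to 0$ as $\sigma\to\infty$ (even faster than any power of $\sigma$), while $M\t{\xi,\sigma}(t)\to L(t)$, so the local part vanishes. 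The dependency part is handled by the method indicated in Setup~\ref{setup}: comparing each $Z_{x_i}$ with a version conditionally independent of $\{Z_{x_j}\}_{j\neq i}$ costs, by Lemma~\ref{lem:removingnp} and the conditional-cumulant estimate~\eqref{eq:conditionalcumulants}, an amount expressed through the correlation vector $(\rho(x_i,x_j))_{j\neq i}$ and the operator norm of an inverse covariance matrix, which P1 bounds by $1/\delta$. This reduces the dependency part, up to factors at most polynomial in $\sigma$, to a localized pair-sum of the shape $\tfrac{1}{\delta}\sum_{i\neq j} a_i a_j\,\tilde\varrho^2(\norm{x_i-x_j}) \leq \tfrac{1}{\delta}\bigl(\sup_i\!\sum_j\tilde\varrho^2(\norm{x_i-x_j})\bigr)\bigl(\sup_i a_i\bigr)\sum_i a_i$, with weights $a_i$ comparable to $p_i(t)$. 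Here Property~P2, together with the uniform bound $\textrm{O}(r)$ on the number of transmitters in an annulus of inner radius $r$ and unit width (a consequence of~\eqref{eq:asymhom} and $\inf_{x,y\in\xi}\norm{x-y}>0$), dominates $\sum_j\tilde\varrho^2(\norm{x_i-x_j})$ uniformly in $i$ by a finite integral $\int_1^\infty r\,\tilde\varrho^2\bigl(R+\sqrt{3}R(r-1)\bigr)\,dr<\infty$ (using $\tilde\varrho(r)=\textrm{O}(r^{-(1+a)})$ with $a>0$); since moreover the positive minimal distance $\eps_*$ of the configuration gives $\delta\geq\delta(\eps_*)>0$, and $\sum_i a_i$ stays bounded while $\sup_i a_i\to 0$, this part vanishes too.

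For the random cases (ii) and (iii) one conditions on $\Xi$, applies the pathwise bound, and takes expectations --- the point-process-metric version of Theorem~\ref{thm:randnumpd2} being tailored to the random, possibly infinite, index set. The only new object is the expected pair-sum, which equals $\int\!\!\int(\cdots)\,\tilde\varrho^2(\norm{x-y})\,\alpha^{(2)}(dx,dy)$ for the second factorial moment measure $\alpha^{(2)}$ of $\Xi$. For a second-order stationary, $\mathbf{B}^+_2$-mixing hard core process, $\alpha^{(2)}$ is $\kappa^2\,dx\,dy$ plus the reduced covariance measure of~\eqref{eq:defbgamma}, whose finiteness controls this integral, and the hard core distance $\eps_*$ again provides $\delta\geq\delta(\eps_*)>0$; this yields (ii). For a homogeneous Poisson process, $\alpha^{(2)}(dx,dy)=\kappa^2\,dx\,dy$ by Slivnyak's theorem, so the pair-sum factorizes; the complication specific to (iii) is that Poisson points may be arbitrarily close, so the u.p.d.\ constant of the realized configuration degenerates. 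I would handle this by splitting on the event that the minimal interpoint distance is at least $\eps$, with a threshold $\eps=\eps(\sigma)\to 0$ chosen to decay more slowly than the good-event error (which carries only the harmless factor $1/\delta(\eps)=\textrm{O}(\eps^{-c})$ from $\delta(\eps)=\Omega(\eps^c)$, being otherwise exponentially small in $\sigma^2$), while the complementary event has probability $\textrm{O}(\eps^2)$; the exponential decay $\tilde\varrho(r)=\textrm{O}(e^{-ar})$, rather than merely polynomial, is what keeps the integrals finite under this coarser control and lets the two contributions balance.

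I expect this last balancing in the Poisson case --- reconciling the blow-up of $1/\delta$ as close pairs of transmitters appear with the need to keep the correlation part of the bound small --- to be the main obstacle; parts (i) and (ii) then follow fairly mechanically from the decay of $\tilde\varrho$, Property~P2, Theorem~\ref{thm:meanlim}, and the strong-shadowing smallness of $\sup_i p_i(t)$.
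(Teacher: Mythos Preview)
Your overall plan is correct and matches the paper's: reduce to finite windows $(0,t]$, invoke the quantitative Theorems~\ref{thm:mainproc} and~\ref{thm:randnumpd2}, show their bounds vanish as $\sigma\to\infty$ (the paper packages this as Theorems~\ref{thm:fixprocconv} and~\ref{thm:randprocconv}), and combine with Theorem~\ref{thm:meanlim} and Lemma~\ref{lem:ppp} to identify the limit. One minor correction: the paper works in $\dtwo$, not total variation, even in the deterministic case; under merely $\tilde\varrho(r)=\textrm{O}(r^{-(1+a)})$ with $a>0$ the $\dtv$ bound of Theorem~\ref{thm:mainproc} need not vanish (see the extra hypothesis in Theorem~\ref{thm:fixprocconv}(2)).

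There is, however, a genuine gap in your treatment of~(ii): you have misplaced the role of $\mathbf{B}^+_2$-mixing. You propose to use it to control the expected pair-sum $\iint(\cdots)\tilde\varrho^2(\norm{x-y})\,\alpha^{(2)}(dx,dy)$, but in the paper those terms are handled by the hard-core property alone---the minimal distance $\eps_*$ yields a deterministic bound $T^*$ on the number of points in any $R$-ball, so the pair terms are bounded pathwise before taking expectations. Where $\mathbf{B}^+_2$-mixing actually enters is a step you omit: in the random-$\Xi$ cases one compares $N\vert_t$ first to the \emph{Cox process} $\Pi^\Xi\vert_t$ with conditional mean $M^\Xi$, and must then show $\dtwo(\law(\Pi^\Xi\vert_t),\law(\Pi\vert_t))\to 0$. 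By Lemma~\ref{lem:ppp} this reduces to $\mean|M^\Xi(s)-M(s)|\to 0$, hence to $\var(M^\Xi(s))\to 0$, and it is precisely in bounding this variance via~\eqref{eq:varM} that the reduced covariance measure $\breve\gamma_{[2]}$ and its finite positive variation appear. Without this Cox-to-Poisson step the argument does not close; the same step is needed in~(iii), where $\var(M^\Xi(s))=\int p^{(x)}(s)^2\,\lambda(dx)$ is handled directly. Your balancing description for~(iii) is otherwise right in spirit; the paper splits on several good events (minimal inter-point distance, distance to origin, maximal $R$-ball count) with thresholds chosen as explicit functions of $\sigma$.
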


The theorem says that if~$\sigma$ is large, the correlation between
shadow-fading variables is only moderate, the correlation function
is u.p.d., and the transmitter placements have reasonable properties, then the spectrum of signal strengths will look Poisson. In fact
our quantitative approximation results in Section~\ref{sec:approxproc}
show that it is fine to approximate $N$ by a Poisson process with mean measure given
by $M$ (or more realistically, an empirically observed measure) rather than the limiting $L$ of Theorem~\ref{thm:meanlim} even in the deterministic case.

Our approximation theorems below
imply that the rates of convergence in $\sigma$ for
the statements of Theorem~\ref{thm:mainintro}
are polynomial, and in certain cases exponential.
The latter occurs for example in the hard core process case if the decay of the correlation function is exponential. 
Numerically, our error bounds are quite conservative
due to their generality, 
but they are important for the rates of convergence and because  
they indicate parameter relationships where the signal spectrum
will be close to Poisson.

To comment on the three different situations of the theorem, it is not clear what is the most appropriate model for transmitter locations \cite{Chiaraviglio2016} \cite{Lee2013} \cite{Zhou2015}. Placing transmitters according to a Poisson process is tractable and can be heuristically justified by thinking of the user positioned ``randomly" in a large network, but in fact this heuristic is better captured by Item~$(ii)$ which seems like the most useful model for thinking about 
robust network architecture. 
However, we state all three versions to show that the result and our methods are robust.

\subsection{Method of proof}
 
The convergence result follows from first deriving approximation bounds in a certain metric $\dtwo$ between point processes restricted to $[0,t]$, see Theorems~\ref{thm:mainproc}
and~\ref{thm:randnumpd2},
and then use these to show convergence of the restricted point process in $\dtwo$,
Theorems~\ref{thm:fixprocconv} and~\ref{thm:randprocconv}, and hence also in distribution. This then implies convergence in distribution of the unrestricted point processes.

To show the approximation bounds, our main tool is Stein's method for distributional approximation; introductions
to Stein's method from different perspectives are found in
\cite{Barbour1992} \cite{Chen2011} \cite{BarbourChen2005} \cite{Ross2011}.
Our needs do not quite fit into existing results for Poisson process approximation,
so we provide a new general approximation result using Stein's method, Theorem~\ref{thmppp}.

\section{Simulation results for marginal counts}\label{sec:simmarg}
Our simulations were run in {\tt R} \cite{R} (using the packages
listed  in the acknowledgments
below). We have strived to use parameters that are reasonably realistic in a mobile phone setting without fixating too much on a specific technology or scenario.

Our transmitters were placed in two configurations: at the
center of the cells of a hexagonal grid and according to 
a homogeneous Poisson process (resampled for each run of the simulation),
both on a disc with radius~\SI{30}{\km} and having average density $\kappa=\SI{5}{\per\square\km}$. In view of the data analyzed in \cite{Lee2013} and \cite{Zhou2015}, noting that the former counted base stations regardless of network operator and technology used, this choice is near the upper limit. For the deterministic propagation loss function we chose $\beta = 3.6$ and $K = \SI{4000}{\per\km}$, which is very similar to the parameters used in \cite{Blaszczyszyn2013}, Subsection B5; for the choice of $\beta$ see also Section~2.6 in \cite{Goldsmith2005}. A crucial quantity for the quality of approximation is the logarithmic standard deviation of shadowing $\sigma$. In applications this quantity is customarily expressed as $10 \sd(\log_{10}(S))$\! \si{\dB} and referred to as $\sigma_{\si{dB}}$, such that $\sigma = \sigma_{\si{dB}} \cdot \log(10)/10$. Typical values for $\sigma_{\si{dB}}$ from empirical studies range from 4 to \SI{13}{\dB}; see Section~2.7 in \cite{Goldsmith2005}. In \cite{Blaszczyszyn2013} values from 8 to \SI{12}{\dB} are considered. We use \SI{10}{\dB}, which translates to $\sigma = \log(10) \approx 2.30$. We choose for $\rho$ the exponential correlation function, i.e., $\rho(x,y)=\exp\{-\norm{x-y}/s\}$, which has been often used in the literature before. The scale parameter $s$ is known as \emph{decorrelation distance} in this context and often takes values between 50 and \SI{250}{\m}, corresponding to the order of magnitude of typical obstacles; see Subsection~VI.B in \cite{Szyszkowicz2010} and the references given there. We chose $s=0.1, 0.2, \SI{0.5}{\km}$, the last value to also include a setting where correlation is stronger than might be usual.

In the rest of the paper we implicitly normalize signal strengths to multiples of a constant transmitter power $P$, whereas the \emph{absolute} received signal strengths from a transmitter at location $x$ would be $(S_x/g(x)) P$. In view of the rather high transmitter density, we choose $P = \SI{40}{mW}$, which is a realistic power for a small cell, but too low for a macrocell.
However, with this choice we obtain realistic strongest signal strengths in the range of mostly \SI{-110} to \SI{-60}{dBm} (i.e., $10^{-11}$ to $10^{-6}$\hbit\si{mW}) at the receiver. Note that other choices of $P$ or $K$ will just result in scaling the received powers, but not in any other way alter the outcomes.

For the simulation study we generate $10^4$ realizations of the signal spectrum and compute the count statistics for the number of signals (reciprocals of the powers in \si{mW}) falling in the interval $[0,t]$, where $t=10^6, 10^7, 10^8$. This corresponds to counting signals with power at least $\SI{-60}{dBm}, \SI{-70}{dBm}, \SI{-80}{dBm}$, respectively, at the receiver.

The outcome is summarized in Table~\ref{table:PPPHexsim}.
The first row of the table 
records the theoretical numerically computed mean values~$M(t)$ 
for transmitters placed according to a Poisson process, while the second and third row
are the empirical mean and variance of $N(t)$ for this case. 
The fourth and fifth row of values are the empirical mean and variance for the hexagonal
transmitter arrangement. Figures~\ref{fig:CDFPPP} and~\ref{fig:CDFHex} show the 
empirical CDF of $N(t)$ for Poisson process and hexagonal transmitter placement, respectively, 
plotted against the appropriate Poisson CDF for $s=0.1,0.2,0.5$ (P-P plot).

\begin{table*}[h!]
\centering
\captionsetup{font=small}
{\scriptsize
\renewcommand{\arraystretch}{1.2}
\begin{tabular}{lrrrcrrrcrrr}\toprule
& \multicolumn{3}{c}{$s = 0.1$} & & \multicolumn{3}{c}{$s = 0.2$} & & \multicolumn{3}{c}{$s = 0.5$}\\
\cmidrule{2-4} \cmidrule{6-8} \cmidrule{10-12}
& $t=10^6$ & $t=10^7$ & $t=10^8$ && $t=10^6$ & $t=10^7$ & $t=10^8$ && $t=10^6$ & $t=10^7$ & $t=10^8$\\ \midrule
\underline{Poiss}\\
mean & 0.35 & 1.27 & 4.56 &&  0.35 & 1.27 & 4.56 && 0.35 & 1.27 & 4.56 \\
sim mean & 0.35 & 1.28 & 4.52 && 0.35 & 1.27 & 4.58 && 0.35 & 1.27 & 4.56\\
sim var & 0.38 & 1.37 &  4.87 && 0.41 &  1.60 &  6.20 && 0.49 & 2.33 & 10.62 \\
\underline{Hex}\\
sim mean & 0.35 & 1.25 & 4.53 && 0.35 & 1.24 & 4.48 && 0.35 &  1.25 & 4.52 \\
sim var & 0.27 & 0.83 &  2.84 && 0.28 &  0.93 &  3.50 && 0.33 & 1.50 & 7.53 \\
\bottomrule\\
\end{tabular}
}
\caption{Simulation results for the signal spectrum with
$K=\SI{4000}{\per\km}$, $\beta=3.6$, $\sigma=\log(10)$, $\rho(x,y)=\exp\{-\norm{x-y}/s\}$. Transmitters are placed on 
a disc of radius $\SI{30}{\km}$ according to a Poisson process and at the centers 
of the cells of a hexagonal tiling, both with average density $\SI{5}{\per\square\km}$ (see text for discussion of the realism of these parameters).
}
\label{table:PPPHexsim}
\end{table*}

In Figure~\ref{fig:CDFPPP} we notice, as expected, that with increasing scale $s$ of the correlation the approximation becomes worse. Note that for $s=0$ (no correlation) it would be clear by the Poisson process transformation theorem that the signal counts follow the exact Poisson distribution (and correspondingly also satisfy mean$\,=\,$variance with regard to Table~\ref{table:PPPHexsim}); see Lemma~1 in \cite{Blaszczyszyn2013}. For $s=0.1$ the approximation is still very close to exact. With increasing $s$ it deviates towards a more and more overdispersed distribution, as can be seen from the ``flipped S''-shape in the plots and from the high variances in Table~\ref{table:PPPHexsim}. Having an overdispersed distribution corresponds well to the intuition that under strongly positively correlated shadowing, we get clusters of signals in our interval $[0,t]$ and thus higher probabilities for both very low and very high counts. In such situations the signal spectrum for the given $\sigma$ may be better approximated by a Poisson cluster process or a more general Cox process; see \cite{Keeler2014}.

Turning now to Figure~\ref{fig:CDFHex}, we notice that for smaller $s$ the signal counts are underdispersed as seen from the ``proper S''-shape in the plots. Only for $s = 0.5$ we are in the situation of overdispersed signal counts again. The same is reflected in the relations of means and variances in Table~\ref{table:PPPHexsim}. The underdispersion of signals at smaller $s$ can be attributed to the regularity of the hexagonal grid and the fact that in view of the limit theorem the $\sigma$ considered is too small. This is partly compensated by the correlated shadowing, and as $s$ becomes larger, the shadowing effect takes over.
What is remarkable is that for correlated shadowing, there is an intermediate range of parameter settings (that are more or less realistic for a mobile network context) in which the Poisson approximation for the signal spectrum in the hexagonal configuration is actually more accurate than in the case of transmitters distributed according to a Poisson process.

We see from the above considerations that the difference between the mean and the variance read from~Table~\ref{table:PPPHexsim} is a useful proxy for the quality of Poisson approximation. Further analysis (not presented here) shows as expected that approximation gets better as $\sigma$ increases, but worse as $t$ increases beyond the values given in the table; see also \cite{Keeler2016}. 

\begin{figure}[h!]
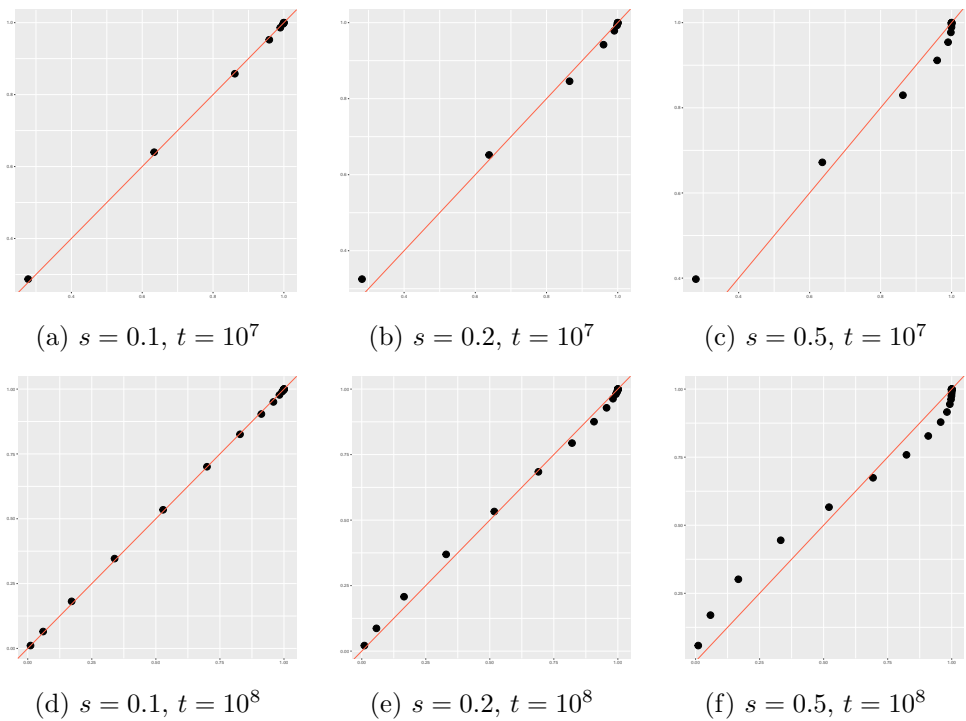

    \centering
    \captionsetup{font=small}
    \captionsetup[subfigure]{font=footnotesize,belowskip=5pt}
    \begin{subfigure}[h]{0.25\textwidth}
	\centering
	\includegraphics[width=\textwidth]{{{PPP_sigmalog10_beta3.6_density5_R30_s1tenth_K4000_P104_t107}}}
	\caption{$s=0.1$, $t=10^7$}    
	\end{subfigure}
    \ \ 
    \begin{subfigure}[h]{0.25\textwidth}
	\centering
	\includegraphics[width=\textwidth]{{{PPP_sigmalog10_beta3.6_density5_R30_s2tenth_K4000_P104_t107}}}
	\caption{$s=0.2$, $t=10^7$}
    \end{subfigure}
    \ \
    \begin{subfigure}[h]{0.25\textwidth}
	\centering
	\includegraphics[width=\textwidth]{{{PPP_sigmalog10_beta3.6_density5_R30_s5tenth_K4000_P104_t107}}}
	\caption{$s=0.5$, $t=10^7$}
    \end{subfigure}
     \\ 
     \mbox{}
          \\
      \begin{subfigure}[h]{0.25\textwidth}
	\centering
	\includegraphics[width=\textwidth]{{{PPP_sigmalog10_beta3.6_density5_R30_s1tenth_K4000_P104_t108}}}
\caption{$s=0.1$, $t=10^8$}
    \end{subfigure}
    \ \ 
    \begin{subfigure}[h]{0.25\textwidth}
	\centering
	\includegraphics[width=\textwidth]{{{PPP_sigmalog10_beta3.6_density5_R30_s2tenth_K4000_P104_t108}}}
	\caption{$s=0.2$, $t=10^8$}
    \end{subfigure}
    \ \
 \begin{subfigure}[h]{0.25\textwidth}
	\centering
	\includegraphics[width=\textwidth]{{{PPP_sigmalog10_beta3.6_density5_R30_s5tenth_K4000_P104_t108}}}
	\caption{$s=0.5$, $t=10^8$}
    \end{subfigure}    
    \caption{
    For transmitters placed  
    according to a Poisson process with indicated parameters,
    the black points have vertical coordinate the integer
    evaluations of the empirical CDF of the spectrum
    and horizontal coordinate the relevant Poisson CDF.
     }
    \label{fig:CDFPPP}
\end{figure}

\begin{figure}[h!]
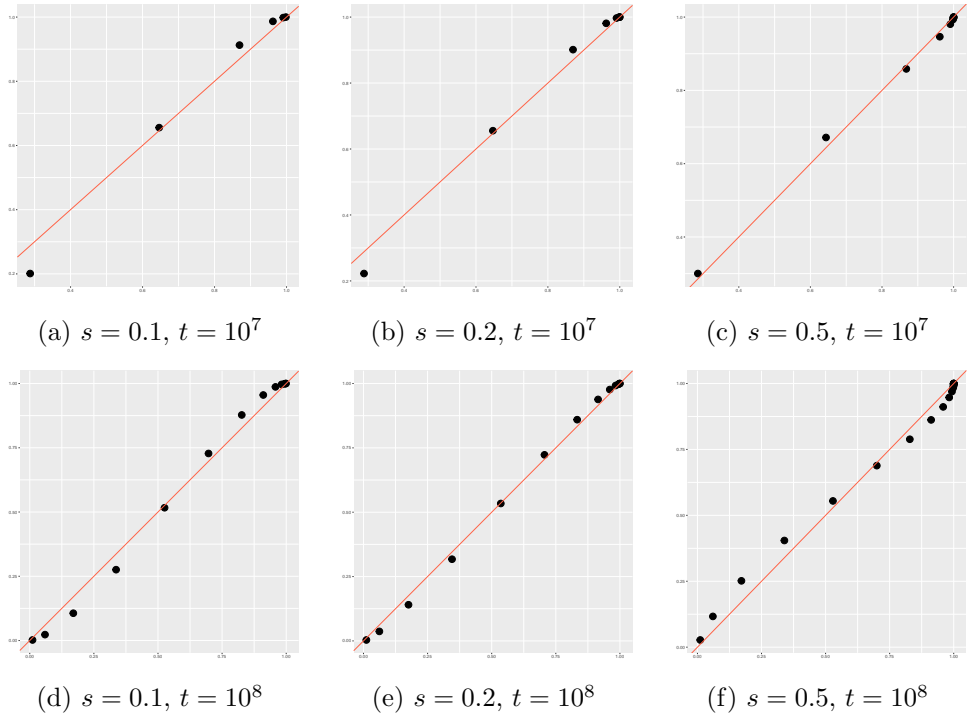

    \centering
    \captionsetup{font=small}
    \captionsetup[subfigure]{font=footnotesize,belowskip=5pt}
    \begin{subfigure}[h]{0.25\textwidth}
	\centering
	\includegraphics[width=\textwidth]{{{Hexp_sigmalog10_beta3.6_density5_R30_s1tenth_K4000_P104_t107}}}
	\caption{$s=0.1$, $t=10^7$}    
	\end{subfigure}
    \ \ 
    \begin{subfigure}[h]{0.25\textwidth}
	\centering
	\includegraphics[width=\textwidth]{{{Hexp_sigmalog10_beta3.6_density5_R30_s2tenth_K4000_P104_t107}}}
	\caption{$s=0.2$, $t=10^7$}
    \end{subfigure}
    \ \
    \begin{subfigure}[h]{0.25\textwidth}
	\centering
	\includegraphics[width=\textwidth]{{{Hexp_sigmalog10_beta3.6_density5_R30_s5tenth_K4000_P104_t107}}}
	\caption{$s=0.5$, $t=10^7$}
    \end{subfigure}
     \\ 
     \mbox{}
          \\
      \begin{subfigure}[h]{0.25\textwidth}
	\centering
	\includegraphics[width=\textwidth]{{{Hexp_sigmalog10_beta3.6_density5_R30_s1tenth_K4000_P104_t108}}}
\caption{$s=0.1$, $t=10^8$}
    \end{subfigure}
    \ \ 
    \begin{subfigure}[h]{0.25\textwidth}
	\centering
	\includegraphics[width=\textwidth]{{{Hexp_sigmalog10_beta3.6_density5_R30_s2tenth_K4000_P104_t108}}}
	\caption{$s=0.2$, $t=10^8$}
    \end{subfigure}
    \ \
 \begin{subfigure}[h]{0.25\textwidth}
	\centering
	\includegraphics[width=\textwidth]{{{Hexp_sigmalog10_beta3.6_density5_R30_s5tenth_K4000_P104_t108}}}
	\caption{$s=0.5$, $t=10^8$}
    \end{subfigure}    
    \caption{
    For transmitters placed  
    according to a hexagonal configuration with indicated parameters,
    the black points have vertical coordinate the integer
    evaluations of the empirical CDF of the spectrum
    and horizontal coordinate the relevant Poisson CDF.
     }
    \label{fig:CDFHex}
\end{figure}

\section{Discussion and future work}\label{sec:futurework}

We have shown that the spectrum of signal strengths in
a wireless network with moderately correlated strong shadowing 
is well approximated by a Poisson point process, generalizing 
the same result for the uncorrelated case. This holds true under 
a wide choice of probability distributions for the transmitter point process, in particular for virtually any second-order stationary distribution that includes a hard core, i.e.\ holds the transmitters at a minimal distance from each other.

Note that cellular networks would usually be designed to hold transmitters serving overlapping areas a certain distance apart. Alternatively such transmitters would operate at different frequencies to keep interference at a tolerable level, with frequencies being reused many times at transmitters that are somewhat farther apart; see \cite{Goldsmith2005}, Chapter~15. This means that even in networks where transmitters are strongly clustered, our theorem should still be valid if applied to the strengths of signals received at a certain frequency.

We have provided a first simulation study for our result using appropriate model parameters for a mobile phone setting. This study indicates that the Poisson process limit can become relevant for realistic correlation settings if $\sigma$ is reasonably high ($\sigma_{\si{dB}} = \SI{10}{dB}$ or above). It is an important avenue for future research to determine the applicability of our result more closely using data available from given networks, and to investigate its robustness across different scenarios. 

The importance of our result comes from the fact that in networks where it is applicable, we can essentially treat the signal spectrum as an inhomogeneous Poisson process, which is a point process that is well understood and relatively easy to handle. In particular, it is completely determined by its mean measure, which then can be estimated from the empirical spectrum in a number of ways. Also it has a number of important derived quantities that are mathematically tractable, such as coverage probabilities; see e.g.~\cite{Blaszczyszyn2015b} and the references in its introduction.

Some important questions related to our work that we leave open for 
future study are 
\begin{enumerate}
\item We expect our results to extend beyond the lognormal shadow-fading setting to 
shadow variables that are other functions of the value of a 
Gaussian field at the transmitter location in a similar way as 
\cite{Keeler2014} generalizes \cite{Blaszczyszyn2013} in the independent case.
\item We have not taken into account fast-fading
effects, which are frequently modelled by multiplying the shadow variable by 
another non-negative random variable, independent between transmitters. Apart from the fact that additional computations are necessary to adapt our proofs to the new distributions of shadowing and fading combined, we expect that the fast-fading effects would only help the Poisson convergence by adding a discontinuity in the correlation function at zero.  
\item To confidently apply our results in practice, it would be good
to develop simple statistical 
tests to determine when the hypotheses of strong shadowing under 
moderate correlation are satisfied; in particular to conclude when it is appropriate to assume the signal spectrum is approximately Poisson. Note that the analogous question
assuming independent shadowing is not
even well-addressed.
\item Many times we are interested in functions of the signal
spectrum such as the signal-to-interference ratio. Understanding
such statistics via our results would be useful.
\item Of the many empirical questions left untouched by our study,
it would be good to better understand which correlation functions and transmitter placements, 
or which properties of such
are most appropriate for modelling purposes.
\end{enumerate}

\section{Approximation results}\label{sec:approxproc}

In this section we bound distances between the distribution $\law(N \vert_{[0,t]})$ of our signal spectrum restricted to $[0,t]$ and a suitable Poisson process distribution in all of the three transmitter settings considered in Theorem~\ref{thm:mainintro}, i.e., deterministic, Poisson or hard core. Note that the results obtained here are much stronger than statements of convergence since they give concrete rates of 
convergence in terms of $\sigma$ and the other model parameters. The results also provide computable upper bounds, which, due to the generality and the technicality of our setting,
are typically too conservative to be of practical use.

The main metric we use for measuring distances between point process distributions is the Wasserstein metric $\dtwo$ with respect to the optimal subpattern assignment (OSPA) metric $\done$ between point configurations. We denote by $\mfn$ the space of finite point configurations $\psi = \{s_1,\ldots,s_n\} = \sum_{i=1}^n \delta_{s_i}$ (by the identification at the beginning of Subsection~\ref{ssec:model}) on $[0,t] \subset \IR$. (Note that all considerations about $\done$ and $\dtwo$ below remain valid for a compact subset of $\IR^d$ for arbitrary $d \in \NN$.)

The OSPA distance $\done(\psi,\upsilon)$ between point configurations $\psi, \upsilon \in \mfn$ can be roughly defined as the average Euclidean distance truncated at $1$ between points in an optimal pairing, where unpaired points (in the larger point configuration) count as paired at distance $1$. See~\cite{SVV08} for a precise definition (Equation~(3), where $p=c=1$) and a great deal of additional information. Since its introduction the OSPA metric has been abundantly used in signal processing and sometimes other disciplines of engineering, and has together with certain modifications become a standard evaluation tool for multi-object filtering and tracking algorithms; see \cite{Ristic2013}.

We can define the Wasserstein metric between distributions of point processes $\Psi$ and $\Upsilon$ on $[0,t]$ by coupling the two point processes in such a way that their expected $\done$-distance is as small as possible, more precisely
\be{
\dtwo \bigl( \law(\Psi),\law(\Upsilon) \bigr) = \min_{\Psi',\Upsilon'} \, \mean \hbit \done(\Psi',\Upsilon'),
}
where the minimum is taken over all pairs of point processes $\Psi'$, $\Upsilon'$ that have the same individual distributions as $\Psi$ and $\Upsilon$, respectively. This metric was studied in detail in~\cite{SX08}. Proposition~2.3(iii) in that article shows that $\dtwo$ describes the right topology for probability distributions of point processes in the sense that point processes $\Psi$, $\Psi_1, \Psi_2, \ldots$ satisfy 
\ben{
  \Psi_n \inlawto \Psi \qquad \text{if and only if} \qquad \dtwo(\law(\Psi_n), \law(\Psi)) \to 0
}  
as $n \to \infty$. Also in that article, Proposition~2.3(i) shows the equivalence of the following ``dual'' definition:
\ben{
  \dtwo \bigl( \law(\Psi), \law(\Upsilon)\bigr) = \sup_{f \in \mcf} \hbit \bigl| \E f(\Psi) - \E f(\Upsilon) \bigr|,  \label{eq:dtwoassup}
}
where 
\be{
  \mcf = \fw = \bigl\{ \tilde{f} \colon \mfn \to \IR \hbit;\, | \tilde{f}(\psi)-\tilde{f}(\upsilon) | \leq \done(\psi,\upsilon) \text{ for all $\psi, \upsilon \in \mfn$} \bigr\}
}
is the set of $1$-Lipschitz-continuous functions with respect to $\done$.

From this dual form one can see that upper bounds on the $\dtwo$-distance are upper bounds on $\bigl| \E f(\Psi) - \E f(\Upsilon) \bigr|$ for many useful point process statistics $f$, such as the average nearest neighbour distance between points and many $U$-statistics; see Section~3 in~\cite{SX08}. However, $f$ must be Lipschitz continuous and thus we do not directly get upper bounds for terms of the form
\be{ 
  \bigl| \pr(\Psi \in A) - \pr(\Upsilon \in A) \bigr|
} 
for general (measurable) sets $A \subset \mfn$. Therefore we present below also results in the total variation metric $\dtv$ given by
\be{
  \dtv(\law(\Psi),\law(\Upsilon)) = \sup_{A \in \mcn} \bigl| \pr(\Psi \in A) - \pr(\Upsilon \in A) \bigr| = \min_{\Psi',\Upsilon'} \, \pr(\Psi' \neq \Upsilon'),
}
where conveniently possible. The supremum above is taken over all measurable subsets of~$\mfn$ (technically, with respect to the Borel $\sigma$-algebra $\mcn$ generated by the vague topology on~$\mfn$). See~\cite[Appendix~A.1]{Barbour1992} for the second equality above and further results.

Since $\dtv$ can also be expressed as a supremum of the form \eqref{eq:dtwoassup} over the class $\mcf$ of \emph{all} measurable $[0,1]$-valued functions, which by $\done \leq 1$ contains in particular the class $\mcf_{W}$, we  obtain that $\dtwo \leq \dtv$. On the other hand, in general $\Psi_n \inlawto \Psi$ does \emph{not} imply $\dtv(\law(\Psi_n), \law(\Psi)) \to 0$, so $\dtv$ is a strictly stronger metric. Taking for $\Psi_n$ one deterministic point at $1/n$ and for $\Psi$ one deterministic point at $0$, we obtain $\dtv(\law(\Psi_n), \law(\Psi)) = 1 \not\to 0$ and see that the total variation metric is in some cases too strong to be useful.

We use the following general theorem about Poisson process approximation
which uses the same ideas as results already proved in the literature (e.g.,\cite[Section~10.2]{Barbour1992} or \cite[Theorem~2.1]{Schuhmacher2005a}), but is geared towards our application. A proof can be found
in Appendix~\ref{sec:proofppp}. Denote by $\| \cdot \|_{\mathrm{TV}}$ the total variation norm on the space of finite signed measures and $\pop(\Lambda)$ the law
of a Poisson process with mean measure $\Lambda$. For a measure $\lambda$ on $\R$, we use $\lambda\vert_t$ as short hand notation for $\lambda\vert_{[0,t]}$ and continue to use $\lambda(t)$ for $\lambda([0,t])$.
\begin{Theorem}\label{thmppp}
Let $\mcI$ be a finite index set, and let $\lambda_i$ be a probability distribution on $\Rplus$ for each $i \in \mcI$. Suppose that $Y_i \sim \lambda_i$ are real valued random variables and set $\Psi = \sum_{i \in \mcI} \delta_{Y_i}$ and $\lambda = \sum_{i \in \mcI} \lambda_i$. For every $i \in \mcI$, choose $A_i \subset \mcI$ with $i \in A_i$. Then, for any $t > 0$, setting $I_i = \I[Y_i \leq t]$, $p_i = \mean I_i = \lambda_i(t)$, $p_{ij} = \mean(I_i I_j)$,
\ban{
  \dtwo \bigl(&\law(\Psi\vert_{t}), \pop(\lambda\vert_{t})\bigr) \notag\\
&\leq \min \biggl( 1,\frac{1+2\log^{+}(\lambda(t))}{\lambda(t)} \biggr) \Biggl[\sum_{i\in\mcI, j\in A_i} p_i p_j +
\mathop{\sum_{i\in\mcI, j\in A_i}}_{i \neq j} p_{i j} 
+ \sum_{i\in \mcI} \mean \left|\mean \left[ I_i | \mcf_i \right] - p_i\right| \label{eq:pppdtwot1}\\
&\hspace*{54mm} {} +\sum_{i\in \mcI} \int_0^t \E \bigl| \pr(Y_i \leq s \mvert \mcf_i) - \pr(Y_i \leq s) \bigr| \; ds \Biggr],\label{eq:pppdtwo}
}
and
\ban{
  \dtv \bigl(&\law(\Psi\vert_{t}), \pop(\lambda\vert_{t})\bigr) \notag\\
&\leq \Biggl[\sum_{i\in\mcI, j\in A_i} p_i p_j +
\mathop{\sum_{i\in\mcI, j\in A_i}}_{i \neq j} p_{i j} \Biggr] + \sum_{i\in \mcI} \E \Bigl\| \law(Y_i \mvert \mcf_i)\big\vert_{t} - \law(Y_i)\big\vert_{t} \Bigr\|_{\mathrm{TV}}. \label{eq:pppdtv}
}
where $\mcF_i$ is any $\sigma$-field containing $\sigma(I_j, Y_j I_j: j\not\in A_i)$, in particular any $\sigma$-field containing $\sigma(Y_j: j\not\in A_i)$.
\end{Theorem}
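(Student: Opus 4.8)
The plan is to run the generator (Stein--Chen) approach to Poisson process approximation, adapting the arguments of \cite[Section~10.2]{Barbour1992} and \cite[Theorem~2.1]{Schuhmacher2005a} to our inhomogeneous, spatial, locally dependent setting. Fix $t>0$, let $\Pi\sim\pop(\lambda\vert_t)$, and let $\mathcal{A}$ be the generator of the spatial immigration--death process on $[0,t]$ with immigration intensity $\lambda\vert_t$ and unit per-capita death rate, which has $\pop(\lambda\vert_t)$ as its unique equilibrium and acts on bounded measurable $h\colon\mfn\to\IR$ by
\be{
  \mathcal{A}h(\psi)=\int_{[0,t]}\bigl(h(\psi+\delta_s)-h(\psi)\bigr)\,\lambda(ds)+\int_{[0,t]}\bigl(h(\psi-\delta_s)-h(\psi)\bigr)\,\psi(ds).
}
For a test function $f$ in the relevant class (the $\done$-Lipschitz functions $\fw$ for $\dtwo$; the measurable $[0,1]$-valued functions on $\mfn$ for $\dtv$), the Stein equation $\mathcal{A}h_f=f-\mean f(\Pi)$ has the usual solution $h_f$ given by a time-integral of the semigroup, so by the dual representation \eqref{eq:dtwoassup} and its total-variation analogue it suffices to bound $\sup_f\bigl|\mean\mathcal{A}h_f(\Psit)\bigr|$ over the respective class. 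The first ingredient is the Stein-factor bounds: coupling the immigration--death processes started from $\psi$, $\psi+\delta_s$ and $\psi+\delta_s+\delta_{s'}$ and using $|f(\psi)-f(\psi')|\leq\done(\psi,\psi')$ for $f\in\fw$, one shows that for $f\in\fw$ the first difference $\Delta_s h_f(\psi)=h_f(\psi+\delta_s)-h_f(\psi)$, the second difference $\Delta_s\Delta_{s'}h_f(\psi)$, and the Lipschitz constant of $s\mapsto\Delta_s h_f(\psi)$ are all bounded by $c_0:=\min\bigl(1,(1+2\log^{+}(\lambda(t)))/\lambda(t)\bigr)$, whereas for the $\dtv$ test functions the first and second differences are bounded by $1$. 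Establishing these estimates --- in particular the logarithmic dependence on $\lambda(t)$ and the position-Lipschitz bound --- is the main technical point, but it is a routine variation of the cited computations.

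Next I would carry out the local decomposition. With $I_i=\I[Y_i\leq t]$ we have $\Psit=\sum_{i\in\mcI}I_i\delta_{Y_i}$; splitting $\lambda=\sum_i\lambda_i$ and observing that the death integral against $\Psit$ equals $-\sum_i I_i\Delta_{Y_i}h(\Psi_i\vert_t)$, where $\Psi_i\vert_t:=\Psit-I_i\delta_{Y_i}=\sum_{j\neq i}I_j\delta_{Y_j}$, gives
\be{
  \mean\mathcal{A}h(\Psit)=\sum_{i\in\mcI}\Bigl(\mean\int_{[0,t]}\Delta_s h(\Psit)\,\lambda_i(ds)-\mean\bigl[I_i\Delta_{Y_i}h(\Psi_i\vert_t)\bigr]\Bigr).
}
For each $i$ I would then replace the configuration argument of both differences by the ``far field'' $\Psi^{-A_i}\vert_t:=\sum_{j\notin A_i}I_j\delta_{Y_j}$, stripping off the points indexed by $A_i$ one at a time; each removal turns a first difference of $h$ into a second difference, hence costs at most $c_0$ (respectively $1$) per point. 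Since $\Psit$ and $\Psi^{-A_i}\vert_t$ differ by $\sum_{j\in A_i}I_j$ points and $\Psi_i\vert_t$ and $\Psi^{-A_i}\vert_t$ by $\sum_{j\in A_i,\,j\neq i}I_j$ points, integrating against $\lambda_i$ (total mass $p_i$) and taking expectations (using $\mean I_j=p_j$ and $\mean(I_iI_j)=p_{ij}$) yields exactly the contributions $c_0\sum_{i\in\mcI,\,j\in A_i}p_ip_j$ and $c_0\sum_{i\in\mcI,\,j\in A_i,\,i\neq j}p_{ij}$, with $c_0$ replaced by $1$ in the $\dtv$ case.

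It remains to bound the ``clean'' residual $\sum_i\bigl(\mean\int_{[0,t]}\Delta_s h(\Psi^{-A_i}\vert_t)\,\lambda_i(ds)-\mean[I_i\Delta_{Y_i}h(\Psi^{-A_i}\vert_t)]\bigr)$, and this is where I would condition on $\mcF_i$. Since $\Psi^{-A_i}\vert_t$ is $\mcF_i$-measurable (because $\mcF_i\supseteq\sigma(I_j,Y_jI_j:j\notin A_i)$), $Y_i\sim\lambda_i$, and $\Delta_s h(\psi)=0$ for $s>t$, the $i$-th summand equals
\be{
  \mean\int_{[0,t]}\Delta_s h(\Psi^{-A_i}\vert_t)\,\bigl[\law(Y_i)-\law(Y_i\mvert\mcF_i)\bigr](ds).
}
For $\dtv$ this is at most $\|\Delta h\|_\infty\leq1$ times $\mean\bigl\|\law(Y_i)\vert_t-\law(Y_i\mvert\mcF_i)\vert_t\bigr\|_{\mathrm{TV}}$, which after summing over $i$ gives the last term of \eqref{eq:pppdtv}. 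For $\dtwo$ I would integrate by parts in $s$: the boundary term is $\Delta_t h(\Psi^{-A_i}\vert_t)\bigl(p_i-\mean[I_i\mvert\mcF_i]\bigr)$, bounded in absolute value by $c_0\bigl|p_i-\mean[I_i\mvert\mcF_i]\bigr|$, and, using that $s\mapsto\Delta_s h$ is $c_0$-Lipschitz, the remaining integral against $\pr(Y_i\leq s)-\pr(Y_i\leq s\mvert\mcF_i)$ is at most $c_0\int_0^t\bigl|\pr(Y_i\leq s)-\pr(Y_i\leq s\mvert\mcF_i)\bigr|\,ds$; taking expectations over $\mcF_i$ and summing over $i$ recovers the last two terms of \eqref{eq:pppdtwo}. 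Collecting the three groups of terms, pulling out the common factor $c_0$, and using the trivial bound $\dtwo\leq1$ to justify the minimum with $1$, completes the proof.

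Besides the Stein-factor estimates, which are the crux, the remaining work is measure-theoretic bookkeeping: verifying that the restriction conventions (treating $h$, $\Psit$, $\Pi$ as supported on $[0,t]$ and $\Delta_s h\equiv0$ for $s>t$) are used consistently, that the telescoping steps in the local decomposition genuinely produce second differences rather than mere first differences, and that enlarging $\mcF_i$ beyond $\sigma(\Psi^{-A_i}\vert_t)$ only enlarges the conditional-deviation terms and hence leaves the bound valid.
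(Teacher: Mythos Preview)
Your proposal is correct and follows essentially the same route as the paper: generator setup for the immigration--death process, the same Stein-factor bounds (citing \cite{Barbour1992a} and \cite{SX08}/\cite{Schuhmacher2005a}), the same three-term decomposition obtained by stripping the $A_i$-points, and conditioning on $\mcF_i$ for the residual. The one genuine difference is in how you extract the two ``conditional'' terms in the $\dtwo$ bound: you integrate by parts in $s$ using the Lipschitz bound on $s\mapsto\Delta_s h$, whereas the paper builds a quantile coupling of $\law(Y_i)$ and $\law(Y_i\mvert\mcF_i)$ and bounds $\E|g(X)-g(Y)|$ directly. Both arguments yield the identical estimate $c_0\bigl|F_1(t)-F_2(t)\bigr|+c_0\int_0^t|F_1(s)-F_2(s)|\,ds$, so this is a stylistic rather than substantive deviation. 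One small remark: the ``$\min$ with $1$'' in the prefactor comes straight from the Stein-factor bound $c_0$ itself, not from the trivial inequality $\dtwo\le 1$, so that last sentence of yours is unnecessary.
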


\subsection{Deterministic transmitter placements}

Using Theorem~\ref{thmppp}, we first address the case of fixed transmitter placements. Denote by $\bar B(x,r)$ and $\mathring B(x,r)$ the closed and open balls of radius $r$ centered at $x$, respectively.
\begin{Theorem}\label{thm:mainproc}
Let $t>0$ and $\sigma, \xi$, and $N:=N\t{\xi, \sigma}$ be defined as in the Setup~\ref{setup}. Let $\Pi:=\Pi\t{\xi, \sigma}$ be a Poisson process on $\Rplus$ with the same mean measure~$M\t{\xi,\sigma}$ as~$N$.
Let $R>0$ be as in {\rm P2}
and define for $r>0$,
\be{
T_r(R)=\max_{x: \norm{x}\leq  r} \xi(\bar B(x,R)),
} 
where $\xi(\bar B(x,R))$  denotes the number of points in $\xi \cap \bar B(x,R)$. For $C \geq R$, 
let $d_*:=\min_{x\in\xi} \norm{x}>0$,
  $b_*:= \frac{1}{\sigma} \log \bigl( \tfrac{h(d_*)}{t} \bigr) + \frac{\sigma}{\beta}$,
$B_C:=\frac{1}{\sigma} \log \bigl( \tfrac{h(C)}{t} \bigr)+ \frac{\sigma}{\beta}$, 
$\eps_C = \inf \{\norm{x-y} : x,y \in \xi,\linebreak[3] 0 < \norm{x-y} < C \} > 0$, and $\delta_C$ be the  uniform positive definite constant of {\rm P1} for $\eps=\eps_C$. Furthermore, let
\be{
F=F(R,C)=\frac{1}{\delta_C} (4\pi+1) T_C(R) \biggl( \tilde\varrho^2(R) + \frac{1}{\sqrt{3}R^2} \int_R^\infty s \tilde\varrho^2(s) \; ds \biggr).
}
Then, requiring $\sigma^2>-\beta  \log \bigl( h(d_*)/t \bigr)$ (i.e., $b_*>0$), and also $B_C > 1$ and $B_C^2 F \leq 1$, 
\ba{
&\dtwo(\law(N\vert_t), \law(\Pi\vert_t)) \notag\\
&\quad \leq 2\int_{\norm{x}>C}\pr(t S\geq g(x)) \; \xi(dx) \notag\\
&\quad \quad+\min \biggl( M(t),1+2\log^{+}(M(t)) \biggr) \Biggl[ \, T_C(R)\bigl[ \pr(tS \geq h(d_*))
+ 5 e^{-b_*^2(1-\tilde\varrho(\eps_C))/4} \bigr]\notag\\
&\qquad\quad\quad+\frac{8(t+1)}{\sqrt{1-F^2}} \bigl( B_C + \sigma^{-1} \bigr) \sqrt{F} 
   + (t+1) (1+b_*^{-2})\sqrt{F} e^{-b_{*}^2(F^{-1}-1)/2} \Biggr].
   }
Requiring only $\sigma^2>-\beta  \log \bigl( h(d_*)/t \bigr)$,
\ba{
&\dtv(\law(N\vert_t), \law(\Pi\vert_t)) \notag\\
&\quad \leq 2\int_{\norm{x}>C}\pr(t S\geq g(x)) \; \xi(dx) + M(t)T_C(R)\bigl[ \pr(tS \geq h(d_*)) + 5 e^{-b_*^2(1-\tilde\varrho(\eps_C))/4} \bigr] \notag\\
&\quad\quad + \frac{4}{3} \Bigl( \pi \frac{C^2}{R^2} + (5 \pi+3) \frac{C}{R} \Bigr) T_C(R) (\sqrt{F} + 2F). 
}
\end{Theorem}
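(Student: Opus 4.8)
The plan is to throw away the transmitters far from the origin and apply the general Poisson approximation Theorem~\ref{thmppp} to the finitely many that remain, with the neighbourhoods $A_i$ determined by spatial proximity. Throughout one uses $Y_i = g(x_i)e^{\sigma^2/\beta - \sigma Z_i}$, so that $\{Y_i \le s\} = \{Z_i \ge b_i(s)\}$ with $b_i(s) = \sigma^{-1}\log(g(x_i)/s) + \sigma/\beta$; writing $\Phi$ for the standard normal CDF and $\bar\Phi = 1-\Phi$, one has $p_i = \bar\Phi(b_i)$ with $b_i := b_i(t)$, and $b_* \le b_i \le B_C$ whenever $\|x_i\| \le C$. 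First fix $C \ge R$, set $\mcI = \{i : \|x_i\| \le C\}$ (finite, since $\xi$ is locally finite), and let $N_C$ and $M_C := \mean N_C$ be $N$ and $M$ with all terms from transmitters outside $\bar B(0,C)$ deleted. A coupling on one probability space, together with $\done \le 1$, bounds $\dtv(\law(N\vert_t),\law(N_C\vert_t))$, and hence $\dtwo$ of the same pair, by $\pr(\exists\, i \notin \mcI : Y_i \le t) \le \int_{\|x\|>C}\pr(tS \ge g(x))\,\xi(dx)$, while $\dtv(\pop(M\vert_t),\pop(M_C\vert_t)) \le \|M\vert_t - M_C\vert_t\|_{\mathrm{TV}} = \sum_{\|x_i\|>C}p_i = \int_{\|x\|>C}\pr(tS \ge g(x))\,\xi(dx)$. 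Two triangle inequalities thus contribute the leading term $2\int_{\|x\|>C}\pr(tS \ge g(x))\,\xi(dx)$ to both target bounds and reduce matters to estimating the distances between $\law(N_C\vert_t)$ and $\pop(M_C\vert_t)$.

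Apply Theorem~\ref{thmppp} with $A_i \subset \mcI$ the transmitters within distance $R$ of $x_i$ (so $i \in A_i$ and $|A_i| \le T_C(R)$) and $\mcf_i = \sigma(Y_j : j \notin A_i) = \sigma(Z_j : j \notin A_i)$, the two $\sigma$-fields coinciding because $Z_j \mapsto Y_j$ is bijective; this $\sigma$-field contains $\sigma(I_j, Y_jI_j : j\notin A_i)$, as required. For the two ``local'' terms: in $\sum_{i,\,j\in A_i}p_ip_j$ bound $p_j = \bar\Phi(b_j) \le \bar\Phi(b_*) = \pr(tS \ge h(d_*))$ and use $|A_i| \le T_C(R)$, $\sum_i p_i \le M(t)$, getting $\le M(t)\,T_C(R)\,\pr(tS \ge h(d_*))$; in $\sum_{i\ne j,\, j\in A_i}p_{ij}$ with $p_{ij} = \pr(Z_i \ge b_i,\,Z_j \ge b_j)$, observe that $j\in A_i$, $j\ne i$ forces $\eps_C \le \|x_i-x_j\| < C$ and hence $\rho(x_i,x_j) \le \tilde\varrho(\eps_C) < 1$, so a bivariate normal tail estimate (condition on $Z_i$, control the conditional upper tail of $Z_j$ using $b_i \wedge b_j \ge b_*$) gives $p_{ij} \le 5\,p_i\,e^{-b_*^2(1-\tilde\varrho(\eps_C))/4}$, whence $\sum_{i\ne j,\,j\in A_i}p_{ij} \le 5\,M(t)\,T_C(R)\,e^{-b_*^2(1-\tilde\varrho(\eps_C))/4}$. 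Together these yield the $T_C(R)[\,\cdot\,]$ summand, multiplied by the Stein factor in the $\dtwo$ bound and kept as is in the $\dtv$ bound.

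The remaining terms of Theorem~\ref{thmppp} measure how conditioning $Z_i$ on the distant field $\mcf_i$ perturbs its law. Gaussian conditioning gives $\law(Z_i \mid \mcf_i) = \mcn(\mu_i, v_i)$ with $\mean\mu_i = 0$ and $1-v_i = \var(\mu_i) = c^{\top}\Sigma_{22}^{-1}c$, where $c = (\rho(x_i,x_j))_{j\notin A_i}$ and $\Sigma_{22}$ is the correlation matrix of $(Z_j)_{j\notin A_i}$. By Property~P1 (these points are $\ge \eps_C$ apart), $\|\Sigma_{22}^{-1}\| \le 1/\delta_C$, so $1-v_i \le \|c\|^2/\delta_C \le \delta_C^{-1}\sum_{j\notin A_i}\tilde\varrho^2(\|x_i-x_j\|)$; splitting $\IR^2 \setminus \mathring B(x_i,R)$ into annuli whose radii follow the geometric pattern built into Property~P2, bounding the transmitters in each radius-$R$ ball by $T_C(R)$, and comparing the resulting series to $\int_R^\infty s\tilde\varrho^2(s)\,ds$ via the monotonicity P2 supplies, yields precisely $1-v_i \le F$ uniformly over $i\in\mcI$. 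It then remains to estimate, per transmitter, the quantities $\mean\bigl|\bar\Phi((b_i-\mu_i)/\sqrt{v_i}) - \bar\Phi(b_i)\bigr|$, the integral $\int_0^t \mean\bigl|\bar\Phi((b_i(s)-\mu_i)/\sqrt{v_i}) - \bar\Phi(b_i(s))\bigr|\,ds$ (for the $\dtwo$ bound), and $\mean\bigl\|\mcn(\mu_i,v_i)\vert_{[b_i,\infty)} - \mcn(0,1)\vert_{[b_i,\infty)}\bigr\|_{\mathrm{TV}}$ (for the $\dtv$ bound; the bijection $Y_i \leftrightarrow Z_i$ preserves total variation, and restriction does not increase it). Each of these splits into a mean-shift part and a variance-change part, controlled using the Mills-ratio inequality $\phi(b) \le (b+b^{-1})\bar\Phi(b)$, the bounds $b_* \le b_i \le B_C$ and $1-v_i \le F$, the substitution $s = g(x_i)e^{\sigma^2/\beta-\sigma b}$, $ds = \sigma s\,db$ (which converts the $s$-integral to a rapidly decaying Gaussian integral in $b$ and supplies the factor $t+1$), and a separate treatment of the exponentially unlikely event that $|\mu_i|$ is comparable to $b_i$ (via $\pr(|\mu_i|>x) \le 2e^{-x^2/2(1-v_i)}$, producing the summand $(t+1)(1+b_*^{-2})\sqrt F\,e^{-b_*^2(F^{-1}-1)/2}$). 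This bounds each per-transmitter contribution by a constant multiple of $p_i(B_C+\sigma^{-1})\sqrt F/\sqrt{1-F^2}$ (plus the rare-event piece) for the $\dtwo$ estimate, and of $\sqrt F + 2F$ for the $\dtv$ estimate. Summing over $i$ and using $\sum_i p_i \le M(t)$ together with the Stein factor $\min(1,(1+2\log^{+}M(t))/M(t))$ completes the $\dtwo$ bound; summing the $\dtv$ contributions and bounding $|\mcI|$ by $\tfrac{4}{3}\bigl(\pi C^2/R^2 + (5\pi+3)C/R\bigr)T_C(R)$ --- the number of radius-$R$ balls needed to cover $\bar B(0,C)$, each containing at most $T_C(R)$ transmitters --- completes the $\dtv$ bound. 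The hypotheses $B_C > 1$ and $B_C^2F \le 1$ are exactly what is needed to collapse the $\dtwo$ estimates into the stated closed form.

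The main obstacle is this final Gaussian comparison: it must extract a full power $\sqrt F$ even though $b_i$ can be of order $\sigma$, so crude Lipschitz bounds on $\bar\Phi$ are useless and one has to exploit that $\phi(b_i) \approx b_i\bar\Phi(b_i)$ is itself exponentially small, handle the variance change multiplicatively, and carve off the rare event $\{|\mu_i| \gtrsim b_i\}$. Hand in hand with it is the geometric estimate turning $\sum_{j\notin A_i}\tilde\varrho^2(\|x_i-x_j\|)$ into the explicit expression defining $F$, which is exactly where the technical-looking Property~P2 earns its keep.
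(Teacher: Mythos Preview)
Your proposal is correct and follows essentially the same route as the paper: truncate to $\bar B(0,C)$, apply Theorem~\ref{thmppp} with $A_i$ the $R$-neighbours and $\mcf_i=\sigma(Z_j:j\notin A_i)$, bound the two local terms exactly as you describe, control $1-v_i\le F$ via the annulus decomposition (this is the paper's Lemma~\ref{lem:removingnp}), and then perform the conditional-Gaussian comparison. The only cosmetic differences are that the paper bounds $p_{ij}$ via $\pr(Z_i+Z_j\ge b_i+b_j)$ rather than by conditioning, and handles the $s$-integral in Term~4 by computing $\int_0^t B_C(s)\,ds = t/\sigma + tB_C(t)$ directly rather than by your substitution; the $\dtv$ per-transmitter bound is obtained from a short Stein-method lemma giving $\dtv(\mcn(m,s^2),\mcn(0,1))\le 2|1-s^2|+\sqrt{\pi/2}\,|m|$.
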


Before proving the theorem, we show how it implies that for $\sigma$ large
and moderate correlation, $N\vert_t$ is approximately Poisson.
\begin{Theorem}\label{thm:fixprocconv}
Let $t>0$ and $\sigma, \xi$, and $N:=N\t{\xi, \sigma}$ be defined as in the Setup~\ref{setup}. Let $\Pi:=\Pi\t{\xi, \sigma}$ be a Poisson process on $\Rplus$ with the same mean measure~$M\t{\xi,\sigma}$ as~$N$.
Assume $\xi$ is such that $\min_{x\in\xi} \norm{x}>0$
and $\eps:=\inf_{x,y\in\xi} \norm{x-y}>0$.
\begin{enumerate}
\item If $\tilde\varrho(r)=\textrm{O}(r^{-(1+a)})$ for some $a>0$, then 
as $\sigma\to\infty$,
 \be{
 \dtwo(\law(N\vert_t), \law(\Pi\vert_t)) \to 0.
 }
 \item If $\tilde\varrho(r)=\textrm{O}(r^{-(1+a)})$ for some $a>16/(1-\tilde\varrho(\eps))$,
 then 
  \be{
 \dtv(\law(N\vert_t), \law(\Pi\vert_t)) \to 0.
 }
 \end{enumerate}
\end{Theorem}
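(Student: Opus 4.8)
\noindent\textit{Proof plan.}
The plan is to apply Theorem~\ref{thm:mainproc} with the free parameters $R$ and $C$ chosen as functions of $\sigma$ that both tend to infinity, and to verify that under the stated decay of $\tilde\varrho$ every summand on the right-hand side of the relevant bound tends to $0$, while the side conditions $\sigma^2>-\beta\log(h(d_*)/t)$, $B_C>1$, $B_C^2F\le 1$ and $C\ge R$ hold for $\sigma$ large. Two preliminary reductions make this manageable. First, since $\tilde\varrho$ is non-increasing, Property~P2 holding for one value $R_0$ forces it for all $R\ge R_0$: substituting $s=R(1+\sqrt 3(r-1))$ shows that P2 with parameter $R$ is equivalent to $h_R(s):=(s/R+\sqrt 3-1)\tilde\varrho^2(s)$ being non-increasing on $[R,\infty)$, and $h_R=\tfrac{R_0}{R}h_{R_0}+(\sqrt 3-1)(1-\tfrac{R_0}{R})\tilde\varrho^2$ is a nonnegative combination of functions non-increasing on $[R,\infty)$; hence $R$ may be taken arbitrarily large. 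Second, the hard core assumption $\eps=\inf_{x\neq y\in\xi}\norm{x-y}>0$ yields the packing bound $\xi(\bar B(x,r))\le(1+2r/\eps)^2$ for all $x,r$, so $T_C(R)$ is of order $R^2$ uniformly in $C$; bounding the sum defining $M(t)$ by a Stieltjes integral against this packing bound and evaluating the resulting Gaussian integral gives $M(t)\le c_\eps\, t^{2/\beta}/K^2$ uniformly in $\sigma$ (so the prefactor $\min(M(t),1+2\log^+(M(t)))$ in Theorem~\ref{thm:mainproc} is bounded), and $\eps_C\ge\eps$ gives $\delta_C\ge\delta(\eps)>0$ and $1-\tilde\varrho(\eps_C)\ge 1-\tilde\varrho(\eps)>0$.

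Under $\tilde\varrho(r)=\mathrm{O}(r^{-(1+a)})$ one has $\tilde\varrho^2(R)=\mathrm{O}(R^{-2-2a})$ and $\int_R^\infty s\tilde\varrho^2(s)\,ds=\mathrm{O}(R^{-2a})$, whence, with the bounds above, $F=F(R,C)=\mathrm{O}(R^{-2a})$ uniformly in $C\ge R$; in particular $F\to 0$ as $R\to\infty$. The delicate term is the truncation error $2\int_{\norm{x}>C}\pr(tS\ge g(x))\,\xi(dx)$: writing $\pr(tS\ge g(x))=\bar\Phi(v(\norm{x}))$, where $\bar\Phi$ is the standard normal survival function and $v(r)=\sigma^{-1}(\beta\log(Kr)-\log t)+\sigma/\beta$, and integrating against the packing bound, one finds this term is of order $\sigma\,\bar\Phi(B_C-2\sigma/\beta)$, which is small only once $\log C$ exceeds $\sigma^2/\beta^2$. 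Consequently $C=C(\sigma)$ must grow like $e^{c\sigma^2}$ with $c>1/\beta^2$, which makes $B_C$ of order $\sigma$ (hence eventually $>1$), while $b_*$ is of order $\sigma/\beta$ and $e^{-b_*^2(1-\tilde\varrho(\eps_C))/4}$ of order $e^{-\sigma^2(1-\tilde\varrho(\eps))/(4\beta^2)}$.

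For part~(i) take, concretely, $C=e^{2\sigma^2/\beta^2}$ and $R=e^{c''\sigma^2}$ with $0<c''<(1-\tilde\varrho(\eps))/(8\beta^2)$, so that $R\le C$. Then the truncation term is $\mathrm{O}(\sigma e^{-\sigma^2/(2\beta^2)})$; the contribution $T_C(R)\bigl[\pr(tS\ge h(d_*))+5e^{-b_*^2(1-\tilde\varrho(\eps_C))/4}\bigr]$ is of order $e^{2c''\sigma^2-\sigma^2(1-\tilde\varrho(\eps))/(4\beta^2)}$; the term $\tfrac{8(t+1)}{\sqrt{1-F^2}}(B_C+\sigma^{-1})\sqrt F$ is of order $\sigma e^{-ac''\sigma^2}$; and the final term, carrying $e^{-b_*^2(F^{-1}-1)/2}$, is negligible because $F\to 0$. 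The side conditions $B_C>1$, $B_C^2F=\mathrm{O}(\sigma^2e^{-2ac''\sigma^2})\le 1$, and $\sigma^2>-\beta\log(h(d_*)/t)$ all hold for $\sigma$ large. All of this is valid for every $a>0$, so $\dtwo(\law(N\vert_t),\law(\Pi\vert_t))\to 0$.

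For part~(ii) the total variation bound contains, besides $M(t)T_C(R)\bigl[\pr(tS\ge h(d_*))+5e^{-b_*^2(1-\tilde\varrho(\eps_C))/4}\bigr]$, which is again of order $R^2e^{-\sigma^2(1-\tilde\varrho(\eps))/(4\beta^2)}$ and thus requires $c''<(1-\tilde\varrho(\eps))/(8\beta^2)$, the extra term $\bigl(\pi C^2/R^2+(5\pi+3)C/R\bigr)T_C(R)(\sqrt F+2F)$, which is of order $C^2\sqrt F=\mathrm{O}(C^2R^{-a})$ and thus requires $R^a\gg C^2$, i.e.\ $ac''>2c$. Taking $c$ as close to $1/\beta^2$ as we please, these two requirements on $c''$ are jointly satisfiable precisely when $\tfrac{2}{a\beta^2}<\tfrac{1-\tilde\varrho(\eps)}{8\beta^2}$, that is, $a>16/(1-\tilde\varrho(\eps))$ --- which is the hypothesis; under it, fix such a $c''$ and the attendant $C$, and handle the remaining terms and side conditions as in part~(i) to conclude $\dtv(\law(N\vert_t),\law(\Pi\vert_t))\to 0$. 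The main obstacle is exactly this balancing act: $C$ has to be exponentially large in $\sigma^2$ to suppress the signals from distant transmitters that, under strong shadowing, still land in $[0,t]$; $R$ has to be large for the conditioning error $F$ to be small; but $R\le C$ is forced and $R$ may not be so large that the packing factor $T_C(R)$, of order $R^2$, swamps the Gaussian-tail factor $e^{-b_*^2(1-\tilde\varrho(\eps))/4}$ --- and in the total variation estimate this last competition is sharp, producing the threshold $a>16/(1-\tilde\varrho(\eps))$.
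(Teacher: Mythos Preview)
Your proof is correct and follows the same overall strategy as the paper's: plug suitable $R=R(\sigma)\to\infty$ and $C=C(\sigma)\to\infty$ into Theorem~\ref{thm:mainproc} and verify that every term vanishes while the side conditions hold. The differences are minor but worth noting. The paper takes $C=\exp\{\sigma^2/\beta^2+\sigma^{1.1}\}$ throughout and uses a \emph{polynomial} $R=\sigma^{2/a}$ for the $\dtwo$ part (which already suffices, since then $T_C(R)=\mathrm{O}(\sigma^{4/a})$ is crushed by $e^{-b_*^2(1-\tilde\varrho(\eps))/4}$) and $R=C^{2/a}$ for the $\dtv$ part, whereas you take exponential $R$ in both cases; either choice works. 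More substantively, you control the truncation integral and the uniform boundedness of $M(t)$ via the packing bound $\xi(\bar B(x,r))\le(1+2r/\eps)^2$, while the paper's written argument for the truncation invokes the asymptotic homogeneity $\xi(\bar B(0,r))/(\pi r^2)\to\kappa$, which is \emph{not} among the stated hypotheses of Theorem~\ref{thm:fixprocconv}; your packing-based argument is the right fix. Your explicit verification that Property~P2 persists for all larger $R$, and your derivation showing that the threshold $a>16/(1-\tilde\varrho(\eps))$ is exactly the condition under which one can choose $c''$ with $2c/a<c''<(1-\tilde\varrho(\eps))/(8\beta^2)$ for some $c>1/\beta^2$, make the mechanism more transparent than the paper's one-line ``it is easy to see''.
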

\begin{proof}
We apply the bounds of 
Theorem~\ref{thm:mainproc}. 
Set $C=\exp\{\sigma^2/\beta^2+\sigma^{1.1}\}$, 
let $d_*:=\min_{x\in\xi} \norm{x}>0$, 
$\eps_C=\eps:=\min_{x,y\in\xi} \norm{x-y}>0$,
and recall the notation of Theorem~\ref{thm:mainproc}.

For the first term appearing in both bounds, 
we claim $\int_{\norm{x}>C}\pr(t S\geq g(x)) \, \xi(dx) \to 0$, 
as long as
\ban{\label{eq:polart1}
\int_{r>C}\pr(t S\geq h(r)) \hbit r \; dr \to 0.
}
This is because
\ba{
\int_{\norm{x}>C}\pr(t S\geq g(x)) \; \xi(dx)&=\mean \int_{\norm{x}>C}\I[t S\geq g(x)] \; \xi(dx)\\
	&=\mean \: \xi \bigl( \bigl\{x: C < \norm{x} \leq \tfrac{S^{1/\beta} t^{1/\beta}}{K} \bigr\} \bigr).
}
Since $\xi(\bar B(0,r))/(\kappa \pi r^2)\to 1$, we have that for any $\iota>0$, 
if $C$ is large enough, then
we can bound this last term
\ba{
\mean \: \xi \bigl( &\bigl\{x: C < \norm{x}\leq \tfrac{S^{1/\beta} t^{1/\beta}}{K} \bigr\} \bigr) \\
&  \leq \kappa \pi  \mean \left[\left((1+\iota) 
\left(\frac{S^{1/\beta} t^{1/\beta}}{K}\right)^2-(1-\iota)C^2\right)\I[t S\geq h(C)]\right],
}
which, up to a constant factor (of $2\kappa\pi$) and the terms with the factors of $\iota$, can be rewritten as the left hand side of~\eq{eq:polart1}.
To take care of the $\iota$ terms, bound
\be{
\iota\mean\left(\frac{S^{1/\beta} t^{1/\beta}}{K}\right)^2\I[t S\geq h(C)]\leq 
\iota\mean\left(\frac{S^{1/\beta} t^{1/\beta}}{K}\right)^2=\iota \frac{t^{2/\beta}}{K^2},
}
using the moment generating function of a Gaussian random variable. Since we can take
$\iota\to0$ as $C\to\infty$, and the right hand side of the inequality is independent of $C$, the left hand side goes to zero as $\sigma\to\infty$.
For the other $\iota$ term, we use the usual Gaussian Mills ratio bound. If $r>0$, then
\ben{\label{eq:gaussmill}
\max\left\{\frac{1}{r+1},\frac{r^2}{r^2+1} \frac{1}{r} \right\}\leq \frac{\int_{r}^\infty e^{-u^2/2} du}{ e^{-r^2/2}} \leq \min\left\{\sqrt{\frac{\pi}{2}}, \frac{1}{r}\right\}.
}
Combining the upper bound with
\ben{\label{eq:bcgud}
\begin{split}
B_C&=\frac{\beta}{\sigma}\log(C) +\frac{\sigma}{\beta} +\Theta(\sigma^{-1}) \\
	&=\frac{2\sigma}{\beta} + \beta \sigma^{0.1}+\Theta(\sigma^{-1}),
\end{split}	
}
we find $\iota C^2\pr(t S\geq h(C))\to 0$ since
\be{
 \iota \exp\{2\sigma^2/\beta^2+2\sigma^{1.1}-(2\sigma/\beta + \beta \sigma^{0.1})^2/2\}\to0.
}
Now, to show~\eq{eq:polart1}, note that
the Mills ratio bound implies that 
it is enough to show 
\be{
\int_C^{\infty} \exp\{-(\beta/\sigma \log(r)+\sigma/\beta)^2)/2\}  r dr \to 0.
}
Now making the change of variable $u=(\beta/\sigma)\log(r)-\sigma/\beta$,
we find that the previous integral equals
\be{
\frac{\sigma}{\beta} \int_{\beta \sigma^{0.1}}^\infty e^{-u^2/2} \; du\to 0,
}
as desired. 

For the next term appearing in both bounds:
\ben{\label{eq:rest2}
T_C(R)\bigl[ \pr(tS \geq h(d_*)) + (5/2) e^{-b_*^2(1-\tilde\varrho(\eps_C))/4} \bigr],
}
note that because $\eps>0$, $T_C(R)=\mathrm{O}(R^2)$ not depending on $C$,
and because $d_*>0$ that $b_*=\sigma/\beta+\Theta(\sigma^{-1})$, so we find
\be{
\pr(tS \geq h(d_*))\leq e^{-b_*^2/2}\leq  e^{-b_*^2(1-\tilde\varrho(\eps))/4}
=\textrm{O}\left(e^{-\sigma^2(1-\tilde\varrho(\eps))/(4\beta^2)}\right).
}
For the convergence in $\dtwo$ we set $R=\sigma^{2/a}$ and for the
convergence in $\dtv$ we set $R=\exp\{2\sigma^2/(a\beta^2)+2\sigma^{1.1}/a\}$ and in both cases~\eq{eq:rest2}
tends to zero (using the inequality $b>16/(1-\tilde\varrho(\eps))$ for $\dtv$).

The tail condition on $\tilde\varrho$ implies $F=\textrm{O}(R^{-2a})$,
and so in both cases it is easy to see that the final remaining term
in the bound of Theorem~\ref{thm:mainproc} tends to zero.
\end{proof}

\begin{proof}[Proof of Theorem~\ref{thm:mainproc}]
We first truncate  in order to be able to work with a finite sum.
Let $\xi_C = \xi \vert_{\bar B(0,C)}$, $\mcI^{\xi_C} = \{ i \in \mcI^{\xi}; \, x_i \in \xi_C \}$, $N_C:=\sum_{i\in\mcI^{\xi_C}}\delta_{Y_i}$ and define $\Pi_C$ to be the Poisson process on $\pR$
having mean given by $M_C(t)=\mean N_C(t)$. 
 Writing $d$ for either $\dtwo$ or $\dtv$, we may split up the initial distance as
\ban{
d(\law(N\vert_t), \law(\Pi\vert_t))&\leq d(\law(N\vert_t), \law(N_C\vert_t)) +d(\law(\Pi_C\vert_t), \law(\Pi\vert_t))\label{eq:truncerr_proc} \\
	&\qquad +d(\law(N_C\vert_t), \law(\Pi_C\vert_t)). \label{eq:truncd_proc}
}
We can bound the first two summands by using direct couplings,
\ban{
\dtwo(\law(N\vert_t), \law(N_C\vert_t)) \leq \dtv(\law(N\vert_t), \law(N_C\vert_t))&\leq \pr(N\vert_t\neq N_C\vert_t) \notag\\	
	&= \pr(\exists i \in \mcI^{\xi} \setminus \mcI^{\xi_C} \colon tS_{x_i} \geq g(x_i)) \notag\\
  &\leq \int_{\norm{x}>C}\pr(t S\geq g(x)) \; \xi(dx) \label{eq:dircoup1}
}
(note that $\int f(x) \, \xi(dx) = \sum_{i\in \mcI^{\xi}} f(x_i)$ for measurable $f \colon \IR^2 \to \Rplus$), and
\ban{
\dtwo(\law(\Pi\vert_t), \law(\Pi_C\vert_t)) \leq \dtv(\law(\Pi\vert_t), \law(\Pi_C\vert_t))&\leq
\pr(\Pi\vert_t \neq \Pi_C\vert_t) \notag\\
        &= \pr \bigl(\Pi(t)-\Pi_C(t) > 0 \bigr) \notag\\
        &\leq M(t) - M_C(t) \notag\\
	&= \int_{\norm{x}>C}\pr(t S\geq g(x)) \; \xi(dx). \label{eq:dircoup2}
}

We apply Theorem~\ref{thmppp} to get an upper bound for the summand $d(\law(N_C\vert_t), \law(\Pi_C\vert_t))$. For $i \in \mcI^{\xi_C}$ ($\neq \emptyset$ w.l.o.g.), set $A_i=\{j \in \mcI^{\xi_C}: \norm{x_j-x_i}\leq R\}$ and $\mcf_i = \sigma(Z_{x_j}: j \in A_i^c)$, where $A_i^c = \mcI^{\xi_C} \setminus A_i$.
The first two terms appearing in~\eq{eq:pppdtwot1} and~\eq{eq:pppdtv} are the same 
up to prefactors. We bound them as follows.\\

\noindent\textbf{Term 1 of both.}
We find
\ben{\label{eq:pobdt1}
\begin{split}
\mathop{\sum_{i\in \mcI^{\xi_C}}}_{j\in A_i} p_i(t) p_j(t) &= \int_{\IR^2} \int_{\bar B(x,R)} \pr(t S_y \geq g(y))\pr(t S_x\geq g(x)) \; \xi_C(dy) \, \xi_C(dx) \\[-3mm]
	&\leq\int_{\IR^2}  T_{\norm{x}}(R)  \pr(t S\geq h(d_*))\pr(t S\geq g(x)) \; \xi_C(dx) \\[1mm]
	&\leq M(t) \hbit T_C(R) \hbit \pr(tS \geq h(d_*)).
\end{split}
}

\noindent\textbf{Term 2 of both.} We have
\ben{ \label{333}
\mathop{\sum_{i\in \mcI^{\xi_C}}}_{j\in A_i, j\neq i} p_{i j}
\leq 2 \int_{\IR^2}\mathop{\int_{\bar B(x,R) \setminus \{x\}}}_{\norm{y}\geq \norm{x}} \pr(t S_{x}\geq g(x), t S_{y}\geq g(y)) \; \xi_C(dy) \, \xi_C(dx).
}
Write
\ben{
b_x=b_x(\sigma):= \frac{1}{\sigma} \log \Bigl( \frac{g(x)}{t} \Bigr) + \frac{\sigma}{\beta} \label{eq:bxdef}
}
and note that we are assuming $\sigma$ is large enough so that 
$b_x>0$ for all $x\in\xi$ (i.e., $\beta  \log \bigl( h(d_*)/t \bigr)>-\sigma^2$).
We use the Gaussian Mills ratio bounds \eqref{eq:gaussmill}
to find
\ban{
\pr(t S_{x}\geq& g(x), t S_{y}\geq g(y)) = \pr(Z_x > b_x, Z_y > b_y) \notag \\
&\leq \frac{\pr(Z_x+Z_y > b_x+b_y)}{\pr(Z_x > b_x)} \pr(Z_x > b_x) \notag \\
&\leq \frac{\exp(-\frac{(b_x+b_y)^2}{4 (1+\rho(x,y))})}{\exp(-\frac{b_x^2}{2})} \min\left\{\frac{2b_x}{b_x+b_y} \frac{b_x^2+1}{b_x^2},
\sqrt{\frac{\pi}{2}} (1+b_x)\right\} \pr(Z_x > b_x), \label{35}
}
here we have used that $\rho(x,y)\leq 1$.
Now note that since $\norm{y}\geq \norm{x}$ and $g$ is non-decreasing in the norm of its argument, $b_x+b_y\geq 2b_x$,
so that~\eq{35} is bounded above by
\ben{\label{336}
\exp\left\{-\frac{b_x^2(1-\rho(x,y))}{2(1+\rho(x,y))}\right\} 
\min\left\{\frac{b_x^2+1}{b_x^2}, \sqrt{\frac{\pi}{2}} (1+b_x)  \right\}
\pr(Z_x > b_x).
}
Note that the minimum appearing in~\eq{336}
is of an increasing and decreasing function in $b_x$ and so is bounded by the maximum 
of the functions evaluated at a common point. Choosing $b_x=5/\sqrt{2\pi}-1$, we find that the minimum
appearing in~\eq{336}
is bounded by $5/2$ (numerically the minimum can be found to be around $2.34$).
To bound the remaining terms in~\eq{336},
we use that $b_x\geq b_*$
and that for $\norm{x}\leq C, \norm{y}\leq C$, $\rho(x,y)\leq \tilde\varrho(\eps_C) <1$, which implies
\ben{
\exp\left\{-\frac{b_x^2(1-\rho(x,y))}{2(1+\rho(x,y))}\right\}\min\left\{\frac{b_x^2+1}{b_x^2}, \sqrt{\frac{\pi}{2}} (1+b_x)  \right\}\leq \frac{5}{2} \exp\left\{-\frac{b_*^2(1-\tilde\varrho(\eps_C))}{4}\right\}.
\label{337}
}
Noting that
\be{
\int_{\IR^2} \int_{\bar B(x,R) \setminus \{x\}} \pr(Z_x>b_x) \; \xi_C(dy) \, \xi_C(dx) \leq T_C(R) M(t),
}
we obtain from~\eq{333} and~\eq{35}--\eq{337} that
\be{
  \mathop{\sum_{i\in \mcI^{\xi_C}}}_{j\in A_i, j\neq i} p_{i j} \leq M(t) \hbit T_C(R) \hbit  5 e^{-b_*^2(1-\tilde\varrho(\eps_C))/4}. 
}


\noindent\textbf{Term 3 of $\dtwo$.} 
We obtain
\ban{
\sum_{i\in \mcI^{\xi_C}} \mean \left|\mean \bigl[ I_i \mvert \mcF_i\bigr] - p_i\right| &= \sum_{i\in \mcI^{\xi_C}} \mean \left|\pr \bigl( t S_{x_i} \geq g(x_i) \bigm| (Z_{x_j})_{j \in A_i^c} \bigr) - \pr \bigl( t S_{x_i} \geq g(x_i) \bigr) \right| \nonumber \\
&= \sum_{i\in \mcI^{\xi_C}} \mean \left|\pr \bigl( Z_{x_i} \geq b_i \bigm| (Z_{x_j})_{j \in A_i^c} \bigr) - \pr \bigl( Z_{x_i} \geq b_i \bigr) \right|, \label{eq:weak1}
}
where 
\be{
b_i:=b_{x_i}= \frac{1}{\sigma} \log \Bigl(\frac{g(x_i)}{t} \Bigr) + \frac{\sigma}{\beta}.
}
Write $\widetilde Z_i = (Z_{x_j})_{j \in A_i^c}$, which is interpreted as a column vector of length $n_i := \# A_i^c$.
Since $Z_{x_i}$, $\widetilde Z_i$ are jointly $(1+n_i)$-variate normally distributed with mean vector $0 \in \R^{1+n_i}$ and $(1+n_i) \times (1+n_i)$ covariance matrix
\be{
  \left(\begin{array}{cc}
    1      & \gamma^{\top} \\[2mm]
    \gamma & \Gamma        \\
  \end{array}\right),
} 
where
\be{
  \gamma = \bigl( \rho(x_i,x_j) \bigr)_{j \in A_i^c}, \text{ and } \ \Gamma = \bigl( \rho(x_j,x_k) \bigr)_{j,k \in A_i^c},
}
we obtain by a standard result that $\mathcal{L} \bigl( Z_{x_i} \bigm| (Z_{x_j})_{j \in A_i^c} = \tilde{z} \bigr) = \mathcal{N}(\mu_i(\tilde z), \tau_i^2)$, where
\ben{ \label{eq:conditionalcumulants}
  \mu_i(\tilde{z}) = \gamma^{\top} \Gamma^{-1} \tilde{z} \text{ and } \ \tau_i^2 = 1 - \gamma^{\top} \Gamma^{-1} \gamma.
}
Note that $\Gamma$ is invertible because it is positive definite by Property~P2. Also $\tau_i^2 \in (0,1]$, because $\Gamma^{-1}$ is also positive definite
and $\tau_i^2$ is the conditional normal variance, which by $\rho(x,y) < 1$ for $x \neq y$ may not be zero.
 
Thus, from~\eqref{eq:weak1}, for a standard normally distributed random variable $Z$ that is independent from $\tZ_i$, we have
\ban{
\sum_{i\in \mcI^{\xi_C}} \mean \left|\mean \bigl[ I_i \mvert \mcF_i\bigr] - p_i\right| 
&= \sum_{i\in \mcI^{\xi_C}} \mean \left|\pr_{\tZ_i} \bigl( Z > \tfrac{b_i-\mu_i(\tZ_i)}{\tau_i} \bigr) - \pr \bigl( Z > b_i \bigr) \right|,\label{eq:condcumul2}
}
where $\pr_{\tZ_i}$ denotes the conditional probability given $\tZ_i$. 
Write 
$\tb_i = \tb_i(\tZ_i)=\frac{b_i-\mu_i(\tZ_i)}{\tau_i}$, and 
recall that $\sigma$ is large enough so that $b_i > 0$.
We then have
\ban{
&\left|\pr_{\tZ_i} \bigl( Z > \tfrac{b_i-\mu_i(\tZ_i)}{\tau_i} \bigr) - \pr \bigl( Z > b_i \bigr) \right|\notag \\ 
&\hspace{2cm}\leq \frac{e^{-b_i^2/2}}{\sqrt{2\pi}} (\tb_i-b_i) \I[\tb_i>b_i]+ \frac{e^{-\tb_i^2/2}}{\sqrt{2\pi}} (b_i-\tb_i) \I[0\leq \tb_i<b_i]
+\I[\tb_i<0] \notag \\
&\hspace{2cm} \leq \frac{e^{-b_i^2/2}}{\sqrt{2\pi}} (\tb_i-b_i) \I[\tb_i>b_i](1+e^{-(\tb_i^2-b_i^2)/2})
+ \frac{e^{-\tb_i^2/2}}{\sqrt{2\pi}} (b_i-\tb_i) 
+\I[\tb_i<0] \notag \\
&\hspace{2cm} \leq \frac{2 e^{-b_i^2/2}}{\sqrt{2\pi}} (\tb_i-b_i) \I[\tb_i>b_i]
+ \frac{e^{-\tb_i^2/2}}{\sqrt{2\pi}} (b_i-\tb_i) 
+\I[\tb_i<0].\label{eq:bd21}
}
We take the expectation of~\eq{eq:bd21} against $\tZ_i$, using that  $\tb_i$ is normal with mean $b_i/\tau_i$ and variance
$s_i^2/\tau_i^2$, where $s_i^2:=\gamma^{\top} \Gamma^{-1}\gamma=1-\tau_i^2$. The following Gaussian
expectation formulas can be checked by straightforward calculation: if $X$ is normal mean $m$ and variance $v^2$, then
\ba{
\mean e^{-X^2/2} &=\frac{1}{v\sqrt{2 \pi}} \int_{-\infty}^\infty e^{-u^2/2} e^{-(u-m)^2/(2v^2)} \; du= (v^2+1)^{-1/2} e^{-\frac{m^2}{2(1+v^2)}}, \\
\mean  X e^{-X^2/2} &=\frac{1}{v\sqrt{2 \pi}} \int_{-\infty}^\infty ue^{-u^2/2} e^{-(u-m)^2/(2v^2)} \; du=m (v^2+1)^{-3/2}  e^{-\frac{m^2}{2(1+v^2)}},  \\[0.5mm]
\mean X \I[X>0] &\leq \sqrt{v^2+m^2}.
}
We then find that the expectation of~\eq{eq:bd21} against $\tZ_i$ is bounded above by
\ba{
\frac{e^{-b_i^2/2}}{\sqrt{2\pi}}&\left[\frac{2}{\tau_i}\sqrt{s_i^2+b_i^2(1-\tau_i)^2}+\frac{\tau_i b_i(s_i^2+\tau_i(\tau_i-1))}{(s_i^2+\tau_i^2)^{3/2}}
e^{\frac{b_i^2(s_i^2+\tau_i^2-1)}{2(\tau_i^2+s_i^2)}}\right] +\pr\left(Z>\frac{b_i}{s_i}\right) \\
&\leq \frac{e^{-b_i^2/2}}{\sqrt{2\pi}}\left[\frac{2}{\tau_i}\sqrt{s_i^2(b_i^2 s_i^2+1)}+\tau_i b_i \bigl(s_i^2-\tau_i(1-\tau_i)\bigr) \right] +\pr\left(Z>\frac{b_i}{s_i}\right),
}
since $s_i^2+\tau_i^2=1$ and $(1-\tau_i) \leq (1-\tau_i^2) = s_i^2$.
Now combining this bound with the fact that 
$b_*\leq b_i\leq B_C$, the Mills ratio inequalities
of~\eq{eq:gaussmill} to find
\ba{
\frac{e^{-b_i^2/2}}{\sqrt{2\pi}}&\leq (b_i+1) \pr(Z>b_i)\leq  (B_C+1) \pr(Z>b_i),\\
\pr\left(Z> b_i/s_i\right)&\leq s_i (1+b_*^{-2})e^{-b_*^2(s_i^{-2}-1)/2} \pr(Z>b_i),
}
and that from~\eq{eq:removingn1p} of Lemma~\ref{lem:removingnp},
\ba{
0\leq s_i^2&\leq \frac{4\pi+1}{\delta_C}  T_C(R) \biggl( \tilde\varrho^2(R) + \frac{1}{\sqrt{3} R^2} \int_R^\infty s \tilde\varrho^2(s) \; ds \biggr) =:F(R,C)=F,
}
we  find that~\eq{eq:condcumul2}
is bounded above by
\be{
\begin{split}
& M(t)(B_C+1) \left[\frac{2}{\tau(R,C)} \sqrt{F(B_C^2F+1)}+B_CF \right] \\
&\hspace{4cm}+M(t)(1+b_*^{-2})\sqrt{F} \exp \bigl( -\tfrac12 b_{*}^2(F^{-1}-1) \bigr),
\end{split}
}
where $\tau(R,C)=\min_i \tau_i$. 
Note that because $1 < B_C$ and $B_C^2 F \leq 1$,
\be{
\begin{split}
  (B_C+1) \left[\frac{2}{\tau(R,C)} \sqrt{F(B_C^2F+1)}+B_CF \right]  & \leq \frac{4\sqrt{2}}{\tau(R,C)} B_C \sqrt{F} + 2 B_C^2 F \\
  	&\leq \frac{8}{\tau(R,C)} B_C \sqrt{F} \\
        &\leq \frac{8}{\sqrt{1-F^2}} B_C \sqrt{F}
\end{split}
}
by \eq{eq:removingn1p}, since by the above conditions also $F<1$.

Thus we obtain for \eqref{eq:weak1} the total bound of
\ben{
  M(t) \biggl( \frac{8}{\sqrt{1-F^2}} B_C \sqrt{F} + (1+b_*^{-2})\sqrt{F} e^{-b_{*}^2(F^{-1}-1)/2} \biggr) \label{eq:totbound3rd}
}

\noindent\textbf{Term 4 of $\dtwo$.}
To bound the final term~\eq{eq:pppdtwo}, we only 
have to integrate the bound~\eqref{eq:totbound3rd} coming from the third term of~\eqref{eq:pppdtwot1}. Writing explicitly the dependence of $B_C$ and $b_*$ on $t$, replacing $t$ by $s$ and integrating, 
noting that $F$ is independent of $s$ and $M(s)$ is non-decreasing in $s$, we obtain
\ban{
  \sum_{i\in \mcI^{\xi_C}} &\int_0^t \E \bigl| \pr(Y_i \leq s \mvert \mcf_i) - \pr(Y_i \leq s) \bigr| \; ds \notag\\
  &\leq M(t) \biggl( \frac{8}{\sqrt{1-F^2}} \sqrt{F} \int_0^t B_C(s) \; ds + \int_0^t (1+b_*(s)^{-2})\sqrt{F} e^{ -b_{*}^2(s)(F^{-1}-1)/2} \; ds \biggr). \label{eq:intwdp}
}
Since
\ba{
  \int_0^t B_C(s) \; ds &= \frac{1}{\sigma} \int_0^t \log \bigl( \tfrac{h(C)}{s} \bigr) \; ds + t \frac{\sigma}{\beta} \\
  &= \frac{1}{\sigma} \Bigl[ s + s \log \bigl( \tfrac{h(C)}{s} \bigr) \Bigr]_{s=0}^t + t \frac{\sigma}{\beta} \\
  &= \frac{t}{\sigma} + t B_C(t),
}
and $b_*$ is non-increasing in $s$, we can upper bound~\eqref{eq:intwdp} by
\ben{
  M(t) \biggl( \frac{8t}{\sqrt{1-F^2}} \bigl( B_C(t) + \sigma^{-1} \bigr) \sqrt{F} 
  + t (1+b_*(t)^{-2})\sqrt{F} e^{-b_{*}^2(t)(F^{-1}-1)/2} \biggr). \label{eq:intwdpout}
}

\noindent\textbf{Term 3 of $\dtv$.}
Turning to $d = \dtv$ now, it remains to estimate the term
\ben{
  \sum_{i\in \mcI^{\xi_C}} \E \Bigl\| \law(Y_i \mvert \mcf_i)\big\vert_{t} - \law(Y_i)\big\vert_{t} \Bigr\|_{\mathrm{TV}}
  = \sum_{i\in \mcI^{\xi_C}} \E \Bigl\| \law(Y_i \mvert (Z_{x_j})_{j \in A_i^c})\big\vert_{t} - \law(Y_i)\big\vert_{t} \Bigr\|_{\mathrm{TV}}  \label{eq:tvnorm}
}
in~\eqref{eq:pppdtv}. Note that for a finite signed measure $\mu$ and a bimeasurable bijection $g \colon \R \to \im(g) \subset \R$ we have
\be{
  \tvnorm{\mu g^{-1}} = \sup_{B \in \mcb} \mu(g^{-1}(B)) - \inf_{B \in \mcb} \mu(g^{-1}(B)) = \sup_{B \in \mcb} \mu(B) - \inf_{B \in \mcb} \mu(B) = \tvnorm{\mu}.
} 
Therefore
\ban{
  \Bigl\| \law(Y_i \mvert (Z_{x_j})_{j \in A_i^c})\big\vert_{t} - \law(Y_i)\big\vert_{t} \Bigr\|_{\mathrm{TV}}
  &= \Bigl\| \law(Z_{x_i} \mvert (Z_{x_j})_{j \in A_i^c})\big\vert_{t} - \law(Z_{x_i})\big\vert_{t} \Bigr\|_{\mathrm{TV}} \notag\\
  &\leq 2 \hbit \dtv \bigl( \mcn(\mu_i(\tZ), \tau_i^2), \mcn(0,1) \bigl) \notag\\
  &\leq 4 \bigl| 1-\tau_i^2 \bigr| + \sqrt{2 \pi} |\mu_i(\tZ)|,
 \label{eq:normdensdiff}
}
where the last inequality follows from Lemma~\ref{lem:dtvnormal} below.

Note that $1-\tau_i^2 = s_i^2 \leq F$ and $\mu_i(\tZ) \sim \mcn(0,s_i^2)$, so that $\E|\mu_i(\tZ)| = s_i \sqrt{2/\pi}$.
We bound the number of points of $\xi_C$ in $\bar{B}(0,C)$ by covering $\bar{B}(0,C)$ by $\bar{B}(0,R)$ and the annuli $B_k = \{ x \in \R^2 \colon R + (k-1) \sqrt{3} R < \norm{x} \leq R + k \sqrt{3} R \}$, $1 \leq k \leq \lceil (C-R)/(\sqrt{3} R) \rceil =: A$. By Lemma~\ref{lem:annuluscount} below, $B_k$ contains no more than $\lceil 4 \pi k \rceil T_C(R)$ points of $\xi_C$. Therefore
\ba{
  \# \mcI^{\xi_C} / T_C(R) &\leq 1 + \sum_{k=1}^A (4 \pi k + 1) \notag\\
  &= 2 \pi A^2 + (2 \pi +1) A + 1 \notag\\
  &\leq \frac{2\pi}{3} \frac{C^2}{R^2} + \frac{10 \pi+6}{ 3} \frac{C}{R},
}
since $C \geq R$. Thus we can bound \eqref{eq:tvnorm} by
\be{
  \Bigl( \frac{2\pi}{3} \frac{C^2}{R^2} + \frac{10 \pi+6}{ 3} \frac{C}{R} \Bigr) T_C(R) (4F+2\sqrt{F}).\qedhere
}
\end{proof}

The following lemma and its use in the proof of Theorem~\ref{thm:mainproc} above
clarifies the form of Property~P2 of~$\rho$ in Setup~\ref{setup}.

\begin{Lemma}\label{lem:removingnp}
Let $\rho$ be a correlation function radially dominated by $\tilde\varrho \colon \Rplus \to [0,1]$ and satisfying {\rm P1} and {\rm P2} above. Let $x_0 \in \IR^2$ and $\eps > 0$ be fixed, $\delta = \delta(\eps) >0$ as in Property~P1, and $R>0$ 
large enough to apply Property~P2. For arbitrary $n \in \NN$ let $x_1,\ldots,x_n \in {\bar B(x,R)}^{c}$ with $\min_{1 \leq i,j \leq n} \norm{x_i-x_j} \geq \eps$. Define
$T(R)$ to be the maximum number of the $x_i$ in a ball of radius $R$,
\be{
  \gamma = \bigl( \rho(x_0,x_j) \bigr)_{1 \leq j \leq n}, \text{ and } \ \Gamma = \bigl( \rho(x_j,x_k) \bigr)_{1 \leq j,k \leq n}.
}
Then
\begin{equation} \label{eq:removingn1p}
  0\leq \gamma^{\top} \Gamma^{-1} \gamma \leq \frac{1}{\delta} (4\pi+1) T(R) \biggl( \tilde\varrho^2(R) + \frac{1}{\sqrt{3} R^2} \int_R^\infty s \tilde\varrho^2(s) \; ds \biggr).
\end{equation}
\end{Lemma}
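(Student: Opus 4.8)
The plan is to first reduce the quadratic form to a multiple of $\norm{\gamma}^2$ via the uniform positive definiteness of $\Gamma$, and then to bound $\norm{\gamma}^2$ by decomposing the exterior of $\bar B(x_0,R)$ into annuli of radial width $\sqrt3 R$ and collapsing the resulting series into the integral appearing in the bound by means of Property~P2. For the lower bound, observe that, since $x_1,\dots,x_n$ are pairwise at distance $\geq\eps$, Property~P1 makes $\Gamma$ (hence also $\Gamma^{-1}$) positive definite, so $\gamma^{\top}\Gamma^{-1}\gamma\geq0$ and the expression is well defined. For the upper bound, P1 says every eigenvalue of $\Gamma$ is at least $\delta=\delta(\eps)$, hence every eigenvalue of $\Gamma^{-1}$ is at most $1/\delta$, so
\[
  \gamma^{\top}\Gamma^{-1}\gamma\;\leq\;\frac1\delta\,\norm{\gamma}^2\;=\;\frac1\delta\sum_{j=1}^n\rho(x_0,x_j)^2\;\leq\;\frac1\delta\sum_{j=1}^n\tilde\varrho^2\bigl(\norm{x_0-x_j}\bigr)
\]
by radial domination. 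It thus remains to bound the last sum by $(4\pi+1)\,T(R)\bigl(\tilde\varrho^2(R)+(\sqrt3 R^2)^{-1}\int_R^\infty s\tilde\varrho^2(s)\,ds\bigr)$.

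Since every $x_j$ lies in $\bar B(x_0,R)^{c}$, partition this region into the annuli $A_k=\{x:\,R+(k-1)\sqrt3 R<\norm{x-x_0}\leq R+k\sqrt3 R\}$, $k\geq1$. Monotonicity of $\tilde\varrho$ gives $\tilde\varrho^2(\norm{x_0-x_j})\leq\tilde\varrho^2\bigl(R+(k-1)\sqrt3 R\bigr)$ whenever $x_j\in A_k$, while Lemma~\ref{lem:annuluscount} (applied with center $x_0$; cover $A_k$ by balls of radius $R$, each carrying at most $T(R)$ of the points) shows that $A_k$ contains at most $\lceil4\pi k\rceil\,T(R)\leq(4\pi k+1)T(R)$ of them. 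Summing over $k$,
\[
  \sum_{j=1}^n\tilde\varrho^2\bigl(\norm{x_0-x_j}\bigr)\;\leq\;T(R)\sum_{k=1}^\infty(4\pi k+1)\,\tilde\varrho^2\bigl(R+(k-1)\sqrt3 R\bigr).
\]

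Now put $g(r)=r\,\tilde\varrho^2\bigl(R+\sqrt3 R(r-1)\bigr)$, which is non-increasing on $[1,\infty)$ by Property~P2, with $g(1)=\tilde\varrho^2(R)$ and $\tilde\varrho^2\bigl(R+(k-1)\sqrt3 R\bigr)=g(k)/k$. Then
\[
  \sum_{k=1}^\infty(4\pi k+1)\frac{g(k)}{k}=\sum_{k=1}^\infty\Bigl(4\pi+\tfrac1k\Bigr)g(k)\;\leq\;(4\pi+1)\sum_{k=1}^\infty g(k)\;\leq\;(4\pi+1)\Bigl(g(1)+\int_1^\infty g(r)\,dr\Bigr),
\]
using $g(k)\leq\int_{k-1}^{k}g$ for $k\geq2$ (there is nothing to prove if $\int_R^\infty s\tilde\varrho^2(s)\,ds=\infty$). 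Finally the substitution $s=R+\sqrt3 R(r-1)$ gives $\int_1^\infty g(r)\,dr=(\sqrt3 R)^{-1}\int_R^\infty\bigl(1+\tfrac{s-R}{\sqrt3 R}\bigr)\tilde\varrho^2(s)\,ds$, and since $1+\tfrac{s-R}{\sqrt3 R}=\tfrac{s+(\sqrt3-1)R}{\sqrt3 R}\leq\tfrac sR$ for $s\geq R$, this is at most $(\sqrt3 R^2)^{-1}\int_R^\infty s\tilde\varrho^2(s)\,ds$. Chaining the three displays and using $g(1)=\tilde\varrho^2(R)$ yields exactly the bound of the lemma.

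The routine ingredients are the spectral estimate and the geometric point count; the crux is recognizing that Property~P2 is tailored so that the linear-in-$k$ annulus count $\approx4\pi k$ is cancelled by the factor $1/k$ arising from writing $\tilde\varrho^2(R+(k-1)\sqrt3 R)=g(k)/k$, which collapses the weighted series to the plain sum $\sum_k g(k)$ and hence to $\int g$. The only thing needing care is bookkeeping — choosing the annulus width to be $\sqrt3 R$ so that balls of radius $R$ suffice for the cover, and performing the affine change of variables so that the constants $4\pi+1$ and $(\sqrt3 R^2)^{-1}$ emerge exactly as stated.
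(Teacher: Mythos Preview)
Your proof is correct and follows essentially the same approach as the paper's: the spectral bound $\gamma^{\top}\Gamma^{-1}\gamma\leq\delta^{-1}\norm{\gamma}^2$ via P1, the annulus decomposition with width $\sqrt3 R$ and the $\lceil4\pi k\rceil T(R)$ point count from Lemma~\ref{lem:annuluscount}, and the use of P2 to collapse $\sum_k g(k)$ into $g(1)+\int_1^\infty g$ followed by the substitution $s=R+\sqrt3 R(r-1)$. The only difference is cosmetic bookkeeping---the paper factors $\lceil4\pi k\rceil\leq(4\pi+1)k$ first and then writes $\sum k\,\tilde\varrho^2(\cdots)=\sum g(k)$, whereas you bound $(4\pi k+1)/k\leq4\pi+1$ after introducing $g$; the resulting chain of inequalities is identical.
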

\begin{proof}
Note that the uniform positive definiteness of $\rho$ yields $\delta>0$ as a lower bound on the 
smallest eigenvalue of the correlation matrix $\Gamma$ (cf.\ the proof of Theorem~\ref{thm:upd} in the appendix). Hence the spectral norm of $\Gamma^{-1/2}$ is bounded above by $1/\sqrt{\delta}$. Thus
\be{ 
  \| \Gamma^{-1/2} \gamma \| \leq \frac{1}{\sqrt{\delta}} \| \gamma \|.
 }

To bound the norm of $\gamma$ we subdivide ${\bar B(x_0,R)}^c$ into the annuli $B_k = \{ x \in \R^2 \colon R + (k-1) \sqrt{3} R < \| x-x_0 \| \leq R + k \sqrt{3} R\}$, $k \in \NN$. By Lemma~\ref{lem:annuluscount} the annulus $B_k$ contains no more than $\lceil 4\pi k \rceil T(R)$ points of~$x_1,\ldots,x_n$. Therefore
\ba{
  \|\gamma\|^2 &= \sum_{j=1}^n \gamma_j^2= \sum_{k=1}^{\infty} \, \sum_{j \hbit : \hbit x_j \in B_k} \gamma_j^2 \nonumber\\
  &\leq \sum_{k=1}^{\infty} \lceil 4\pi k \rceil T(R) \tilde\varrho^2\bigl((1 + (k-1) \sqrt{3})R\bigr) \nonumber\\
  &\leq (4\pi+1) T(R) \sum_{k=1}^{\infty} k \tilde\varrho^2\bigl(R + \sqrt{3}R(k-1)\bigr) \nonumber\\
  &\leq (4\pi+1) T(R) \biggl( \tilde\varrho^2(R) + \int_1^\infty r \tilde\varrho^2\bigl(R + \sqrt{3}R(r-1)\bigr) \; dr \biggr) \nonumber\\
  &= (4\pi+1) T(R) \biggl( \tilde\varrho^2(R) + \frac{1}{\sqrt{3} R} \int_R^\infty \Bigl(\frac{s-R}{\sqrt{3} R}+1\Bigr) \tilde\varrho^2(s) \; ds \biggr) \nonumber\\
  &\leq (4\pi+1) T(R) \biggl( \tilde\varrho^2(R) + \frac{1}{\sqrt{3} R^2} \int_R^\infty s \tilde\varrho^2(s) \; ds \biggr),
}
where the third inequality holds because of {\rm P2}.
\end{proof}

The following lemma is a technical result used in
the proof of Theorem~\ref{thm:mainproc} which bounds the total variation distance
between two normal distributions.
\begin{Lemma}\label{lem:dtvnormal}
For $m\in \IR $ and $s>0$,
\be{
\dtv(\mcn(m,s^2),\mcn(0,1)) \leq  2 | 1-s^2| + \sqrt{\pi/2} \, |m|
}
\end{Lemma}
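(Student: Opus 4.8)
The plan is to bound the total variation distance by splitting into two contributions: the shift of the mean from $0$ to $m$, and the change of variance from $1$ to $s^2$. Concretely, I would insert an intermediate distribution and use the triangle inequality for $\dtv$,
\be{
\dtv(\mcn(m,s^2),\mcn(0,1)) \leq \dtv(\mcn(m,s^2),\mcn(0,s^2)) + \dtv(\mcn(0,s^2),\mcn(0,1)).
}
The first term is the total variation distance between two normals differing only by a shift; by translation invariance of Lebesgue measure and the standard $L^1$ bound $\dtv \leq \frac12 \|f-g\|_1$, it equals $\dtv(\mcn(m,s^2),\mcn(0,s^2)) = \dtv(\mcn(m/s,1),\mcn(0,1))$, which is a classical one-dimensional computation equal to $2\Phi(|m|/(2s)) - 1 \leq |m|/(s\sqrt{2\pi})$. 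Since I want a bound that does not involve $s$ in the denominator of the mean term, I would instead bound the shift directly at variance $s^2$: $\dtv(\mcn(m,s^2),\mcn(0,s^2)) \leq \frac12 \int |\phi_{s^2}(x-m) - \phi_{s^2}(x)|\,dx$ and estimate this by $\frac12 |m| \sup_x |\phi_{s^2}'(x)| \cdot (\text{something})$ — but more cleanly, one can use $\dtv \leq \frac12 \mathbb{E}|\text{score}| \cdot |m|$-type bounds, or simply note the crude universal bound. Actually the cleanest route: by Pinsker or by direct $L^1$ estimate, $\dtv(\mcn(m,s^2),\mcn(0,s^2)) \leq \tfrac{|m|}{s}\cdot\tfrac{1}{\sqrt{2\pi}}$, and then to remove the $1/s$ one observes that this term only needs to be controlled when $s$ is bounded below in the application; but since the lemma is stated for all $s>0$, I suspect the intended argument keeps things at the level where the variance term $2|1-s^2|$ dominates small $s$. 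Let me reconsider and present the robust version below.

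A more uniform approach, and the one I would actually write, is to couple. Let $Z \sim \mcn(0,1)$; then $sZ + m \sim \mcn(m,s^2)$, and $\dtv(\mcn(m,s^2),\mcn(0,1)) = \dtv(\law(sZ+m),\law(Z))$. I would bound this by comparing $sZ+m$ to $sZ$ to $Z$. For the variance change: $\dtv(\law(sZ),\law(Z))$ is a scaling comparison of two centered normals; an explicit formula gives $\dtv(\mcn(0,s^2),\mcn(0,1)) = \pr(|Z| \in (a,b)) $ for appropriate $a,b$ determined by where the two densities cross, and a standard estimate yields $\dtv(\mcn(0,s^2),\mcn(0,1)) \leq 2|1-s^2|$ (one shows the density ratio crossing points are at $\pm\sqrt{2s^2\log s/(s^2-1)}$ and the probability mass between them is $O(|s^2-1|)$; alternatively use $\dtv \leq \sqrt{\text{KL}/2}$ with $\text{KL}(\mcn(0,s^2)\|\mcn(0,1)) = \frac12(s^2 - 1 - \log s^2)$, though the square root makes the constant worse, so I'd prefer the direct density-crossing argument to land the clean constant $2$). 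For the mean shift at fixed variance $s^2$: $\dtv(\mcn(m,s^2),\mcn(0,s^2)) \leq \frac12\int_{\R}|\phi_{s^2}(x) - \phi_{s^2}(x-m)|\,dx \leq \frac{|m|}{2}\int_{\R}|\phi_{s^2}'(x)|\,dx = \frac{|m|}{2}\cdot 2\phi_{s^2}(0) = \frac{|m|}{2}\cdot\frac{2}{s\sqrt{2\pi}} = \frac{|m|}{s\sqrt{2\pi}}$.

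Combining via the triangle inequality gives $\dtv(\mcn(m,s^2),\mcn(0,1)) \leq 2|1-s^2| + \frac{|m|}{s\sqrt{2\pi}}$, which is almost the claim but has $s$ in the denominator of the mean term rather than the cleaner $\sqrt{\pi/2}\,|m|$ stated. So the main obstacle — and the place I would need to be careful — is getting the mean term free of $s$. The resolution is to split the other way: bound $\dtv(\mcn(m,s^2),\mcn(0,1)) \leq \dtv(\mcn(m,s^2),\mcn(m,1)) + \dtv(\mcn(m,1),\mcn(0,1))$. The first term is a pure variance change, $\leq 2|1-s^2|$ by the same density-crossing estimate (translation-invariant). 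The second term is a mean shift at variance $1$: $\dtv(\mcn(m,1),\mcn(0,1)) \leq \frac{|m|}{2}\int|\phi'(x)|\,dx = \frac{|m|}{2}\cdot 2\phi(0) = \frac{|m|}{\sqrt{2\pi}} \leq \sqrt{\pi/2}\,|m|$ (indeed $1/\sqrt{2\pi} < \sqrt{\pi/2}$, so even this crude bound suffices, though one can do slightly better). Putting these together yields exactly $\dtv(\mcn(m,s^2),\mcn(0,1)) \leq 2|1-s^2| + \sqrt{\pi/2}\,|m|$. The only genuinely nontrivial ingredient is the bound $\dtv(\mcn(0,\sigma_1^2),\mcn(0,\sigma_2^2)) \leq 2|1 - \sigma_1^2/\sigma_2^2|$-type estimate; I would prove it by locating the two points $\pm c$ where the densities are equal, writing the total variation as the probability mass outside $[-c,c]$ under one law minus outside under the other, and Taylor-expanding / bounding in $|s^2-1|$, checking the constant $2$ is valid for all $s>0$ (both $s \to 0$ and $s \to \infty$ need a quick separate sanity check since then $\dtv \to 1$ while $2|1-s^2| \to \infty$, so the bound is trivially true there and only the regime $s$ near $1$ is binding).
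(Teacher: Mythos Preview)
Your approach is correct and genuinely different from the paper's. The paper does not use the triangle inequality or any density-crossing argument; instead it applies Stein's method directly. For any Borel set $A$ the Stein equation gives $|\pr(W\in A)-\pr(Z\in A)|=|\mean[f'(W)-Wf(W)]|$ with $|f|\leq\sqrt{\pi/2}$ and $|f'|\leq 2$. Since $W\sim\mcn(m,s^2)$ satisfies $s^2\mean f'(W)=\mean(W-m)f(W)$, subtracting this identity inside the absolute value yields $|\mean[(1-s^2)f'(W)-mf(W)]|\leq 2|1-s^2|+\sqrt{\pi/2}\,|m|$ in one line. The constants $2$ and $\sqrt{\pi/2}$ fall out of the standard Stein-solution bounds rather than from any density computation.

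Your route---split as $\dtv(\mcn(m,s^2),\mcn(m,1))+\dtv(\mcn(m,1),\mcn(0,1))$ and bound each piece---is more elementary in that it avoids the Stein machinery entirely, and your mean-shift bound $\dtv(\mcn(m,1),\mcn(0,1))\leq |m|/\sqrt{2\pi}$ is both clean and sharper than the stated $\sqrt{\pi/2}\,|m|$. The cost is that the variance-change estimate $\dtv(\mcn(0,s^2),\mcn(0,1))\leq 2|1-s^2|$ becomes a genuine sublemma you have only sketched. It is true (trivial once $|1-s^2|\geq 1/2$, and for $s$ near $1$ your crossing-point plan works since $c\approx 1$ and $\dtv=\pr(c/s<|Z|<c)\leq\sqrt{2/\pi}\,c\,|1-1/s|$), but you would still owe a few lines of calculus to nail down the constant~$2$ uniformly. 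The paper's Stein argument sidesteps this entirely and explains why those particular constants appear.
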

\begin{proof}
According to \cite[(2.12) of Lemma~2.4]{Chen2011}, 
if $W$ is a random variable and $Z\sim \mcn(0,1)$, then 
for any Borel set $A$,
\ben{\label{eq:dtvstn}
|\pr(W\in A)-\pr(Z\in A)| = |\mean[ f'(W)-W f(W) ]|,
}
where $f=f_A$ satisfies $|f(z)|\leq \sqrt{\pi/2}$ and $|f'(z)| \leq 2$. Stein's lemma
(or direct computation) 
says that
if $W\sim \mcn(m,s^2)$, then for any bounded $f$ with bounded derivative we have
\be{
s^2 \mean f'(W) - \mean (W-m) f(W) = 0.  
}
Subtracting this inside of the absolute value of~\eq{eq:dtvstn}, we have 
\ba{
|\pr(W\in A)-\pr(Z\in A)| & = |\mean[ f'(W)(1-s^2) - m f(W) ]| \\
	&\leq  | 1-s^2| \mean |f'(W)| + |m| \, \mean |f(W)|,\\
	&\leq 2 | 1-s^2| + \sqrt{\pi/2} \, |m|,
}
where we used the bounds on $f,f'$. Since this holds uniformly 
in~$A$, the lemma follows.
\end{proof}

The following elementary lemma used in the proofs of Theorem~\ref{thm:mainproc} and Lemma~\ref{lem:removingnp} allows to control the maximal number of points of $\xi$ in a set by subdividing it into annuli. 
\begin{Lemma}\label{lem:annuluscount}
Let $\eta \subset \IR^2$ be a locally finite set and $T(R) = \max_{x \in \IR^2} \eta(\bar{B}(x,R))$. Then for any $x_0 \in \IR^2$, the annulus $B_k = \{ x \in \R^2 \colon R + (k-1) \sqrt{3} R < \| x-x_0 \| \leq R + k \sqrt{3} R\}$ contains no more than $\lceil 4 \pi k \rceil T(R)$ points of $\eta$. 
\end{Lemma}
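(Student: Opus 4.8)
The statement reduces at once to a covering estimate: if the annulus $B_k = \{x : r_{k-1} < \norm{x-x_0} \le r_k\}$, with $r_{k-1}=R+(k-1)\sqrt3\,R$ and $r_k = R+k\sqrt3\,R$, can be written as a union of $\lceil 4\pi k\rceil$ closed balls of radius $R$, then since $T(R)=\max_{x}\eta(\bar B(x,R))$ bounds the number of points of $\eta$ in each of them, we get $\eta(B_k)\le \lceil 4\pi k\rceil\,T(R)$. So the whole job is to cover $B_k$ by that many balls of radius $R$.

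I would use a single ring of equally spaced balls. Let $\bar r_k = (r_{k-1}+r_k)/2 = R + (k-\tfrac12)\sqrt3\,R$ be the mid-radius of $B_k$, fix the ring radius $\rho_0 = \sqrt{\bar r_k^2+R^2/4}$, set $N=\lceil 4\pi k\rceil$, and take the balls centered at $x_0+\rho_0 u_j$, where $u_j$ is the unit vector at angle $2\pi j/N$. The geometric core is to check that one such ball contains the closed polar ``box'' $\{x : r_{k-1}\le \norm{x-x_0}\le r_k,\ \text{angle of } x-x_0 \text{ within } \phi \text{ of that of } u_0\}$ whenever $\phi\le\phi_{\max} := \arctan\!\big(R/(2\bar r_k)\big)$. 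In polar coordinates $(\rho,\theta)$ about $x_0$ (with $u_0$ at $\theta=0$) the squared distance to the center is $\rho^2+\rho_0^2-2\rho\rho_0\cos\theta$, which is increasing in $|\theta|$ and convex in $\rho$, so on the box its maximum is at $|\theta|=\phi$, $\rho\in\{r_{k-1},r_k\}$. The point of this $\rho_0$ is that $\rho_0\cos\phi_{\max}=\bar r_k$: at $|\theta|=\phi_{\max}$ the squared distance collapses to $(\rho-\bar r_k)^2 + R^2/4$, whose maximum over $\rho\in[r_{k-1},r_k]$ is $(\sqrt3\,R/2)^2 + R^2/4 = R^2$. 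By monotonicity in $|\theta|$, a box of half-angle $\phi\le\phi_{\max}$ therefore lies inside the ball, so the $N$ boxes of half-angle $\pi/N$ tile the annulus and are covered by the $N$ balls once $\pi/N\le\phi_{\max}$, i.e. $N\ge\pi/\phi_{\max}$.

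It then remains to check $\lceil 4\pi k\rceil \ge \pi/\phi_{\max}$, for which it suffices that $\phi_{\max}\ge \tfrac1{4k}$, equivalently $\tan\tfrac1{4k}\le R/(2\bar r_k) = 1/\big(2+(2k-1)\sqrt3\big)$. Using that $t\mapsto (\tan t)/t$ increases on $(0,\pi/2)$ gives $\tan\tfrac1{4k}\le \tfrac1k\tan\tfrac14$, so it is enough that $\big(2+(2k-1)\sqrt3\big)\tan\tfrac14\le k$; this is linear in $k$ with positive coefficient of $k$, hence holds for all integers $k\ge1$ provided it holds at $k=1$, which after rearranging is exactly $\tan\tfrac14\le 2-\sqrt3$. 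Since $\tan\tfrac{\pi}{12}=2-\sqrt3$, this is just $\tfrac14\le\tfrac{\pi}{12}$, i.e. $\pi\ge3$.

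The main obstacle is precisely this concluding numerical squeeze: it is genuinely tight at $k=1$, where the ring produces exactly $12$ balls against the budget $\lceil 4\pi\rceil = 13$. The convenient shortcuts therefore fail there — centering the ring at the mid-radius $\bar r_k$ rather than at $\rho_0$ needs about $15$ balls when $k=1$, and replacing $\arctan y$ by the lower bound $y/(1+y^2)$ also just misses — so one is forced to use the balanced radius $\rho_0=\sqrt{\bar r_k^2+R^2/4}$ and the exact value $\arctan(2-\sqrt3)=\pi/12$. Everything else, namely the polar-coordinate maximization and the monotonicity of $(\tan t)/t$, is routine.
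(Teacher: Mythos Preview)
Your proof is correct and follows essentially the same approach as the paper: the same ring of balls centered at radius $\rho_0=\sqrt{\bar r_k^2+R^2/4}$ with angular spacing $2\phi_{\max}=2\arctan(R/(2\bar r_k))$, and the same reduction of the count inequality to the $k=1$ case via $\tan(\pi/12)=2-\sqrt3$ (i.e., $\pi\ge 3$). The only cosmetic differences are that you spell out the polar-coordinate verification of the covering, which the paper simply asserts, and you derive $\phi_{\max}\ge 1/(4k)$ from monotonicity of $(\tan t)/t$, whereas the paper uses the dual statement that $\arctan(y)/y$ is decreasing (concavity of $\arctan$ with $\arctan(0)=0$).
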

\begin{proof}
We may clearly set $x_0=0$ without loss of generality. Note that the annulus $B = \{ x \in \R^2 \colon r_0 - \frac{\sqrt{3}}{2} R < \| x \| \leq r_0 + \frac{\sqrt{3}}{2} R\}$ can be completely covered by $\bigl\lceil 2\pi \big/ \psi \bigr\rceil$ closed balls of radius $R$ with centers placed at $(\tilde r \cos(m \psi), (\tilde r \sin(m \psi))$, $0 \leq m \leq \bigl\lceil 2\pi \big/ \psi \bigr\rceil-1$, where $\tilde r = (r_0^2 + (R/2)^2)^{1/2}$ and $\psi = 2 \arctan(R/(2r_0))$. Therefore $B_k$ can be covered with $\bigl\lceil 2\pi \big/ \psi \bigr\rceil$ closed $R$-balls, where $r_0 = R+(k-\frac{1}{2})\sqrt{3}R$. By the fact that $\arctan$ is concave on $[0,\infty)$ with $\arctan(0)=0$, we obtain for any $a,b>0$ that
\begin{equation*}
  \arctan((a+bk)^{-1}) \geq (a+bk)^{-1} \frac{\arctan((a+b)^{-1})}{(a+b)^{-1}} \geq \arctan((a+b)^{-1}) k^{-1},
\end{equation*}
and hence
\begin{equation*}
  \psi = 2 \arctan \Bigl( \frac{R}{2(1+(k-1/2)\sqrt{3})R} \Bigr) \geq \frac{1}{2k}
\end{equation*}
for any $k \geq 1$. Thus $B_k$ can be covered with $\lceil 4\pi k \rceil$ closed $R$-balls, and therefore cannot contain more than $\lceil 4\pi k \rceil T(R)$ points of~$\eta$. 
\end{proof}

\subsection{Random transmitter positions}
\label{ssec:randomtransmitter}

Say now the transmitter positions form a point process $\Xi = \{X_i \colon i \in \mcI^{\Xi} \} = \sum_{i\in\mcI^{\Xi}} \delta_{X_i}$
(by the identification at the beginning of Section~\ref{ssec:model}) as described in Setup~\ref{setup}. Suppose that in addition to the mean measure $\lambda$ the second factorial moment measure $\lambda_{[2]}$ exists, meaning that
\ba{
  \lambda(A) &= \mean \Xi(A) < \infty \quad \text{for every bounded Borel set $A \subset \IR^2$}\\
  \lambda_{[2]}(B) &= \mean \Xi^{[2]}(B) < \infty \quad \text{for every bounded Borel set $B \subset \IR^4$},
}
where $\Xi^{[2]} = \sum_{i \neq j} \delta_{(X_i,X_j)}$.

For a Poisson process with mean measure $\lambda$ it is easily checked that $\lambda_{[2]} = \lambda \otimes \lambda$; see Example~9.5(d) in \cite[Section~9.5]{Daley2008}. For a hard core process with minimal distance $\eps_*$, the maximal number of points that can lie in a fixed bounded set is bounded, so the second factorial moment measure always exists; it is readily checked that
\ben{
\lambda_{[2]}(\{(x,y) \in \IR^2\times \IR^2 \colon \norm{y-x} < \eps_*\}) = 0.  
}

A direct consequence of the definition of these moment measures is that for general~$\Xi$ and non-negative measurable functions $h_1$, $h_2$,
\ban{
\mean \int_{\IR^2} h_1(x)  \Xi (dx)&=\int_{\IR^2} h_1(x) \; \lambda(dx), \label{eq:camp1}\\
\mean \int_{\IR^2\times \IR^2} h_2(x, y) \Xi(dx) \Xi(dy)&=\int_{\IR^2\times \IR^2}  h_2(x,y) \; \lambda_{[2]}(dx \times dy) +\int_{\IR^2} h_2(x,x) \; \lambda(dx);\label{eq:camp2}
}
see for example \cite[Section~9.5]{Daley2008}. The first expression~\eq{eq:camp1} is sometimes
called Campbell's theorem.

Denote now by $N=N\t{\Xi,\sigma}$ and $\Pi$ 
the corresponding processes based on $\Xi$ instead of the fixed $\xi$, 
where the signal strengths $\{S_x, x \in \R^2\}$ are assumed to be independent of~$\Xi$. 
Note that $N(t)=\int_{\IR^2}  \I[g(x)/S_x \leq t] \, \Xi(dx)$ and
by independence and~\eq{eq:camp1},
\ben{
  M(t) = \mean N(t) = \mean \int_{\IR^2} \pr(g(x)/S_x \leq t \mvert \Xi) \; \Xi(dx) = \int_{\IR^2} \pr(g(x)/S_x \leq t) \; \lambda(dx).  \label{eq:meanofrandom}
}
Theorem~\ref{thm:meanlim} implies that $M(t) = L(t) = \kappa \pi t^{2/\beta} / K^2$, which does not depend on $\sigma$.

\begin{Theorem}\label{thm:randnumpd2}
Let $t>0$ and $\sigma,\Xi$ and $N := N^{(\Xi,\sigma)}$ be defined as in Setup~\ref{setup}. Use $\Pi$ to denote the Poisson process on $\Rplus$ that has the same mean measure $M$ as $N$. Denote the conditional mean measure of $N$ given $\Xi$ by $M^{\Xi}$.
Let $R>0$ be large enough to apply {\rm P2}.
Fix positive constants $C\geq R$, $d \leq C$ 
such that
$b := b(d) =\frac{1}{\sigma} \log ( h(d)/t )+ \frac{\sigma}{\beta}>0$.
Also set
$B_C=\frac{1}{\sigma} \log ( h(C)/t )+ \frac{\sigma}{\beta}$
and $\delta(\eps)$ to be the  uniform positive definite
constant of {\rm P1} for $\eps>0$. 
Furthermore for any $T^*>0$, 
set
\be{
F_{T^*,\eps}:=F(R,T^*,\eps)=\frac{1}{\delta(\eps)} (4\pi+1) T^* \biggl( \tilde\varrho^2(R) + \frac{1}{\sqrt{3}R^2} \int_R^\infty s \tilde\varrho^2(s) \; ds \biggr),
}
and assume that $B_C > 1$, and $B_C^2 F \leq 1$.

\begin{enumerate}
\item[(i)] Let $\Xi$ be a homogeneous Poisson process with intensity~$\kappa$, assume that $T^*\geq 16\kappa R^2$, choose further constants
$\eps_0,\eps_C > 0$, and set $F := F_{T^*,\eps_C}$. Then
\ba{
\dtwo(&\law(N\vert_t),\law(\Pi\vert_t)\\
&\leq 2 \kappa \int_{\norm{x}>C}\pr(t S\geq g(x)) \; dx+\mean\left|M^\Xi(t)-M(t)\right| 
+\E \int_0^t \bigl|M^\Xi(s)-M(s)\bigr| \; ds \\
&\qquad 
 + M(t) \biggl[ (\kappa \pi R^2+1) \pr(t S\geq h(d)) + 5\kappa \pi  \Bigl( \eps_0^2 + R^2 e^{-b^2(1-\tilde\varrho(\eps_0))/4} \Bigr) \biggr] \\
&\qquad {} + (t+1)M(t)\left[\frac{8 (B_C+\sigma^{-1}) \sqrt{F}}{\sqrt{1-F^2}}+
	(1+b^{-2}) \sqrt{F} e^{-b^2(F^{-1}-1)/2} \right] \\
 &\qquad
 +\kappa\pi d^2
+ (C/R+1)^2 \exp\left\{-T^*\left(\log\left(\frac{T^*}{16\kappa R^2}\right)-1\right)-16\kappa R^2\right\}\\
&\qquad 
  +(2 C/\eps_C+1)^2 (4\kappa  \eps_C^2)^2.
}

\item[(ii)] Let $\Xi$ be an intensity $\kappa$ second order stationary hard core process with distance $\eps_{*}>0$, fix $T^* :=  4 \bigl(\frac{R+\eps_*/2}{\eps_*} \bigr)^2$, and set $F := F_{T^*,\eps_*}$.
Then, 
\ba{
\dtwo(&\law(N\vert_t),\law(\Pi\vert_t)\\
&\leq 2 \kappa \int_{\norm{x}>C}\pr(t S\geq g(x)) \; dx+\mean\left|M^\Xi(t)-M(t)\right| 
+\E \int_0^t \bigl|M^\Xi(s)-M(s)\bigr| \; ds \\
&\qquad 
 + M(t) T^* \biggl[  \pr(tS \geq h(d))+ 5   e^{-b^2(1-\tilde\varrho(\eps_*))/4} \biggr] +\kappa\pi d^2 \\
&\qquad {} + (t+1)M(t)\left[\frac{8 (B_C+\sigma^{-1}) \sqrt{F}}{\sqrt{1-F^2}}+
	(1+b^{-2}) \sqrt{F} e^{-b^2(F^{-1}-1)/2} \right].
}
\end{enumerate}
\end{Theorem}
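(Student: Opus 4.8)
The plan is to reduce both parts to the computation already performed for deterministic transmitters in the proof of Theorem~\ref{thm:mainproc}, by conditioning on $\Xi$ and then integrating over it, paying three genuinely new error contributions: a truncation near the origin (unlike $d_*$ in the deterministic case there is no almost sure lower bound on $\min_i\norm{X_i}$), a discrepancy between the conditional mean measure $M^\Xi$ and the unconditional $M$ that $\Pi$ carries, and --- for part~(i) only --- control of ``bad'' clustering configurations of $\Xi$.

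First I would truncate the transmitters to $\bar B(0,C)$ exactly as in \eqref{eq:truncerr_proc}--\eqref{eq:dircoup2}, writing $N_C$, $\Pi_C$ for the truncated processes and $M^\Xi_C$, $M_C$ for the corresponding conditional and unconditional mean measures, and using Campbell's theorem~\eqref{eq:camp1} with $\lambda(dx)=\kappa\,dx$ to evaluate the coupling errors; this produces the first summand $2\kappa\int_{\norm{x}>C}\pr(tS\ge g(x))\,dx$ and lets one later interchange $M_C\leftrightarrow M$ and $M^\Xi_C\leftrightarrow M^\Xi$ at the cost of the same quantity. Conditionally on $\Xi$, the set $\Xi\cap\bar B(0,C)$ is a deterministic finite configuration and Setup~\ref{setup} applies verbatim, so I would invoke Theorem~\ref{thmppp} with $A_i=\{j:\norm{X_j-X_i}\le R\}$ and $\mcF_i=\sigma\bigl(\Xi,(Z_{X_j})_{j\in A_i^c}\bigr)$ --- legitimate since the Gaussian field is independent of $\Xi$ --- and estimate the resulting four contributions term by term just as in the proof of Theorem~\ref{thm:mainproc}, with Lemma~\ref{lem:removingnp} again supplying the bound $s_i^2\le F$ on the conditional variances. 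Taking $\mean$ over $\Xi$ then turns the conditional pair-sums (the first two terms) into integrals against the moment measures $\lambda$ and $\lambda_{[2]}$ via~\eqref{eq:camp1}--\eqref{eq:camp2}. To handle the origin I would run all these estimates on the event $\{\Xi(\mathring B(0,d))=0\}$, on which $g(X_i)\ge h(d)$ and hence $\pr(tS_{X_i}\ge g(X_i))\le\pr(tS\ge h(d))$; its complement has probability at most $\mean\Xi(\mathring B(0,d))=\kappa\pi d^2$, and off it the conditional $\dtwo$-bound is capped at $1$.

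At this point the two parts diverge in how $T_C(R)$, $\eps_C$ and $\delta_C$ get bounded. In part~(ii) the hard core distance gives $\eps_C\ge\eps_*$ and, by a packing argument, $\Xi(\bar B(x,R))\le T^*=4\bigl(\tfrac{R+\eps_*/2}{\eps_*}\bigr)^2$ for every $x$, both \emph{deterministically}, so one simply substitutes $T^*$ for $T_C(R)$ and $\delta(\eps_*)$ for $\delta_C$ everywhere. In part~(i) these substitutions are instead valid on a good event $G$: covering $\bar B(0,C)$ by $(C/R+1)^2$ balls of radius $2R$, each carrying on average at most $4\pi\kappa R^2\le16\kappa R^2$ points, $G$ requires that none of them carries more than $T^*$ points (so Lemma~\ref{lem:removingnp} applies with $T(R)=T^*$) and that no two points of $\Xi$ lie within $\eps_C$ (so Property~P1 applies with $\delta(\eps_C)$); the close-pair contribution of the second term is additionally split at distance $\eps_0$, and since $\lambda_{[2]}=\lambda\otimes\lambda=\kappa^2\,dx\,dy$ the first two terms then yield the factors $\kappa\pi R^2$ and $\kappa\pi\eps_0^2$. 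The probability of $G^c$ is controlled by a union bound: $(C/R+1)^2$ Chernoff tails $\exp\{-T^*(\log(T^*/(16\kappa R^2))-1)-16\kappa R^2\}$ for the ball counts (valid since $T^*\ge16\kappa R^2$), plus $(2C/\eps_C+1)^2$ second-moment bounds $\pr(\Xi(B)\ge2)\le\tfrac12\lambda_{[2]}(B\times B)\le(4\kappa\eps_C^2)^2$ over suitable cells $B$; on $G^c$ the conditional $\dtwo$-bound is again capped at $1$, contributing these two terms.

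Finally I would absorb the gap between $M^\Xi$ and $M$ by the triangle inequality: $\dtwo\bigl(\law(N\vert_t),\law(\Pi\vert_t)\bigr)$ is at most $\mean_{\Xi}\,\dtwo\bigl(\law(N_C\vert_t\mid\Xi),\pop(M^\Xi_C\vert_t)\bigr)$ plus $\mean_{\Xi}\,\dtwo\bigl(\pop(M^\Xi_C\vert_t),\pop(M_C\vert_t)\bigr)$ plus the $C$-truncation term, the first step being sub-additivity of $\dtwo$ under mixing (immediate from the dual form~\eqref{eq:dtwoassup}). The second expectation is bounded by the standard coupling of two Poisson processes on $[0,t]$ with non-decreasing mean-measure functions --- match the common mass and pair the excess points at distance $\le1$ --- which produces exactly $\mean|M^\Xi(t)-M(t)|+\mean\int_0^t|M^\Xi(s)-M(s)|\,ds$ once $M_C,M^\Xi_C$ are traded back for $M,M^\Xi$. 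The main obstacle is organisational rather than conceptual: in part~(i) one must choose the auxiliary radii $d$ and $\eps_0$, the covering of $\bar B(0,C)$, and the threshold $T^*$ so that the Chernoff and second-moment estimates assemble into the stated closed form without disturbing the term-by-term bounds imported from Theorem~\ref{thm:mainproc}; part~(ii) is genuinely easier, as the analogue of $G$ has probability one and the bound is essentially that of Theorem~\ref{thm:mainproc} with the deterministic substitutions and one extra $d$-truncation term.
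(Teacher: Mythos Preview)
Your proposal is correct and follows essentially the same route as the paper: condition on $\Xi$, introduce the Cox process $\Pi^{\Xi}$ and bound $\dtwo(\law(\Pi^{\Xi}\vert_t),\law(\Pi\vert_t))$ via the standard two-Poisson coupling (Lemma~\ref{lem:ppp}), truncate to $\bar B(0,C)$, restrict to good events $\mathcal{D},\mathcal{E},\mathcal{T}$ controlling the distance to the origin, the minimal interpoint distance and the maximal $R$-ball count, and on those events run the estimates of Theorem~\ref{thm:mainproc} with $T^*$ replacing $T_C(R)$. The only cosmetic difference is ordering: the paper introduces $\Pi^{\Xi}$ with the \emph{untruncated} $M^{\Xi}$ before truncating, so no back-and-forth between $M^{\Xi}_C$ and $M^{\Xi}$ is needed; and the paper covers $\bar B(0,C)$ by $4r\times4r$ squares rather than radius-$2R$ balls, which is where the constant $16\kappa R^2$ (the square's area times $\kappa$) actually arises.
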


We defer the discussion about convergence as $\sigma \to \infty$ until after the proof; see Theorem~\ref{thm:randprocconv} below.
\begin{proof}
For the initial estimates the nature of the process $\Xi$ (Poisson or hard core) is not important. 

Note that~\eq{eq:dtwoassup} implies $\dtwo$ can be written as a supremum of absolute differences of expectations over a class of functions $\mcF$, so that
for any point processes $A$ and $B$,
\ba{
  \dtwo(\law(A), \law(B)) &= \sup_{f \in \mcf} \, \bigl| \mean f(A) - \mean f(B) \bigr| \\
  &= \sup_{f \in \mcf} \, \bigl| \mean \bigl( \mean (f(A) \mvert \Xi) - \mean (f(B) \mvert \Xi) \bigr) \bigr| \\
  &\leq  \mean \sup_{f \in \mcf} \, \bigl| \mean (f(A) \mvert \Xi) - \mean (f(B) \mvert \Xi) \bigr| \\
  &= \mean \dtwo(\law(A\mvert \Xi), \law(B\mvert \Xi)).
}
 
Denote by $\Pi^{\Xi}$ the point process which conditionally on $\Xi$ is a Poisson process with (conditional) mean measure $M^{\Xi}$. Thus, $\Pi^{\Xi}$ is a so-called Cox process, a mixture of Poisson processes. 

Split up the original distance as 
\ban{
  \dtwo(\law(N\vert_t), \law(\Pi\vert_t)) \leq \dtwo(\law(N\vert_t), \law(\Pi^{\Xi}\vert_t)) + \dtwo(\law(\Pi^{\Xi}\vert_t), \law(\Pi \vert_t))
}

We upper bound the second term by conditioning on $\Xi$ and 
using Lemma~\ref{lem:ppp} in Appendix~\ref{sec:proofppp} below, which bounds the $\dtwo$ distance between two Poisson point processes
on $[0,t]$. Thus
\ban{
\dtwo(\law(\Pi^\Xi\vert_t), \law(\Pi\vert_t)) &\leq \mean \dtwo(\law(\Pi^\Xi\vert_t \mvert \Xi), \law(\Pi\vert_t)) \notag\\
&\leq \mean\int_0^t \bigl|M^\Xi(s)-M(s)\bigr| \; ds + \mean \bigl|M^\Xi(t)-M(t)\bigr|. \label{eq:poissoncoxerror}
}

Conditioning the first term also on $\Xi$ yields
\be{
 \dtwo(\law(N\vert_t), \law(\Pi^{\Xi}\vert_t)) \leq \mean \dtwo(\law(N\vert_t \mvert \Xi), \law(\Pi^{\Xi}\vert_t \mvert \Xi)).
}
The term inside the mean corresponds precisely to the term bounded in Theorem~\ref{thm:mainproc}, except that not every transmitter configuration $\Xi$ satisfies all of the minimal distance and maximal point count requirements imposed for the deterministic configuration $\xi$ in Theorem~\ref{thm:mainproc}. Following the proof of that theorem we truncate the configurations to $\bar{B}(0,C)$, which did not use any conditions. 
Set $\Xi_C=\Xi\vert_{\bar{B}(0,C)}$, $N_C=N^{(\Xi_C,\sigma)}$,
and $\Pi_C^\Xi=\Pi^{\Xi_C}$.
Since
\ben{
   \mean \int_{\norm{x}>C} \pr(t S \geq g(x)) \; \Xi(dx) = \int_{\norm{x}>C}\pr(t S\geq g(x)) \; \lambda(dx)
   \label{eq:cutofferror}
}
in the same way as \eqref{eq:meanofrandom}, we obtain
\ban{\mean \dtwo(\law(N\vert_t \mvert \Xi), \law(\Pi^{\Xi}\vert_t \mvert \Xi)) &\leq
2 \int_{\norm{x}>C}\pr(t S\geq g(x)) \; \lambda(dx) \notag\\
&\hspace*{20mm} {} + \mean \dtwo \bigl( \law(N_C\vert_t \mvert \Xi), \law(\Pi_C^\Xi\vert_t \mvert \Xi) \bigr).
}
We may now apply Theorem~\ref{thmppp} to the term inside the mean
on the good event that the transmitters are not too close to each other or the origin
and then bound the probability that the good event does not occur.

To this end, let
$D_{\Xi_C}=\min_{x\in\Xi_C} \norm{x}$,
$E_{\Xi_C}= \inf_{x,y \in \Xi_C, x\neq y} \norm{x-y}$, 
and
$T_{\Xi_C}(R)=\max_{x\in\Xi_C} \Xi_C(\bar B(x,R))$.
For the positive constants $d, T^*$, and setting $\eps = \eps_C$ in the Poisson case and $\eps = \eps_*$ in the hard core case, define the events 
\be{
\mathcal{D}=\{D_{\Xi_C}\geq d\}, \hspace{0.4cm}\mathcal{E}=\{E_{\Xi_C}\geq \eps\},\hspace{0.4cm} \mathcal{T}=\{T_{\Xi_C}(R) \leq T^*\}.
}
We first bound
\ba{
\mean \dtwo&\bigl( \law(N_C\vert_t \mvert \Xi), \law(\Pi_C^\Xi\vert_t \mvert \Xi) \bigr) \\
&\leq \mean \Bigl( \dtwo\bigl( \law(N_C\vert_t \mvert \Xi), \law(\Pi_C^\Xi\vert_t \mvert \Xi) \bigr) \hbit \I[\mathcal{D}\cap\mathcal{T}\cap\mathcal{E}] \Bigr) +\pr(\mathcal{D}^c)+\pr(\mathcal{E}^c)+\pr(\mathcal{T}^c).
}

Note that
\ben{
  \pr(\mathcal{D}^c)= \pr(\Xi({\mathring B}(0,d)) > 0) \leq \mean(\Xi({\mathring B}(0,d))) = \kappa \pi d^2; \label{eq:dcomp}   
}
the remaining two probabilities are bounded further below, because we have to distinguish whether $\Xi$ is Poisson or hard core.

For the $\dtwo$ term, conditional on $\Xi$ 
and under the good event, 
we apply Theorem~\ref{thmppp} with $A_i=\{j \in \mcI^{\Xi}:\norm{X_i-X_j}\leq R\}$
 and bound the four terms in~\eqref{eq:pppdtwot1} and \eqref{eq:pppdtwo} in the analogous way as in Theorem~\ref{thm:mainproc} taking always the $1$ in the minimum in the prefactor. 

For the first and second terms of~\eq{eq:pppdtwot1} we have to distinguish whether $\Xi$ is Poisson or hard core, so we postpone this part to further below. For the third term of~\eq{eq:pppdtwot1} and
the term~\eq{eq:pppdtwo}, we obtain immediately from~\eqref{eq:totbound3rd} and \eqref{eq:intwdpout} (noting that the $M(t) = M\t{\xi,\sigma}(t)$ in those formulae are integrals with respect to $\xi$)
\ban{
  (t+1) \int_{\IR^2} \pr(Z > b_x) \; \Xi(dx) \hbit \biggl[ \frac{8 (B_C + \sigma^{-1}) \sqrt{F} }{\sqrt{1-F^2}}
  + (1+b^{-2})\sqrt{F} e^{-b^2(F^{-1}-1)/2} \biggr], \label{eq:lasttermerror_noexp}
}
and after taking expectations this yields the required bound.

It remains to bound the first and second term of~\eq{eq:pppdtwot1}
 and also $\pr(\mathcal{E}^c)$ and $\pr(\mathcal{T}^c)$.
\bigskip

\noindent\textit{(i) \hbit The Poisson process case.}
\smallskip

\noindent
For the first term in~\eq{eq:pppdtwot1}, we obtain
\ba{
 \I[\mathcal{D}\cap\mathcal{E}\cap\mathcal{T}]&\int_{\IR^2} \int_{\bar{B}(x,R)} \pr(t S\geq g(y))\pr(t S\geq g(x)) \; \Xi_C(dy) \, \Xi_C(dx) \notag \\
&\qquad \qquad \leq  \pr(tS \geq h(d)) \int_{\R^2} \int_{\bar{B}(x,R)}\pr(tS \geq g(x)) \; \Xi(dy) \, \Xi(dx).
}
After taking expectation, using~\eq{eq:camp2}, this yields an upper bound of
\ban{
\pr(tS \geq h(d)) &\biggl(\int_{\R^2} \int_{\bar{B}(x,R)}  \pr(tS \geq g(x)) \kappa^2 \; dy \, dx + \int_{\IR^2} \pr(tS \geq g(x)) \kappa \; dx  \biggr) \notag  \\[1mm]
&\hspace*{-7mm}=(\kappa \pi R^2+1) M(t) \pr(tS \geq h(d)).  \label{eq:firsttermerrorppp}
}

For the second term in \eqref{eq:pppdtwot1}, we first note that the arguments in the proof of Theorem~\ref{thm:mainproc} from \eqref{35} to \eqref{337} imply that, for $x,y \in \R^2$ with $\norm{y} \geq \norm{x} \geq d$ (so that $b_y \geq b_x \geq b > 0$),
\ben{
 \pr(t S_{x}\geq g(x), t S_{y}\geq g(y))\leq \frac{5}{2} \exp\left\{-\frac{b^2(1-\rho(x,y))}{4}\right\} \pr(t S\geq g(x)), \label{eq:jointbd}
}
where $b = b(d) = \frac{1}{\sigma} \log \bigl( \tfrac{h(d)}{t} \bigr) + \frac{\sigma}{\beta}$. Thus, similarly to the argument starting from~\eqref{333},
\ba{
 &\I[\mathcal{D}\cap\mathcal{E}\cap\mathcal{T}] \int_{\IR^2} \int_{\bar B(x,R) \setminus \{x\}} \pr(t S_{x}\geq g(x), t S_{y}\geq g(y)) \; \Xi_C(dy) \, \Xi_C(dx) \notag \\
&\leq 2 \hbit \I[\mathcal{D}] \int_{\IR^2}\mathop{\int_{\bar B(x,R) \setminus \{x\}}}_{\norm{y}\geq \norm{x}} \pr(t S_{x}\geq g(x), t S_{y}\geq g(y)) \; \Xi(dy) \, \Xi(dx) \notag \\
&\leq 5 \int_{\IR^2} \int_{\bar B(x,R) \setminus \{x\}} \exp\left\{-\frac{b^2(1-\rho(x,y))}{4}\right\} \pr(t S\geq g(x)) \; \Xi(dy) \, \Xi(dx) \notag \\
&\leq 5 \biggl( \int_{\IR^2} \int_{\bar B(x,R) \setminus \bar B(x,\eps_0)} \exp\left\{-\frac{b^2(1-\tilde \varrho(\eps_0))}{4}\right\} \pr(t S\geq g(x)) \; \Xi(dy) \, \Xi(dx) \notag \\
&\qquad\qquad{} + \int_{\IR^2} \int_{\bar B(x,\eps_0)\setminus \{x\}} \pr(t S\geq g(x)) \; \Xi(dy) \, \Xi(dx) \biggr).
}
After taking expectations using~\eq{eq:camp2}, this yields as an upper bound of the second term
\ban{
5 \biggl( &\int_{\IR^2} \int_{\bar B(x,R) \setminus \bar B(x,\eps_0)} \exp\left\{-\frac{b^2(1-\tilde \varrho(\eps_0))}{4}\right\} \pr(t S\geq g(x)) \kappa^2 \; dy \, dx \notag \\
&\qquad {} + \int_{\IR^2} \int_{\bar B(x,\eps_0)} \pr(t S\geq g(x)) \kappa^2 \; dy \, dx \biggr) \notag \\
&\leq 5 \kappa \pi R^2 M(t) e^{-b^2(1-\tilde\varrho(\eps_0))/4} + 5 \kappa \pi \eps_0^2 M(t). \label{eq:secondtermerror}
}

For the remaining terms, note that
\ba{
\pr(\mathcal{E}^c)&=\pr(E_{\Xi_C} < \eps_C)=\pr(T_{\Xi_C}(\eps_C/2)\geq 2), \\ 
\pr(\mathcal{T}^c)&=  \pr(T_{\Xi_C}(R)> T^*). 
}
In general, if $T_{\Xi_C}(r)\geq M$ then at least one of the set of  $(2 \lceil C/(2r) \rceil-1)^2$ squares 
with side length $4r$ and centers
at the points $\{(2ri, 2rj): i,j=-\lceil C/(2r) \rceil + 1, \ldots, \lceil C/(2r) \rceil - 1 \}$ 
must have at least $M$ points in it. This is because for any disc $D$ of radius $r$ the set $D \cap {\bar B(0,C)}$ is completely contained in one of the squares in the set (since it intersects at most four of the $2r \times 2r$ squares formed by overlap which must have a common corner). 
Thus if $\square$ denotes the $4r\times 4r$ square with center at the origin, then
\be{
\pr(T_{\Xi_C}(r)\geq M)\leq (C/r+1)^2 \pr(\Xi_C(\square)\geq  M).
} 
For $\mathcal{E}^c$, we then have
\ban{
\pr(T_{\Xi_C}(\eps_C/2)\geq 2)&\leq  (2 C/\eps_C+1)^2 (1-\exp\{-4 \kappa \eps_C^2\}(1+4 \kappa \eps_C^2))  \notag\\
	&\leq (2 C/\eps_C+1)^2 (4\kappa  \eps_C^2)^2. \label{eq:ecomp}
	}
For $\mathcal{T}^c$, let $X_\mu$ denote a Poisson variable with mean $\mu$. Then
since $T^*\geq 16\kappa R^2$,
\ban{
 \pr(T_{\Xi_C}(R)> T^*)&\leq  (C/R+1)^2 \pr(X_{16 \kappa R^2} \geq T^*) \notag\\
 	&\leq (C/R+1)^2 \exp\left\{-T^*\left(\log\left(\frac{T^*}{16\kappa R^2}\right)-1\right)-16\kappa R^2\right\}, \label{eq:tcomp} 
}
where we have used the Poisson Chernoff bound, for $s\geq \mu$,
\be{
\pr(X_\mu\geq s)\leq \exp\{-s(\log(s/\mu)-1)\}-\mu\}. 
}

Collecting \eqref{eq:poissoncoxerror}, \eqref{eq:cutofferror}, \eqref{eq:firsttermerrorppp}, \eqref{eq:secondtermerror}, the expectation of \eqref{eq:lasttermerror_noexp}, as well as \eqref{eq:dcomp}, \eqref{eq:ecomp}, and \eqref{eq:tcomp} yields the required upper bound.
\bigskip

\noindent\textit{(ii) \hbit The hard core process case.}
\smallskip

\noindent
Note first that the maximum number of points of $\Xi$ contained in any ball of radius $R$ can be bounded by the maximum number of non-overlapping balls of radius $\eps_*/2$ that can be placed into a ball of radius $R+\eps_*/2$. By comparing the areas of the balls we obtain
\ben{
  T_{\Xi}(R)=\max_{x\in\Xi} \Xi(\bar B(x,R)) \leq 4 \biggl(\frac{R+\eps_*/2}{\eps_*} \biggr)^2 = T^*,
  \label{eq:maxnopoints}
}

For the first term in~\eq{eq:pppdtwot1}, we argue similarly to
the Poisson process case to find, after taking expectation, an upper bound of 
\ban{
\mean \int_{\R^2}& \int_{\bar{B}(x,R)} \pr(tS \geq h(d)) \pr(tS \geq g(x)) \; \Xi(dy) \, \Xi(dx) \notag \\
&= \pr(tS \geq h(d)) \, \mean \int_{\R^2} \Xi(\bar{B}(x,R)) \pr(tS \geq g(x)) \; \Xi(dx)\notag\\
&\leq \pr(tS \geq h(d)) \, T^* \hbit \mean \int_{\R^2} \pr(tS \geq g(x)) \; \Xi(dx)\notag\\
&=M(t) \hbit T^* \hbit \pr(tS \geq h(d)).   \label{eq:firsttermerror}
}

For the second term in \eqref{eq:pppdtwot1}, we obtain in the analogous way as for the Poisson process case 
\ban{
 &\I[\mathcal{D}\cap\mathcal{E}\cap\mathcal{T}] \int_{\IR^2} \int_{\bar B(x,R) \setminus \{x\}} \pr(t S_{x}\geq g(x), t S_{y}\geq g(y)) \; \Xi_C(dy) \, \Xi_C(dx) \notag \\
&\leq 2 \hbit \I[\mathcal{D}] \int_{\IR^2}\mathop{\int_{\bar B(x,R) \setminus \{x\}}}_{\norm{y}\geq \norm{x}} \pr(t S_{x}\geq g(x), t S_{y}\geq g(y)) \; \Xi(dy) \, \Xi(dx) \notag \\
&\leq 5 \int_{\IR^2} \int_{\bar B(x,R) \setminus \{x\}} \exp\left\{-\frac{b^2(1-\tilde{\varrho}(\eps_*))}{4}\right\} \pr(t S\geq g(x)) \; \Xi(dy) \, \Xi(dx) \notag\\[1mm]
&\leq 5 M(t) \hbit T^* e^{-b^2(1-\tilde{\varrho}(\eps_*))/4}, \label{eq:secondtermerrorhc}
}
where we used \eqref{eq:maxnopoints} again for the last inequality.

For the remaining terms, note that
\ba{
\pr(\mathcal{E}^c)&\leq \pr(E_{\Xi} < \eps_*) = 0, \\ 
\pr(\mathcal{T}^c)&\leq \pr(T_{\Xi}(R) > T^*) = 0. 
}

Collecting \eqref{eq:poissoncoxerror}, \eqref{eq:cutofferror}, \eqref{eq:firsttermerror}, \eqref{eq:secondtermerrorhc}, the expectation of \eqref{eq:lasttermerror_noexp}, as well as \eqref{eq:dcomp} yields the required upper bound. \qedhere
\end{proof}

In order to use Theorem~\ref{thm:randnumpd2} to show a convergence result for the hard core process, we
assume a mild ``$\mathbf{B}^+_2$-mixing'' condition which we now describe.

For a second-order stationary point process with
intensity~$\kappa$, \cite[Proposition~12.6.III]{Daley2008} yields that there exists a ``reduced'' measure $\breve{\lambda}_{[2]}$ on $\IR^2$ satisfying
\ben{
  \int_{\IR^2 \times \IR^2} h(x,y) \; \lambda_{[2]}(dx \times dy) = \int_{\IR^2} \int_{\IR^2} h(x,x+y) \hbit \kappa \; dx \; \breve{\lambda}_{[2]}(dy)
  \label{eq:reduction}
}
for every measurable function $h \colon \IR^2 \times \IR^2 \to \Rplus$ (note that we normalize differently than \cite{Daley2008}). As can be seen from \cite[Proposition~13.2.VI]{Daley2008} and the discussion following it, we may interpret $\breve{\lambda}_{[2]}(B)$ as the expected number of \emph{other} transmitters in the set $x+B$ given there is a transmitter at $x$.

We assume that there exists a (necessarily locally finite) signed measure $\breve{\gamma}_{[2]}$ on $\IR^2$, called the \emph{reduced covariance measure}, which satisfies
\ben{
  \breve{\gamma}_{[2]}(B) = \breve{\lambda}_{[2]}(B) - \lambda(B)  \label{eq:defbgamma}
}
for any bounded Borel set $B \subset \IR^2$. Furthermore we assume that $\breve{\gamma}_{[2]}$ is $[-\infty,\infty)$-valued, which is the same as saying that the \emph{positive variation} $\breve{\gamma}^+_{[2]}$ (the positive part in the Jordan decomposition of $\breve{\gamma}_{[2]}$) is \emph{finite}. 
We refer to the existence of a reduced covariance measure of finite positive variation as \emph{$\mathbf{B}^+_2$-mixing}, with regard to the frequently used stronger concept of $\mathbf{B}_2$-mixing, which means that $\breve{\gamma}_{[2]}$ (or equivalently its variation $|\breve{\gamma}_{[2]}|$)
is finite; see e.g.\ \cite{Heinrich2013}.

Second order stationary hard core point processes with $\mathbf{B}_2$- (hence also $\mathbf{B}_2^+$-) mixing form a rich family which covers most reasonable models of transmitter configurations with
minimal distance.
If we thin a $\mathbf{B}_2$-mixing process according to a stationary (and typically dependent) thinning procedure given by a random field $\{\pi_x, x \in \IR_2\}$ of retention probabilites satisfying $\int_{\IR^2} \cov(\pi_0,\pi_x) \, dx < \infty$, then the thinned process is still $\mathbf{B}_2$-mixing; see \cite[Example~4, Section~6]{Heinrich2013}. Likewise, if we start from a $\mathbf{B}_2$-mixing process of cluster centers and replace each center $c$ by a random number of i.i.d.\ points distributed according to a fixed distribution that is shifted to $c$, then the result is $\mathbf{B}_2$-mixing again if the distribution of the cluster sizes has second moment; see \cite[Example~4.1]{Heinrich2013a}. The possibility of using either or both of these operations on any of the known stationary $\mathbf{B}_2$-mixing processes, such as the homogeneous Poisson process, any stationary log-Gaussian Cox process with integrable covariance function of the underlying random field (seen from \cite[Proposition~5.4]{Moller2004}), or any stationary determinantal point process \cite{Biscio2016}, gives us a wide choice of possible models.
As long as the last thinning step leaves the model with a hard core, item (ii) of the theorem below is applicable. Arguably the easiest natural way to introduce a hard core is by the Mat\'ern ``type II'' thinning procedure described in \cite[Section~5.4]{Stoyan1987}: The points are equipped with i.i.d.\ marks that can be interpreted as ``virtual birth (or construction) times'' of the points; the thinned process consists then of those points of the original process that are older than any other points within a pre-specified hard core distance $\eps_*>0$. 

To understand both $\mathbf{B}_2$- and $\mathbf{B}^{+}_2$-mixing intuitively,
note that for disjoint bounded Borel sets $A,B \subset \IR^2$, we have by \eqref{eq:reduction} and the translation invariance of Lebesgue measure
\be{
  \cov(\Xi(A),\Xi(B)) = \lambda_{[2]}(A \times B) - \kappa^2 |A| \hbit |B| = \int_{\IR^2} \int_{\IR^2} \I[x \in A] \hbit \I[x+y \in B] \hbit \kappa \; dx \; \breve{\gamma}_{[2]}(dy).
}
Thus, for any bounded Borel set $A \subset \IR^2$ with $\diam(A) := \sup_{x,y \in A} \norm{x-y}$ and any (not necessarily bounded) Borel sets $B_n \subset {\bar B(0,n)}^c$, $n \in \NN$, we have as $n \to \infty$
\be{
  \bigl| \cov(\Xi(A),\Xi(B_n)) \bigr| \leq \kappa \cdot |A| \cdot |\breve{\gamma}_{[2]}|({\bar B(0,n-\diam(A))}^c) \to 0
}
under $\mathbf{B}_2$-mixing since $\tvnorm{\breve{\gamma}_{[2]}} = |\breve{\gamma}_{[2]}|(\IR^2) < \infty$, and
\be{
  \bigl(\cov(\Xi(A),\Xi(B_n))\bigr)^{+} \leq \kappa \cdot |A| \cdot \breve{\gamma}^{+}_{[2]}({\bar B(0,n-\diam(A))}^c) \to 0
}
under $\mathbf{B}^{+}_2$-mixing since $\breve{\gamma}^{+}_{[2]}(\IR^2) < \infty$.

We state our convergence result.
\begin{Theorem}\label{thm:randprocconv}
Let $t>0$ and $\sigma,\Xi$ and $N := N^{\Xi,\sigma}$ be defined as in Setup~\ref{setup}. Use $\Pi$ to denote the Poisson process on $\Rplus$ that has the same mean measure $M$ as $N$. Denote the conditional
mean measure of $N$ given $\Xi$ by $M^{\Xi}$.
If either
\begin{enumerate}[label=(\roman*)]
\item $\Xi$ is a homogeneous Poisson process with intensity $\kappa>0$, and for some $c>0$, 
we may take for the uniform positive definite
constant of {\rm P1}: $\delta=\delta(\eps)=\Omega(\eps^c)$ as $\eps\to0$, and for some $a>0$,
 we have that  $\tilde\varrho(r)=\textrm{O}(e^{-ar})$ as $r\to\infty$, or
 \item $\Xi$ is a $\mathbf{B}^{+}_2$-mixing second order stationary hard core process with distance $\eps_{*}>0$ and intensity $\kappa>0$,
 and 
$\tilde\varrho(r)=\textrm{O}(r^{-(1+a)})$ for some $a>0$,
 \end{enumerate}
  then as $\sigma\to\infty$,
 \be{
 \dtwo(\law(N\vert_t), \law(\Pi\vert_t)) \to 0.
 }
\end{Theorem}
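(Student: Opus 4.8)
The plan is to apply the quantitative bounds of Theorem~\ref{thm:randnumpd2} with the free parameters $C,d,R,T^{*}$ (and, in the Poisson case~(i), also $\eps_0,\eps_C$) chosen as functions of $\sigma$, in the same spirit as the proof of Theorem~\ref{thm:fixprocconv}, and to check that every summand on the right-hand side tends to $0$ as $\sigma\to\infty$. In both cases put $C=\exp\{\sigma^2/\beta^2+\sigma^{1.1}\}$ and $d=d(\sigma)=1/\sigma$; then $\kappa\pi d^2\to 0$, $b=b(d)=\tfrac{1}{\sigma}\log(h(d)/t)+\tfrac{\sigma}{\beta}\sim\sigma/\beta\to\infty$, and $B_C=\tfrac{1}{\sigma}\log(h(C)/t)+\tfrac{\sigma}{\beta}=\Theta(\sigma)$, so the standing requirements $b>0$, $B_C>1$ hold for large $\sigma$ and every term carrying a factor $\pr(tS\geq h(d))\leq e^{-b^2/2}$ or $e^{-b^2(1-\tilde\varrho(\eps))/4}$ (with $\eps$ bounded away from $0$) decays faster than any fixed power of $\sigma$. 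The leading cutoff term $2\kappa\int_{\norm{x}>C}\pr(tS\geq g(x))\,dx=4\pi\kappa\int_C^\infty\pr(tS\geq h(r))\,r\,dr$ tends to $0$ by exactly the Mills-ratio bound~\eqref{eq:gaussmill} and the substitution used for the analogous term in the proof of Theorem~\ref{thm:fixprocconv}.

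The terms involving $F$, and in case~(i) the probabilities $\pr(\mathcal{E}^c)$ and $\pr(\mathcal{T}^c)$, are treated separately in the two settings. In the hard core case~(ii), $T^{*}=4((R+\eps_*/2)/\eps_*)^2=\Theta(R^2)$ and $\delta(\eps_*)$ is a fixed positive constant, so $\tilde\varrho(r)=\mathrm{O}(r^{-(1+a)})$ gives $F=F_{T^{*},\eps_*}=\mathrm{O}(R^{-2a})$; the choice $R=\sigma^{2/a}$ then gives $B_C\sqrt F=\mathrm{O}(\sigma^{-1})\to 0$ and $B_C^2F\to 0$ (so $B_C^2F\leq1$ holds eventually), while $e^{-b^2(1-\tilde\varrho(\eps_*))/4}\to 0$ since $1-\tilde\varrho(\eps_*)>0$, and $\pr(\mathcal{E}^c)=\pr(\mathcal{T}^c)=0$ because the hard core forces $\mathcal{E}$ and $\mathcal{T}$. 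In the Poisson case~(i) one must in addition drive $\pr(\mathcal{E}^c)\leq(2C/\eps_C+1)^2(4\kappa\eps_C^2)^2$ and $\pr(\mathcal{T}^c)\leq(C/R+1)^2\exp\{-T^{*}(\log(T^{*}/(16\kappa R^2))-1)-16\kappa R^2\}$ to $0$; since $C$ is exponentially large in $\sigma^2$, the first forces $\eps_C$ exponentially small in $\sigma^2$ (take $\eps_C=e^{-2\sigma^2}$), hence $1/\delta(\eps_C)=\mathrm{O}(\eps_C^{-c})$ exponentially large in $\sigma^2$, so that $F=F_{T^{*},\eps_C}$ can be made small only if $\tilde\varrho^2(R)$ decays fast enough to absorb this blow-up — precisely the role of the exponential-decay hypothesis $\tilde\varrho(r)=\mathrm{O}(e^{-ar})$ in~(i), which also forces $R$ to grow faster than any power of $\sigma$: with $T^{*}=\lceil16e\kappa R^2\rceil$ and $R=\sigma^3$ one readily verifies $F=\mathrm{O}(e^{2c\sigma^2}\sigma^6e^{-2a\sigma^3})\to 0$, $\pr(\mathcal{T}^c)=\mathrm{O}(e^{2\sigma^2/\beta^2+2\sigma^{1.1}-16\kappa\sigma^6})\to 0$, $\pr(\mathcal{E}^c)=\mathrm{O}(e^{2\sigma^2/\beta^2+2\sigma^{1.1}-4\sigma^2})\to 0$, and $B_C\sqrt F,B_C^2F\to 0$ as before. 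In the Poisson case $\eps_0=\eps_0(\sigma)$ is taken to decrease to $0$ slowly enough that $\eps_0^2+R^2e^{-b^2(1-\tilde\varrho(\eps_0))/4}\to 0$ (possible since $b^2$ dominates $\log R$, $R$ being only polynomial in $\sigma$, and $1-\tilde\varrho(\eps)>0$ for $\eps>0$); with all these choices the close-point contribution $M(t)\bigl[(\kappa\pi R^2+1)\pr(tS\geq h(d))+5\kappa\pi(\eps_0^2+R^2e^{-b^2(1-\tilde\varrho(\eps_0))/4})\bigr]$ (resp.\ its hard core analogue $M(t)T^{*}[\pr(tS\geq h(d))+5e^{-b^2(1-\tilde\varrho(\eps_*))/4}]$), the common remainder $(t+1)M(t)\bigl[\tfrac{8(B_C+\sigma^{-1})\sqrt F}{\sqrt{1-F^2}}+(1+b^{-2})\sqrt F e^{-b^2(F^{-1}-1)/2}\bigr]$, and $\kappa\pi d^2$ all tend to $0$.

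The genuinely new ingredient — and the only place the $\mathbf{B}^{+}_2$-mixing assumption of~(ii) enters — is the control of the two mean-measure fluctuation terms $\mean|M^\Xi(t)-M(t)|$ and $\mean\int_0^t|M^\Xi(s)-M(s)|\,ds$. Writing $f_s:=p^{(\cdot)}(s)=\pr(g(\cdot)/S_{\cdot}\leq s)\geq 0$, independence of $\Xi$ and the shadowing field gives $M^\Xi(s)=\int_{\IR^2}f_s\,d\Xi$ and, by~\eqref{eq:meanofrandom}, $M(s)=\int_{\IR^2}f_s\,d\lambda$ (whence $\mean M^\Xi(s)=M(s)$). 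For a second-order stationary $\Xi$, combining~\eqref{eq:camp2}, the reduction~\eqref{eq:reduction} and the definition~\eqref{eq:defbgamma} of $\breve\gamma_{[2]}$ yields
\be{
  \var\Bigl(\int_{\IR^2}f_s\,d\Xi\Bigr)=\kappa\int_{\IR^2}f_s(x)^2\,dx+\kappa\int_{\IR^2}\Bigl(\int_{\IR^2}f_s(x)f_s(x+y)\,dx\Bigr)\,\breve\gamma_{[2]}(dy).
}
Since $f_s\geq 0$, the inner $x$-integral is nonnegative and at most $\int_{\IR^2}f_s(x)^2\,dx$ by Cauchy--Schwarz, so $\mathbf{B}^{+}_2$-mixing ($\breve\gamma^{+}_{[2]}(\IR^2)<\infty$) gives $\var(M^\Xi(s))\leq\kappa\bigl(\int_{\IR^2}f_s(x)^2\,dx\bigr)(1+\breve\gamma^{+}_{[2]}(\IR^2))$, and in the Poisson case $\breve\gamma_{[2]}=0$ gives the same with the bracket equal to $1$. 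A Mills-ratio estimate (cf.~\eqref{eq:gaussmill}) and the substitution used in the proof of Theorem~\ref{thm:fixprocconv} show $\sup_{0<s\leq t}\int_{\IR^2}(p^{(x)}(s))^2\,dx=\mathrm{O}(\sigma e^{-\sigma^2/\beta^2})\to 0$; hence by Jensen's inequality $\sup_{0<s\leq t}\mean|M^\Xi(s)-M(s)|\leq\sup_{0<s\leq t}\sqrt{\var(M^\Xi(s))}\to 0$, which kills $\mean|M^\Xi(t)-M(t)|$ and, after bounding $\mean\int_0^t|M^\Xi(s)-M(s)|\,ds\leq t\sup_{0<s\leq t}\mean|M^\Xi(s)-M(s)|$, the integral term too.

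Assembling all of this, every summand in the bound of Theorem~\ref{thm:randnumpd2}(i) (resp.~(ii)) tends to $0$, proving $\dtwo(\law(N\vert_t),\law(\Pi\vert_t))\to 0$. I expect the main obstacle to be not any individual estimate but the simultaneous calibration of the parameters in the Poisson case: one must choose $C,d,R,T^{*},\eps_0,\eps_C$ so that the exponentially large cutoff radius $C$, the exponentially small hard-core surrogate $\eps_C$ (together with the resulting exponential blow-up of $1/\delta(\eps_C)$), and the correlation-decay factor entering $F$ all balance against one another; it is exactly this interplay that pins down the exponential-decay hypothesis on $\tilde\varrho$ in part~(i) versus the merely polynomial one in part~(ii).
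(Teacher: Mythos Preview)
Your proposal is correct and follows the paper's strategy closely: apply Theorem~\ref{thm:randnumpd2}, choose the free parameters as functions of $\sigma$ (your choices differ only cosmetically from the paper's, e.g.\ $d=\sigma^{-1}$ versus $d=\sigma^{-2}$, $\eps_C=e^{-2\sigma^2}$ versus $\exp\{-\sigma^2/\beta^2-2\sigma^{1.1}\}$), and verify that every summand tends to zero. Two minor slips in exposition: the phrase ``forces $R$ to grow faster than any power of $\sigma$'' is inaccurate (your own choice $R=\sigma^3$ is polynomial), and the claimed order $\sigma e^{-\sigma^2/\beta^2}$ for $\int(p^{(x)}(s))^2\,dx$ is not quite right (the ball of radius $1/\sigma$ around the origin, where $p^{(x)}$ need not be small, contributes a term of order $\sigma^{-2}$), but neither affects the convergence.

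The one genuine difference is your treatment of $\var(M^\Xi(s))$ in the hard core case. The paper excises the near-diagonal set $D_\sigma$ using the hard core property, then splits $\tD_\sigma^c$ according to whether $\|x\|>1/\sigma$ or $\|x+y\|>1/\sigma$ and extracts a small factor $\pr(sS\geq h(1/\sigma))$ from one of $f_s(x),f_s(x+y)$; this yields a bound $2\breve\gamma_{[2]}^+(\IR^2)\,M(s)\,\pr(sS\geq h(1/\sigma))$ for the cross term. Your route is shorter: bound the inner convolution $\int f_s(x)f_s(x+y)\,dx\leq\int f_s^2$ by Cauchy--Schwarz and translation invariance, drop the (nonpositive) contribution from $\breve\gamma_{[2]}^-$, and arrive at $\var(M^\Xi(s))\leq\kappa(1+\breve\gamma_{[2]}^+(\IR^2))\int f_s^2$. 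This is cleaner and, notably, does not use the hard core property at all for the variance step---it works for any second-order stationary $\mathbf{B}_2^+$-mixing process. The paper's argument, on the other hand, produces a bound in terms of $M(s)$ rather than $\int f_s^2$, which is sometimes more convenient but here makes no practical difference.
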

\begin{proof}
We apply Theorem~\ref{thm:randnumpd2}. 
For both cases, set $C=\exp\{\sigma^2/\beta^2+\sigma^{1.1}\}$
so that, following the proof of Theorem~\ref{thm:fixprocconv}, the first term in the bound goes to zero. For $\sigma$ large enough, we can set $d =\sigma^{-2}$ and the condition that $b = b(d) > 0$ is satisfied. Note that $b = b(d) = \Theta(\sigma)$. Using that for any $\gamma>0$,
\ben{\label{eq:markbd}
\pr(tS\geq h(\gamma))\leq \frac{t^{1/\beta}\mean S^{1/\beta}}{K \gamma}=\frac{t^{1/\beta}e^{-\sigma^2/(2\beta^2)}}{K \gamma},
}
we find that for any $p>0$, if $R=\sigma^p$, then
\be{
R^2\pr(t S\geq h(d))\leq   \sigma^{2+2p} \frac{t^{1/\beta}e^{-\sigma^2/(2\beta^2)}}{K}\to 0.
}

For the Poisson case~$(i)$, set $\eps_C=\exp\{-\sigma^2/\beta^2-2\sigma^{1.1}\}$ so that $C\eps_C\to 0$
and set $T^*=16 e \kappa R^2$. Then note that the conditions on $\tilde\varrho$
imply $F=\textrm{O}(\eps_C^{-c} R^2 e^{-2 a R})$
 and set $R=\sigma^3$
so that $B_C\sqrt{F}\to 0$ (since $B_C=\textrm{O}(\sigma)$).
We further set $\eps_0:=\sigma^{-1}+\sup\{\eps\geq 0: 1-\tilde\varrho(\eps)\leq \sigma^{-1}\}$ which
 is well-defined and tends to zero as $\sigma\to\infty$ since $\tilde\varrho$
 is non-increasing to zero with $\tilde\varrho(r)=1$ if and only if $r=0$.

For the hard core case~$(ii)$, the tail conditions on $\tilde \varrho$ imply
that $F=\textrm{O}(R^{-2a})$, so choosing $R=\sigma^{2/a}$ ensures
$B_C\sqrt{F}\to 0$.

With these choices of parameters, it is straightforward to see that for both cases,
almost all of the terms 
from Theorem~\ref{thm:randnumpd2} tend to zero as $\sigma\to\infty$; the exceptions are $\mean\left|M^\Xi(t)-M(t)\right|$
and 
\ben{\label{eq:intbdmean}
\mean \int_0^t\bigl|M^\Xi(s)-M(s)\bigr| \; ds.
}

Note that by Cauchy--Schwarz $\mean| M^{\Xi}(s)-M(s)|\leq\sqrt{\var ( M^{\Xi}(s))}$ for $0\leq s \leq t$. We show $\var ( M^{\Xi}(s)) \to 0$ uniformly in $s \in [0,t]$ (and note the calculations below justify the finiteness of the second moment of $M^{\Xi}(s)$), which by applying  Fubini shows that~\eqref{eq:intbdmean} tends to zero.

For both the Poisson and the hard core case, use~\eq{eq:camp2} to find
\ban{
\var ( M^{\Xi}(s))&=\mean M^{\Xi}(s)^2 - M(s)^2 \notag\\
  &=\mean \int_{\IR^2\times \IR^2} \pr(sS\geq g(x)) \pr(sS\geq g(y)) \; \Xi(dx) \, \Xi(dy)-M(s)^2\notag\\
  \begin{split}\label{eq:varM}
  &=\int_{\IR^2} \pr(sS\geq g(x))^2 \lambda(dx)\\
  &\quad {} + \int_{\IR^2\times \IR^2} \pr(sS\geq g(x)) \pr(sS\geq g(y)) \; \lambda_{[2]}(dx \times dy)\\
  &\quad {} - \int_{\IR^2\times \IR^2} \pr(sS\geq g(x)) \pr(sS\geq g(y)) \; \lambda(dx) \, \lambda(dy). 
  \end{split}
}

In the Poisson case~$(i)$, we have $\lambda_{[2]} = \lambda \otimes \lambda$, so that
\ba{
\var ( M^{\Xi}(s))&=\int_{\IR^2} \pr(sS\geq g(x))^2 \; \lambda(dx) \\
	&\leq \int_{\norm{x}\leq \sigma^{-1}} \pr(sS\geq g(x))^2 \; \lambda(dx)
		+ \int_{\norm{x}> \sigma^{-1}} \pr(sS\geq g(x))^2 \; \lambda(dx)\\
	&\leq \kappa \pi \sigma^{-2}+ M(s) \hbit \pr(sS\geq h(\sigma^{-1}))\\
	&\leq \kappa \pi \sigma^{-2}+ M(s) \hbit K^{-1}s^{1/\beta}\sigma e^{-\sigma^2/(2\beta^2)} \\
	&\leq \kappa \pi \sigma^{-2}+ M(t) \hbit K^{-1}t^{1/\beta}\sigma e^{-\sigma^2/(2\beta^2)} 
	 \stackrel{\sigma\to\infty}{\longrightarrow} 0,
}
where in the second last inequality we use~\eq{eq:markbd}.

In the hard core case~$(ii)$, we note that the first term in \eqref{eq:varM} is the same as in the Poisson case and thus goes to zero uniformly in $s \in [0,t]$. By $\mathbf{B}^{+}_2$-mixing the positive variation $\breve{\gamma}^{+}_{[2]}$ of $\breve{\gamma}_{[2]}$ is a finite measure. Assume $\sigma > 2/\eps_*$ and note that $D_\sigma := \{(x,y) \in \IR^2 \times \IR^2 \colon \norm{x} \leq 1/\sigma, \norm{y} \leq 1/\sigma \} \subset \{(x,y) \in \IR^2 \times \IR^2 \colon \norm{y-x} < \eps_* \}$
on which $\lambda_{[2]}$ is zero. Write furthermore $\tD_{\sigma} := \{(x,y) \in \IR^2 \times \IR^2 \colon \norm{x} \leq 1/\sigma, \norm{x+y} \leq 1/\sigma \}$.

Using \eqref{eq:reduction} and the fact that $\lambda(dx) = \kappa \, dx$ is translation invariant, we obtain as an upper bound of the remaining terms of \eqref{eq:varM}, showing for the equality below first that it holds when integrating over $\tD_{\sigma}^c \cap \bigl(\IR^2 \times {\bar B}(0,n)\bigr)$ for arbitrary $n \in \NN$,
\ba{
  &\int_{\tD_{\sigma}^c} \pr(sS\geq g(x)) \pr(sS\geq g(x+y)) \; \lambda(dx) \, \breve{\lambda}_{[2]}(dy) \\
  &\hspace*{25mm} {} - \int_{\tD_{\sigma}^c} \pr(sS\geq g(x)) \pr(sS\geq g(x+y)) \; \lambda(dx) \, \lambda(dy) \\
  &\hspace*{10mm} = \int_{\tD_{\sigma}^c} \pr(sS\geq g(x)) \pr(sS\geq g(x+y)) \; \lambda(dx) \, \breve{\gamma}_{[2]}(dy) \\
  &\hspace*{10mm} \leq \int_{(x,y) \colon \norm{x} > 1/\sigma} \pr(sS\geq g(x)) \pr(sS\geq g(x+y)) \; \lambda(dx) \, \breve{\gamma}^{+}_{[2]}(dy) \\
  &\hspace*{25mm} {} + \int_{(x,y) \colon \norm{x+y} > 1/\sigma} \pr(sS\geq g(x)) \pr(sS\geq g(x+y)) \; \lambda(dx) \, \breve{\gamma}^{+}_{[2]}(dy) \\
  &\hspace*{10mm} \leq \pr(sS\geq h(1/\sigma)) \int_{\IR^2} \int_{\IR^2} \pr(sS\geq g(x+y)) \; \lambda(dx) \, \breve{\gamma}^{+}_{[2]}(dy) \\
  &\hspace*{25mm} {} + \pr(sS\geq h(1/\sigma)) \; \int_{\IR^2} \pr(sS\geq g(x)) \; \lambda(dx) \; \int_{\IR^2} \, \breve{\gamma}^{+}_{[2]}(dy) \\
  &\hspace*{10mm} \leq 2 \hbit \breve{\gamma}^{+}_{[2]}(\IR^2) \hbit M(s) \hbit \pr(sS\geq h(1/\sigma)) \\
  &\hspace*{10mm} \leq 2 \hbit \breve{\gamma}^{+}_{[2]}(\IR^2) \hbit M(t) \hbit K^{-1}t^{1/\beta}\sigma e^{-\sigma^2/(2\beta^2)} 
	 \stackrel{\sigma\to\infty}{\longrightarrow} 0,  
}
where the second last inequality uses again the translation invariance of $\lambda$.
\end{proof}

\begin{proof}[Proof of Theorem~\ref{thm:mainintro}]
The convergence results
Theorems~\ref{thm:fixprocconv} and~\ref{thm:randprocconv} combined with
 Lemma~\ref{lem:ppp} in Appendix~\ref{sec:proofppp}
and the mean convergence result Theorem~\ref{thm:meanlim} show that for each item, 
$N\vert_t$ converges in distribution
to a Poisson process with the appropriate mean measure. The process convergence follows from~\cite[Lemma~4.1]{Keeler2014}, read from \cite{Kallenberg1983}.
\end{proof}

\appendix

\section{Uniformly positive definite correlation functions}\label{sec:updfuncs}

We provide a number of tools for establishing uniform positive definiteness of correlation functions and give examples. For later reference, we consider correlation functions $\rho$ on general $d$-dimensional space here. Assume first that $\rho$ is continuous and stationary, i.e., $\rho(x,y)=\rho(x-y)$ depends only on the difference of its arguments $x,y \in \Rd$. By Bochner's Theorem there is then a unique symmetric probability measure $M$ on~$\Rd$ such that
\begin{equation*}
  \rho(z) = \int_{\Rd} e^{i x^{\top} z} \; M(dx) = \int_{\Rd} \cos(x^{\top} z) \; M(dx), \quad \text{$z \in \Rd$}.
\end{equation*}
Recall that the density $f$ (if it exists) of $M$ with respect to Lebesgue measure, is called the \emph{spectral density}. If $\rho$ is integrable, then $f$ can be obtained by the inverse Fourier transform
\begin{equation}  \label{eq:specdens}
  f(x) = \frac{1}{(2\pi)^{d}} \int_{\Rd} e^{-i x^{\top} z} \rho(z) \; dz = \frac{1}{(2\pi)^{d}} \int_{\Rd} \cos(x^{\top} z) \rho(z) \; dz, \quad \text{$x \in \Rd$}.
\end{equation}

The following theorem gives sufficient conditions for uniform positive definiteness. 
\begin{Theorem} \label{thm:upd}
Let $\rho \colon \Rd \to \R$ be a correlation function (i.e., symmetric and positive semidefinite) that is continuous and integrable. Assume that $\rho$ has a spectral density $f$ that is bounded away from $0$ on every ball $\bar B(0,H)$, $H > 0$.
Then $\rho$ is uniformly positive definite. The parameter $\delta$ in the definition can be chosen as
\ben{  \label{eq:mineigen}
  \delta = \delta(\eps) = \frac{\pi^{d/2}}{2^{d+1}\Gamma(\frac{d}{2}+1)} H^d f_0(H)
}
for arbitrary 
\be{
  H \geq \frac{24}{\eps} \biggl(\frac{\pi \Gamma^2(d/2+1)}{9}\biggr)^{1/(d+1)},
}
where $f_0(H) = \inf_{x \in \bar B(0,2H)} f(x)$.
\end{Theorem}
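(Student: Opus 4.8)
The plan is to derive the uniform positive definiteness property (Property~P1 of Setup~\ref{setup}) — equivalently, a uniform lower bound $\delta$ on the smallest eigenvalue of every correlation matrix $\bigl(\rho(\tx_i-\tx_j)\bigr)_{i,j=1}^{n}$ with $\eps$-separated nodes — from an analytic ``large sieve''-type inequality for exponential sums, established via a standard nonnegative kernel construction with explicit constants. First I would use the spectral representation: since $\rho$ is continuous, integrable and positive semidefinite, Bochner's theorem gives $\rho(z)=\int_{\Rd}e^{\mathrm{i}x^{\top}z}f(x)\,dx$ with $f$ the spectral density from \eqref{eq:specdens}, which is symmetric and nonnegative. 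For $\tx_1,\dots,\tx_n\in\Rd$ with $\min_{i\neq j}\norm{\tx_i-\tx_j}\geq\eps$ and $v\in\R^n$, setting $g(x)=\sum_{j=1}^{n}v_j e^{\mathrm{i}x^{\top}\tx_j}$,
\[
  \sum_{i,j=1}^{n}v_iv_j\rho(\tx_i-\tx_j)=\int_{\Rd}|g(x)|^2 f(x)\,dx\ \geq\ f_0(H)\int_{\bar B(0,2H)}|g(x)|^2\,dx,
\]
for any $H>0$; here $f_0(H)=\inf_{\bar B(0,2H)}f>0$ by the hypothesis that $f$ is bounded away from $0$ on balls. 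It therefore suffices to bound $\int_{\bar B(0,2H)}|g|^2$ below by a multiple of $\norm{v}^2$, uniformly over $\eps$-separated frequency sets.

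Next I would insert a nonnegative minorant kernel. Let $\psi=|\bar B(0,H)|^{-1}\,\mathbf{1}_{\bar B(0,H)}*\mathbf{1}_{\bar B(0,H)}$, so that $0\leq\psi\leq 1$, $\psi$ vanishes outside $\bar B(0,2H)$, and its Fourier transform $\widehat\psi=|\bar B(0,H)|^{-1}\bigl(\widehat{\mathbf{1}_{\bar B(0,H)}}\bigr)^{2}$ is nonnegative with $\widehat\psi(0)=|\bar B(0,H)|=\pi^{d/2}H^d/\Gamma(d/2+1)$. Then $\int_{\bar B(0,2H)}|g|^2\geq\int_{\Rd}|g|^2\psi$, and expanding the exponential sum,
\[
  \int_{\Rd}|g(x)|^2\psi(x)\,dx=\widehat\psi(0)\norm{v}^2+\sum_{j\neq k}v_jv_k\,\widehat\psi(\tx_j-\tx_k)\ \geq\ \Bigl(\widehat\psi(0)-\max_j\sum_{k\neq j}\bigl|\widehat\psi(\tx_j-\tx_k)\bigr|\Bigr)\norm{v}^2,
\]
where the off-diagonal term is controlled using $|v_jv_k|\leq\tfrac12(v_j^2+v_k^2)$ and symmetry.

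The remaining — and main — step is to show that $\max_j\sum_{k\neq j}|\widehat\psi(\tx_j-\tx_k)|\leq\tfrac12\widehat\psi(0)$ once $H$ is large enough relative to $\eps$. For this I would use the classical Bessel-function decay of $\widehat{\mathbf{1}_{\bar B(0,H)}}$ to obtain an explicit radially decreasing envelope $|\widehat\psi(z)|\leq\Phi_H(\norm{z})$ whose tail behaves like $H^{-1}\norm{z}^{-(d+1)}$; since the $\tx_k$, $k\neq j$, lie at distance $\geq\eps$ from $\tx_j$ and are $\eps$-separated, a packing/integral-comparison argument then bounds $\sum_{k\neq j}|\widehat\psi(\tx_j-\tx_k)|$ by a dimensional constant times $\eps^{-d}\int_{\norm{w}\geq c\eps}\Phi_H(\norm{w})\,dw\asymp\eps^{-(d+1)}H^{-1}$. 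Since $\widehat\psi(0)\asymp H^d$, the required inequality holds as soon as $H^{d+1}\gtrsim\eps^{-(d+1)}$, i.e. $H\geq(\text{const}_d)/\eps$ with a $(d+1)$-st root in the constant, which is exactly the form of the stated admissible range $H\geq\tfrac{24}{\eps}\bigl(\pi\Gamma^2(d/2+1)/9\bigr)^{1/(d+1)}$. Plugging this back yields $\sum_{i,j}v_iv_j\rho(\tx_i-\tx_j)\geq\tfrac12|\bar B(0,H)|f_0(H)\norm{v}^2$, from which \eqref{eq:mineigen} follows once the numerical constant in the envelope/packing step is pinned down.

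I expect the only genuine difficulty to be the constant bookkeeping in this last step: one must make the Bessel-envelope constant (from explicit bounds on $J_{d/2}$) and the $\Rd$-packing constant fully explicit in order to justify the concrete $H$-threshold and the concrete value of $\delta$ in the statement. The reason the method cannot do better than a $(d+1)$-st-root dependence of the threshold on $1/\eps$ is structural: $\widehat\psi$ decays exactly at the borderline rate $\norm{z}^{-(d+1)}$, for which $\sum_{k}\norm{\tx_j-\tx_k}^{-(d+1)}$ over an $\eps$-separated set is comparable to the convergent — but only barely so — integral $\int_{c}^{\infty}r^{-2}\,dr$.
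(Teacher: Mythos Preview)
Your plan is sound and, in fact, is essentially the proof of the result the paper invokes as a black box. The paper's own argument is two lines: it observes that uniform positive definiteness amounts to a uniform lower bound on the smallest eigenvalue of $(\rho(\tx_i,\tx_j))_{i,j}$ and then cites \cite[Theorem~12.3]{Wendland2005} for exactly the bound~\eqref{eq:mineigen} under the stated $H$-threshold, noting only the Fourier normalisation $\widehat{\Phi}=(2\pi)^{d/2}f$. Your proposal reconstructs that theorem from scratch via the standard route (Bochner representation, nonnegative minorant supported in $\bar B(0,2H)$, Gershgorin-type control of the off-diagonal via Bessel decay and a packing argument), which is precisely the machinery behind Wendland's result.

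What each buys: the paper's approach is shorter and delegates the constant-tracking to a standard reference; yours is self-contained and makes transparent why the $(d+1)$-st root appears in the $H$-threshold. One caveat worth flagging: with your specific kernel $\psi=|\bar B(0,H)|^{-1}\mathbf{1}_{\bar B(0,H)}*\mathbf{1}_{\bar B(0,H)}$ the diagonal term is $\widehat\psi(0)=|\bar B(0,H)|$, so the naive endpoint of your argument gives $\delta=\tfrac12|\bar B(0,H)|f_0(H)$, which differs from~\eqref{eq:mineigen} by a factor $2^{d}$. Wendland's proof uses a different test function (built from cubes rather than balls), and the precise constants in both $\delta$ and the $H$-threshold are tied to that choice; so to recover the \emph{exact} constants stated you would either need to switch to Wendland's test function or accept a different (in fact larger) $\delta$ paired with a possibly different numerical threshold. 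This is exactly the ``constant bookkeeping'' you identify as the remaining work, and it is the only place your sketch is not yet pinned down.
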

\begin{proof}
Let $\eps > 0$, $n \in \IN$ and $\tx_1, \ldots, \tx_n \in \R^d$ with $\min_{i \neq j} \|\tx_i-\tx_j\| \geq \eps$. For all $v \in \R^n$ we have by spectral decomposition 
\be{
  \sum_{i,j=1}^{n} v_i v_j \rho(\tx_i,\tx_j) \geq \lambda_{\tx} \|v\|^{2},
}
where $\lambda_{\tx}$ is the smallest eigenvalue of the matrix $(\rho(\tx_i,\tx_j))_{1 \leq i,j \leq n}$. By \cite[Theorem 12.3]{Wendland2005} the $\delta$ given in~\eqref{eq:mineigen} is a (uniform) lower bound on $\lambda_{\tx}$. To see this note that by~\eqref{eq:specdens} the Fourier transform $\widehat{\Phi}$ in \cite{Wendland2005} is just $(2\pi)^{d/2} f$.
Since $\delta$ is positive by the condition on $f$, the claim is shown.
\end{proof}

As examples we give two very general classes of isotropic (i.e. rotationally invariant) correlation functions that can be shown to be u.p.d.\ by the above theorem. We refer to \cite[Sections~2.7 and~5.2]{HandbookSpat2010} for more details on these and other examples.
\begin{Example}[The Mat\'{e}rn class of correlation functions and its boundaries]
This is arguably the most popular class in spatial statistics nowadays. We have
\begin{equation*}
  \rho(z) = \frac{1}{2^{\nu-1} \Gamma(\nu)} \biggl( \frac{\norm{z}}{\theta} \biggr)^{\nu} K_{\nu}(\norm{z}/\theta),
\end{equation*}
where $K_{\nu}$ is the modified Bessel function of the second kind, defined 
in \cite[(9.6.1)]{Abramowitz1964}.
Explicit formulae exist for $\nu \in \{k-\frac12 \colon k \in \NN\}$. For example, 
for $\nu = \frac{1}{2}$ we obtain the popular exponential correlation function, $\rho(z) = \exp(-\norm{z}/\theta)$.
The parameter $\nu > 0$ controls the regularity of the paths of a Gaussian random field with correlation function $\rho$, while $\theta>0$ controls the spatial scale on which there is substantial correlation. Various other parametrizations for the Mat\'{e}rn class exist. 
For general $\nu > 0$ the spectral density is given by
\begin{equation*}
  f(x) = \frac{\Gamma(\nu+d/2) \hbit \theta^d}{\Gamma(\nu) \hbit \pi^{d/2}} \hbit \bigl(1+\theta^2 \norm{x}^2\bigr)^{-(\nu+d/2)}.
\end{equation*}
So by Theorem~\ref{thm:upd} $\rho$ is uniformly positive definite and the factor $\delta$ can be computed explicitly. 
In particular note that $\delta(\eps)=\Omega(\eps^{2\nu})$ as $\eps\to0$.

Furthermore, according to \cite[9.7.2]{Abramowitz1964},
\be{
\lim_{r\to\infty} \frac{K_{\nu}(r)}{\sqrt{\frac{\pi}{2r}}e^{-r}}=1,
}
and so for our results above, we can take
\be{
\tilde\varrho(r)= \textrm{O}\left(r^{\nu-1/2} e^{-r/\theta}\right),
}
as $r\to\infty$. Thus Theorem~\ref{thm:mainintro} applies for these correlation functions.

Letting $\nu \to 0$, we obtain the so-called nugget correlation function $\rho(z) = 1\{z=0\}$, which is trivially u.p.d.\ with constant $\delta=1$.
In our setting, this corresponds to the case where the shadowing variables
are independent which is already understood.  From an applied point of view it is sometimes useful to replace $\theta$ by $\theta'/\sqrt{2\nu}$, which stabilizes $\rho$ as $\nu \to \infty$. Letting $\nu \to \infty$ in
this parameterization, we obtain the squared exponential (or Gaussian) correlation function $\rho(z) = \exp(-\norm{z}^2/(2{\theta'}^2))$ and we can set 
$\tilde\varrho(r)=\exp(-r^2/(2{\theta'}^2))$. The spectral density is given by
\begin{equation*}
  f(x) = \biggl( \frac{\theta'}{\sqrt{2\pi}} \biggr)^d \exp(-{\theta'}^2 \norm{x}^2/2)
\end{equation*}
and thus $\rho$ is u.p.d.\ again by Theorem~\ref{thm:upd} and $\delta$ 
can be taken to be  $\Theta(\eps^{-d} \exp\{-C_{\theta'}/\eps^2\})$ for some 
positive constant $C_{\theta'}$, which is not $\Omega(\eps^{c})$ as $\eps\to0$ for any $c>0$.
\end{Example}

\begin{Example}[Compactly supported polynomials of minimal degree]
  \cite{Wendland2005}, Section~9.3--9.4, studies isotropic positive semidefinite functions $\rho(z) = \rho_0(\norm{z})$, $z \in \Rd$, where $\rho_0 \colon \R \to \R$ is continuous, a polynomial on $[0,1]$, constant zero on $[1,\infty]$, and symmetrically extended to $\R$ (for the question of differentiability at zero). In particular the author constructs functions $\phi_{d,k}$ that satisfy the property that they have minimal degree among all $\rho_0$ of the above form that are in $C^{2k}(\R)$. See~\cite{Wendland2005}, Section~9.4, for concrete formulae.

Corollary~12.8 of \cite{Wendland2005} together with the considerations from the proof of Theorem~\ref{thm:upd} yield that for $k \in \NN$, $d \in \NN$ and for $k=0$, $d \geq 3$ the correlation function $\rho \colon \Rd \to \R$, given by
\begin{equation*}
  \rho(z) = \frac{\phi_{d,k}(\norm{z})}{\phi_{d,k}(0)},
\end{equation*}
is uniformly positive definite with $\delta = C_{d,k} \hbit \eps^{2k+1}$, where $C_{d.k}$ is a positive constant that can be computed in principle. Moreover, since the $\rho$ is only supported on a compact set, Theorem~\ref{thm:mainintro} applies for these correlation functions. Note also there is an analogous discussion and conclusion where the interval $[0,1]$ is replaced by  $[0,\theta]$ for  $\theta>0$.
\end{Example}

\section{Technical Poisson Convergence Results}\label{sec:proofppp}

\begin{proof}[Proof of Theorem~\ref{thmppp}]
Set $\mcx = [0,t]$ and recall that we denote by $\mfn$ the set of finite point configurations on $\mcx$ equipped with the usual $\sigma$-algebra $\mcn$. Let $\Upsilon \sim \pop(\lambda)$. 

Note that both metrics considered are of the form
\be{
  d \bigl( \law(\Psi \vert_t), \law(\Upsilon \vert_t)\bigr) = \sup_{f \in \mcf} \bigl| \E f(\Psi \vert_t) - \E f(\Upsilon \vert_t) \bigr|,
}
where $\mcf = \ftv = \{ 1_A \colon A \in \mcn\}$ is the set of measurable indicators if $d=\dtv$ and 
\be{
  \mcf = \fw = \bigl\{ \tilde{f} \colon \mfn \to \IR \hbit;\, | \tilde{f}(\psi)-\tilde{f}(\upsilon) | \leq \done(\psi,\upsilon) \text{ for all $\psi, \upsilon \in \mfn$} \bigr\}
}
is the set of $1$-Lipschitz-continuous functions with respect to the OSPA metric $\done$ if $d=\dtwo$.

We use Stein's method for Poisson process approximation as presented in \cite{Barbour1992a} (see e.g.\ the proof of Theorem 2.4 in that paper). For any function $f \in \mcf$ there is a function $h=h_f \colon \mfn \to \R$ so that we can equate $f(\cdot) - \E f(\Upsilont) = (\mathcal{A} h_f)(\cdot)$, where $\mathcal{A}$ is the generator of a spatial immigration-death process on $\mcx$ with immigration measure $\lambda\vert_t$ and unit per-capita death rate. 
Write
\ba{
  \| \Delta_1 h \| &= \sup_{\substack{x \in \mcx \\ \xi \in \mfn}} \, \bigl| h(\xi+\delta_x)-h(\xi) \bigr|, \\
  \| \Delta_2 h \| &= \sup_{\substack{x,y \in \mcx \\ \xi \in \mfn}} \, \bigl| h(\xi+\delta_x+\delta_y)-h(\xi+\delta_x)-h(\xi+\delta_y)+h(\xi) \bigr|.
}
By \cite[Lemma~2.2]{Barbour1992a} we obtain $\| \Delta_1 h \|, \| \Delta_2 h \| \leq 1$ if $f \in \ftv$, and by \cite[Proposition~4.1]{SX08} $\| \Delta_1 h \|, \| \Delta_2 h \| \leq \min\bigl(1,\frac{1+2\log^{+}(\lambda(t))}{\lambda(t)}\bigr) =: c(\lambda)$ if $f \in \fw$. The finiteness of these bounds justifies the integrability in the various statements below.

Noting that $\Psi \vert_t = \sum_{i \in \mcI} I_i \delta_{Y_i}$, we obtain
\ban{
  \E &f(\Psit) - \E f(\Upsilont) = \E \mathcal{A} h (\Psit) \nonumber\\[1mm]
  &= \E \int_0^t \bigl[ h(\Psit + \delta_s) - h(\Psit) \bigr]
    \; \lambda(ds) + \E \int_0^t  \bigl[ h(\Psit - \delta_s) - h(\Psit) \bigr] \; \Psit(ds)
    \nonumber\\[2mm]
  &= \E \sum_{i \in \mcI} \int_0^t \bigl[ h(\Psit + \delta_s) - h(\Psit) \bigr]
    \; \lambda_i(ds)
    + \E \sum_{i \in \mcI} I_i \bigl[ h(\Psit - \delta_{Y_i}) - h(\Psit) \bigr]
    \nonumber\\[2mm]
  &= \E \sum_{i \in \mcI} \int_0^t \Bigl( \bigl[ h(\Psit + \delta_s) - h(\Psit) \bigr]
    - \bigl[ h(\Psiti + \delta_s) - h(\Psiti) \bigr] \Bigr) \; \lambda_i(ds) \label{t1}\\
  &\hspace*{3mm} {} + \E \sum_{i \in \mcI} \int_0^t \bigl[ h(\Psiti + \delta_s) - h(\Psiti) \bigr] \; \lambda_i(ds) - \E \sum_{i \in \mcI} I_i \bigl[ h(\Psiti + \delta_{Y_i}) - h(\Psiti) \bigr] \label{t2}\\
  &\hspace*{3mm} {} + \E \sum_{i \in \mcI} \Bigl( I_i \bigl[ h(\Psiti + \delta_{Y_i}) - h(\Psiti) \bigr]
    - I_i \bigl[ h(\Psit) - h(\Psit - \delta_{Y_i}) \bigr] \Bigr), \label{t3}
}
where $\Psi^{(i)}\vert_t = \sum_{j \in A_i^c} I_i \delta_{Y_i}$.
We bound the absolute values of the three terms on the right hand side. 
For the absolute value of~\eqref{t1}, we obtain 
\ba{
  \biggl| \E \sum_{i \in \mcI} &\int_0^t \Bigl( \bigl[ h(\Psit + \delta_s) - h(\Psit) \bigr]
    - \bigl[ h(\Psiti + \delta_s) - h(\Psiti) \bigr] \Bigr) \; \lambda_i(ds) \biggr| \\
  &\leq \E \biggl( \sum_{i \in \mcI} \int_0^t \| \Delta_2 h \| \biggl( \sum_{j \in A_i} I_j \biggr) \; \lambda_i(ds) \biggr)
  = \| \Delta_2 h \| \sum_{i \in \mcI, j \in A_i} p_i p_j,
}
where the inequality follows by adding the points of $\Psit-\Psiti$ individually.
In much the same way, the absolute value of~\eqref{t3} is bounded as
\ba{
  \biggl| \E \sum_{i \in \mcI} &I_i \Bigl( \bigl[ h(\Psiti + \delta_{Y_i}) - h(\Psiti) \bigr]
    - I_i \bigl[ h(\Psit) - h(\Psit - \delta_{Y_i}) \bigr] \Bigr) \biggr| \\
  &\leq \E \biggl( \sum_{i \in \mcI} I_i \, \| \Delta_2 h \| \, \sum_{j \in A_i \setminus \{i\}} I_j \biggr)
  = \| \Delta_2 h \| \sum_{\substack{i \in \mcI, j \in A_i\\ i \neq j}} p_{ij}.
}

In order to estimate the absolute value of~\eqref{t2}, write $\lambda_i(\cdot \mvert \mcf_i)$ for the regular conditional distribution of $Y_i$ given $\mcf_i$. By disintegration (e.g.\ Theorem~6.4 in \cite{Kallenberg2002}), since $\Psiti$ is $\mcf_i$-measurable,
\ba{
  \E \Bigl( &I_i \bigl[ h(\Psiti + \delta_{Y_i}) - h(\Psiti) \bigr] \Bigr) \\
  &= \E \Bigl( \E \Bigl( \I[Y_i \leq t] \bigl[ h(\Psiti + \delta_{Y_i}) - h(\Psiti) \bigr] \Bigm| \mcf_i \Bigr) \Bigr) \\
  &= \E \int_0^\infty \I[s \leq t] \bigl[ h(\Psiti + \delta_s) - h(\Psiti) \bigr] \; \lambda_i(ds \mvert \mcf_i) \\
  &= \E \int_0^t \bigl[ h(\Psiti + \delta_s) - h(\Psiti) \bigr] \; \lambda_i(ds \mvert \mcf_i).
}

For any $\xi \in \mfn$ define $g_{\xi} \colon \mcx \to \R$ by $g_{\xi}(s) = h(\xi+\delta_s)-h(\xi)$, $s \in \mcx$. Note that $|g_{\xi}(s)| \leq \|\Delta_1 h\|$ for all $s \in \mcx$.
If $f \in \fw$ (and in fact for a somewhat larger class of functions) it was shown in~\cite{Schuhmacher2005a} in the argument that leads up to Inequality~(2.3) that $g_{\xi}$ is Lipschitz continuous with constant $\bigl(1,\frac{1+\log^{+}(\lambda(t))}{\lambda(t)}\bigr) \leq c(\lambda)$.

In summary we obtain for the absolute value of~\eqref{t2} and general $f \in \mcf$,
\ban{
  \biggl|\E \sum_{i \in \mcI} &\int_0^t \bigl[ h(\Psiti + \delta_s) - h(\Psiti) \bigr] \; \lambda_i(ds) - \E \sum_{i \in \mcI} I_i \bigl[ h(\Psiti + \delta_{Y_i}) - h(\Psiti) \bigr] \biggr| \notag\\
  &= \biggl| \E \sum_{i \in \mcI} \biggl( \int_0^t g_{\Psiti}(s) \; \lambda_i(ds) - \int_0^t g_{\Psiti}(s) \; \lambda_i(ds \mvert \mcf_i) \biggr) \biggr| \label{eq:t2cont}
}

If $\mcf=\ftv$, we simply have
\ba{
  \biggl| \int_0^t &g_{\Psiti}(s) \; \lambda_i(ds) - \int_0^t g_{\Psiti}(s) \; \lambda_i(ds \mvert \mcf_i) \biggr| \\
  &\leq \int_0^t \bigl| g_{\Psiti}(s) \bigr| \; \Bigl| \lambda_i(\cdot) - \lambda_i(\cdot \mvert \mcf_i) \Bigr| (ds)\\[1mm]
  &\leq \bigl\| \lambda_i\vert_t(\cdot) - \lambda_i\vert_t(\cdot \mvert \mcf_i) \bigr\|_{\mathrm{TV}},
}
where the long absolute value bars in the second line denote the variation of the signed measure.

If $\mcf=\fw$, we base the upper bound on a somewhat more general result. Let $g \colon [0,t] \to \R$ be a Lipschitz continuous function with constant $c_2$ and $\| g \|_{\infty} \leq c_1$. For arbitrary probability measures $\lambda_1$, $\lambda_2$ on $\R_{+}$ with c.d.f.s $F_1$, $F_2$, we let $(X,Y)$ be a quantile coupling of $\lambda_1$ and $\lambda_2$, i.e., let $U \sim U[0,1]$ and set $X = F_1^{-1}(U)$, $Y = F_2^{-1}(U)$, where $F_k^{-1}(u) = \inf \{x \in \R_{+} \colon F_k(x) \geq u\}$ is the generalized inverse.   
Assuming $F_1(t) \leq F_2(t)$ (otherwise switch $F_1$ and $F_2$), we have $\pr(X \leq t, Y > t) = 0$ and thus
\ban{
  \biggl| \int_0^t &g(s) \; \lambda_1(ds) - \int_0^t g(s) \; \lambda_2(ds) \biggr| \notag\\
  &= \bigl| \E \bigl( g(X) \I[X \leq t] - g(Y) \I[Y \leq t] \bigr) \bigr| \notag\\[1mm]
  &= \bigl| \E \bigl( (g(X) - g(Y)) \I[X \leq t,\, Y \leq t] \bigr)  - \E \bigl( g(Y) \I[X > t,\, Y \leq t] \bigr) \bigr| \notag\\[1mm]
  &\leq c_2 \hbit \E \bigl( |X-Y| \, \I[U \leq F_1(t)] \bigr) + c_1 \hbit \pr(F_1(t) < U \leq F_2(t)) \notag\\
  &\leq c_2 \int_0^t |F_1(s)-F_2(s)| \; ds + c_1 |F_1(t) - F_2(t)|, \label{eq:t2help}
}
where the last inequality holds because we see by comparing the areas between the quantile functions and the distribution functions and using $F_1(t) \leq F_2(t)$ that
\be{
  \int_0^{F_1(t)} \bigl| F_1^{-1}(u) - F_2^{-1}(u) \bigr| \; du \leq \int_0^{t} \bigl| F_1(s) - F_2(s) \bigr|  \, \; ds.
}
Applying Inequality~\eqref{eq:t2help} to the right hand side of \eqref{eq:t2cont} yields the last two summands of the upper bound claimed.
\end{proof}

\begin{Lemma}\label{lem:ppp}
If $\Upsilon_1, \Upsilon_2$ are two finite Poisson point processes on $[0,t]$ with intensity measures
defined by non-decreasing, right-continuous functions $\Lambda_1, \Lambda_2$,
such that $\Lambda_1(0)=\Lambda_2(0)=0$, then
\be{
\dtwo(\law(\Upsilon_1), \law(\Upsilon_2))\leq \int_0^t \bigl|\Lambda_1(s)-\Lambda_2(s)\bigr| \; ds
	+\bigl|\Lambda_1(t)-\Lambda_2(t)\bigr|.
}
\end{Lemma}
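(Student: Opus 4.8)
The plan is to construct an explicit coupling of $\Upsilon_1$ and $\Upsilon_2$ adapted to the order structure of $[0,t]$ and to bound $\mean \done(\Upsilon_1,\Upsilon_2)$ for it; since $\dtwo$ is the infimum of $\mean \done$ over all couplings, this yields the asserted inequality. By symmetry of both the bound and $\dtwo$ we may assume $m_1 := \Lambda_1(t) \leq m_2 := \Lambda_2(t)$, and if $m_2 = 0$ both processes are almost surely empty and there is nothing to show, so assume $m_2 > 0$. Let $P$ be a unit-rate Poisson process on $[0,\infty)$ and denote by $W_1,\dots,W_{N_2}$ its points in $[0,m_2]$, so $N_2 \sim \Pn(m_2)$. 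With $\Lambda_k^{-1}(u) := \inf\{s : \Lambda_k(s) \geq u\}$ the generalized inverse, set
\[
  \Upsilon_1 = \sum_{i : W_i \leq m_1} \delta_{\Lambda_1^{-1}(W_i)}, \qquad \Upsilon_2 = \sum_{i=1}^{N_2} \delta_{\Lambda_2^{-1}(W_i)}.
\]
By the mapping theorem together with the standard inverse-transform identity (the push-forward of Lebesgue measure on $[0,m_k]$ under $\Lambda_k^{-1}$ is the measure with c.d.f.\ $\Lambda_k$), $\Upsilon_1$ and $\Upsilon_2$ have the prescribed laws.

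Next I would estimate $\done(\Upsilon_1,\Upsilon_2)$ realization-wise using the (generally suboptimal) assignment that pairs $\Lambda_1^{-1}(W_i)$ with $\Lambda_2^{-1}(W_i)$ whenever $W_i \leq m_1$ and leaves the points with $W_i > m_1$ unpaired in $\Upsilon_2$. Since OSPA with $p=c=1$ is the average over the larger configuration of the truncated pairing distances, with unpaired points contributing $1$, this yields, writing $N_1 = \#\{i : W_i \leq m_1\} \leq N_2$,
\[
  \done(\Upsilon_1,\Upsilon_2) \leq \frac{1}{N_2} \sum_{i=1}^{N_2} Z_i \ \ (N_2 \geq 1), \qquad Z_i := \I[W_i \leq m_1]\bigl(\lvert \Lambda_1^{-1}(W_i) - \Lambda_2^{-1}(W_i)\rvert \wedge 1\bigr) + \I[W_i > m_1],
\]
and $\done(\Upsilon_1,\Upsilon_2) = 0$ when $N_2 = 0$. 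Conditioning on $N_2 = n \geq 1$, the points $W_1,\dots,W_n$ are i.i.d.\ uniform on $[0,m_2]$, so $\mean\bigl[\tfrac{1}{N_2}\sum_{i=1}^{N_2} Z_i \bigm| N_2 = n\bigr] = \mean Z_1$ with $W_1 \sim \mathrm{Unif}[0,m_2]$, a value not depending on $n$; hence $\mean \done(\Upsilon_1,\Upsilon_2) \leq \pr(N_2 \geq 1)\,\mean Z_1$. Using $\pr(N_2 \geq 1) = 1 - e^{-m_2} \leq m_2$ to cancel the $1/m_2$ appearing in $\mean Z_1$, this gives
\[
  \mean \done(\Upsilon_1,\Upsilon_2) \leq \int_0^{m_1} \bigl(\lvert \Lambda_1^{-1}(u) - \Lambda_2^{-1}(u) \rvert \wedge 1\bigr)\, du + (m_2 - m_1),
\]
and $m_2 - m_1 = \lvert \Lambda_1(t) - \Lambda_2(t)\rvert$ since $m_1 \leq m_2$.

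Finally I would discard the truncation and bound $\int_0^{m_1} \lvert \Lambda_1^{-1}(u) - \Lambda_2^{-1}(u)\rvert\, du \leq \int_0^t \lvert \Lambda_1(s) - \Lambda_2(s)\rvert\, ds$ by an ``area between the graphs'' comparison: using that $\Lambda_k^{-1}(u) > s$ iff $\Lambda_k(s) < u$ and Fubini, the left side equals $\int_0^t \bigl\lvert \{u \in [0,m_1] : \min_k \Lambda_k(s) < u \leq \max_k \Lambda_k(s)\}\bigr\rvert\, ds$, and the inner Lebesgue measure is at most $\lvert \Lambda_1(s) - \Lambda_2(s)\rvert$. (This is the same computation used right after \eqref{eq:t2help} in the proof of Theorem~\ref{thmppp}.) Combining the last displays proves the lemma. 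The only genuinely delicate point is reconciling the OSPA normalization by the \emph{random} larger cardinality with the mismatch $m_1 \neq m_2$; this is handled precisely by conditioning on $N_2$ and invoking $1 - e^{-m_2} \leq m_2$, which is the step I would write out in full detail.
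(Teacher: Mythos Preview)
Your proof is correct, but it takes a genuinely different route from the paper's. The paper does not construct an explicit coupling; instead it uses the Stein/generator representation already set up for Theorem~\ref{thmppp}: writing $\dtwo$ as a supremum over the Stein solutions $h\in\mathcal{H}$, it exploits that $\Upsilon_2$ is the stationary distribution of the spatial immigration--death process with immigration measure $\Lambda_2$, so that $\E\int_0^t[h(\Upsilon_2+\delta_s)-h(\Upsilon_2)]\,\Lambda_2(ds)+\E\int_0^t[h(\Upsilon_2-\delta_s)-h(\Upsilon_2)]\,\Upsilon_2(ds)=0$. This collapses the Stein expression to $\sup_h\bigl|\E\int_0^t g_{\Upsilon_2}(s)\,(\Lambda_1-\Lambda_2)(ds)\bigr|$ with $g_{\Upsilon_2}$ bounded by $1$ and $1$-Lipschitz, and the quantile-function inequality already proved around~\eqref{eq:t2help} finishes the job.

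Your argument, by contrast, is entirely elementary: you build both processes from a single unit-rate Poisson process via inverse transforms, pair points with the same underlying $W_i$, and average. The delicate normalization by the random $N_2$ is handled cleanly by conditioning and $1-e^{-m_2}\le m_2$. Both proofs terminate in the same ``area between the graphs'' step $\int_0^{m_1}|\Lambda_1^{-1}-\Lambda_2^{-1}|\,du\le\int_0^t|\Lambda_1-\Lambda_2|\,ds$. Your approach is self-contained and does not rely on any Stein machinery, while the paper's is shorter in context because it recycles bounds established earlier; either is a perfectly valid proof of the lemma.
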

\begin{proof}
Following the first part of the proof of Theorem~\ref{thmppp},
we have that for an appropriate family $\mathcal{H}$ of functions $h \colon \mfn \to \IR$,
\ba{
\dtwo(\law(&\Upsilon_1), \law(\Upsilon_2))\\
&= \sup_{h\in\mathcal{H}}\left|\mean\int_0^t[h(\Upsilon_2+\delta_s)-h(\Upsilon_2)] \; \Lambda_1(ds) 
+ \mean\int_0^t[h(\Upsilon_2-\delta_s)-h(\Upsilon_2)] \; \Upsilon_2(ds)\right|\\
 &=\sup_{h\in\mathcal{H}}\left|\mean\int_0^t[h(\Upsilon_2+\delta_s)-h(\Upsilon_2) ] \; \Lambda_1(ds) - \mean\int_0^t[h(\Upsilon_2+\delta_s)-h(\Upsilon_2)] \; \Lambda_2(ds)\right|,
}
where we have used that $\Upsilon_2$ has the stationary distribution
of a spatial immigration-death process with immigration measure $\Lambda_2$ and unit per-capita death rate. As in the proof of Theorem~\ref{thmppp}, for each $h\in\mathcal{H}$, we can write 
\be{
h(\Upsilon_2+\delta_s)-h(\Upsilon_2)=g_{\Upsilon_2}(s)=:g(s),
}
where the $g_{\Upsilon_2}$ is bounded by $1$ with Lipschitz constant $1$, uniform over $\Upsilon_2$.
Thus for $h\in\mathcal{H}$, denoting 
 generalized inverses 
by $\Lambda_i^{-1}$, assuming without loss that $\Lambda_2(t)\geq \Lambda_1(t)$,
and using standard results about the Lebesgue--Stieltjes integral,
we have
\ba{
&\left|\mean\int_0^t[h(\Upsilon_2+\delta_s)-h(\Upsilon_2) ] \; \Lambda_1(ds) - \mean\int_0^t[h(\Upsilon_2+\delta_s)-h(\Upsilon_2)] \; \Lambda_2(ds)\right| \\
&\qquad=
\left|\mean\left[\int_0^t g(s) \; \Lambda_1(ds) - \int_0^t g(s) \; \Lambda_2(ds)\right]\right|\\
&\qquad=\left|\mean\left[\int_0^{\Lambda_1(t)} g(\Lambda_1^{-1}(u)) \; du - \int_0^{\Lambda_2(t)} g(\Lambda_2^{-1}(u)) \; du\right]\right| \\
&\qquad \leq \left|\mean\left[\int_0^{\Lambda_1(t)} [g(\Lambda_1^{-1}(u))-g(\Lambda_2^{-1}(u))] \; du
\right]\right|+\left|\mean\left[\int_{\Lambda_1(t)}^{\Lambda_2(t)} g(\Lambda_2^{-1}(u)) \; du\right]\right| \\
&\qquad \leq \int_0^{\Lambda_1(t)} \bigl|\Lambda_1^{-1}(u)-\Lambda_2^{-1}(u)\bigr| \; du
+\left|\Lambda_2(t)-\Lambda_1(t)\right|,
}
which is bounded by what is claimed.
\end{proof}

\section*{Acknowledgments}

Yu-Hsiu Paco Tseng assisted in running simulations, funded by a University of Melbourne  Vacation Scholarship.
Simulations were run in {\tt R} \cite{R}
using the following packages: 
{\tt fields} \cite{fields}, {\tt hexbin} \cite{hexbin}, {\tt pracma} \cite{pracma}, {\tt RandomFields}  \cite{Schlather2015}, {\tt spatstat} \cite{Baddeley2005},
{\tt stpp} \cite{stpp}.  Figures~\ref{fig:CDFPPP} and~\ref{fig:CDFHex} were produced using 
the {\tt ggplot2} package \cite{Wickham2009}.
NR received support from ARC grant DP150101459 and thanks
the Institute for Mathematical Stochastics at
Georg-August-University G\"ottingen for supporting a visit 
in 2015 during which work on this article took place. This work commenced while the authors were visiting the Institute for Mathematical Sciences, National University of Singapore in 2015, supported by the Institute. We thank the referee and AE for helpful comments which have greatly improved the practical aspects of the paper.


\begin{thebibliography}{}

\bibitem[Abramowitz and Stegun, 1964]{Abramowitz1964}
Abramowitz, M. and Stegun, I.~A. (1964).
\newblock {\em Handbook of mathematical functions with formulas, graphs, and
  mathematical tables}, volume~55 of {\em National Bureau of Standards Applied
  Mathematics Series}.
\newblock For sale by the Superintendent of Documents, U.S. Government Printing
  Office, Washington, D.C.

\bibitem[Andrews et~al., 2011]{Andrews2011}
Andrews, J.~G., Baccelli, F., and Ganti, R.~K. (2011).
\newblock A tractable approach to coverage and rate in cellular networks.
\newblock {\em {IEEE} Transactions on Communications}, 59(11):3122--3134.

\bibitem[Andrews et~al., 2010]{Andrews2010}
Andrews, J.~G., Ganti, R.~K., Haenggi, M., Jindal, N., and Weber, S. (2010).
\newblock A primer on spatial modeling and analysis in wireless networks.
\newblock {\em {IEEE} Comm. Mag.}, 48(11):156--163.

\bibitem[Baccelli and B{\l}aszczyszyn, 2008]{Baccelli2008}
Baccelli, F. and B{\l}aszczyszyn, B. (2008).
\newblock Stochastic geometry and wireless networks: Volume {I} theory.
\newblock {\em Foundations and Trends in Networking}, 3(3--4):249--449.

\bibitem[Baccelli et~al., 1997]{Baccelli1997}
Baccelli, F., Klein, M., Lebourges, M., and Zuyev, S. (1997).
\newblock Stochastic geometry and architecture of communication networks.
\newblock {\em Telecommunication Systems}, 7(1):209--227.

\bibitem[Baccelli and Zhang, 2015]{Baccelli2015}
Baccelli, F. and Zhang, X. (2015).
\newblock A correlated shadowing model for urban wireless networks.
\newblock In {\em 2015 {IEEE} Conference on Computer Communications (INFOCOM)},
  pages 801--809.

\bibitem[Baddeley and Turner, 2005]{Baddeley2005}
Baddeley, A. and Turner, R. (2005).
\newblock spatstat: An r package for analyzing spatial point patterns.
\newblock {\em Journal of Statistical Software}, 12(1):1--42.

\bibitem[Barbour and Brown, 1992]{Barbour1992a}
Barbour, A.~D. and Brown, T.~C. (1992).
\newblock Stein's method and point process approximation.
\newblock {\em Stochastic Process. Appl.}, 43(1):9--31.

\bibitem[Barbour and Chen, 2005]{BarbourChen2005}
Barbour, A.~D. and Chen, L. H.~Y., editors (2005).
\newblock {\em An introduction to {S}tein's method}, volume~4 of {\em Lecture
  Notes Series. Institute for Mathematical Sciences. National University of
  Singapore}.
\newblock Singapore University Press, Singapore; World Scientific Publishing
  Co. Pte. Ltd., Hackensack, NJ.
\newblock Lectures from the Meeting on Stein's Method and Applications: a
  Program in Honor of Charles Stein held at the National University of
  Singapore, Singapore, July 28--August 31, 2003.

\bibitem[Barbour et~al., 1992]{Barbour1992}
Barbour, A.~D., Holst, L., and Janson, S. (1992).
\newblock {\em Poisson approximation}, volume~2 of {\em Oxford Studies in
  Probability}.
\newblock The Clarendon Press Oxford University Press, New York.
\newblock Oxford Science Publications.

\bibitem[Biscio and Lavancier, 2016]{Biscio2016}
Biscio, C. A.~N. and Lavancier, F. (2016).
\newblock Brillinger mixing of determinantal point processes and statistical
  applications.
\newblock {\em Electron. J. Statist.}, 10(1):582--607.

\bibitem[B{\l}aszczyszyn et~al., 2013]{Blaszczyszyn2013}
B{\l}aszczyszyn, B., Karray, M.~K., and Keeler, H.~P. (2013).
\newblock Using {P}oisson processes to model lattice cellular networks.
\newblock In {\em INFOCOM, 2013 Proceedings IEEE}, pages 773--781. IEEE.

\bibitem[B{\l}aszczyszyn et~al., 2015]{Blaszczyszyn2015}
B{\l}aszczyszyn, B., Karray, M.~K., and Keeler, H.~P. (2015).
\newblock Wireless networks appear {P}oissonian due to strong shadowing.
\newblock {\em {IEEE} Transactions on Wireless Communications},
  14(8):4379--4390.

\bibitem[B{\l}aszczyszyn and Keeler, 2015]{Blaszczyszyn2015b}
B{\l}aszczyszyn, B. and Keeler, H.~P. (2015).
\newblock Studying the SINR Process of the Typical User in Poisson Networks Using Its Factorial Moment Measures.
\newblock {\em {IEEE} Transactions on Information Theory},
  61(12):6774--6794.

\bibitem[Borchers, 2015]{pracma}
Borchers, H.~W. (2015).
\newblock {\em pracma: Practical Numerical Math Functions}.
\newblock R package version 1.8.3.

\bibitem[Brown, 2000]{Brown2000}
Brown, T.~X. (2000).
\newblock Cellular performance bounds via shotgun cellular systems.
\newblock {\em IEEE Journal on Selected Areas in Communications},
  18(11):2443--2455.

\bibitem[Carr et~al., 2014]{hexbin}
Carr, D., ported~by Nicholas Lewin-Koh, Maechler, M., and contains copies of
  lattice function written~by Deepayan~Sarkar (2014).
\newblock hexbin: Hexagonal binning routines.
\newblock \url{http://CRAN.R-project.org/package=hexbin}.
\newblock R package version 1.27.0.

\bibitem[Catrein and Mathar, 2008]{Catrein2008}
Catrein, D. and Mathar, R. (2008).
\newblock Gaussian random fields as a model for spatially correlated log-normal
  fading.
\newblock In {\em Telecommunication Networks and Applications Conference, 2008.
  ATNAC 2008. Australasian}, pages 153--157. {IEEE}.

\bibitem[Chen et~al., 2011]{Chen2011}
Chen, L. H.~Y., Goldstein, L., and Shao, Q.-M. (2011).
\newblock {\em Normal approximation by {S}tein's method}.
\newblock Probability and its Applications (New York). Springer, Heidelberg.

\bibitem[Chiaraviglio et~al., 2016]{Chiaraviglio2016}
Chiaraviglio, L., Cuomo, F., Maisto, M., Gigli, A., Lorincz, J., Zhou, Y.,
  Zhao, Z., Qi, C., and Zhang, H. (2016).
\newblock What is the best spatial distribution to model base station density?
  {A} deep dive into two {E}uropean mobile networks.
\newblock {\em IEEE Access}, 4:1434--1443.

\bibitem[Daley and Vere-Jones, 2008]{Daley2008}
Daley, D.~J. and Vere-Jones, D. (2008).
\newblock {\em An introduction to the theory of point processes. {V}ol. {II}}.
\newblock Probability and its Applications (New York). Springer, New York,
  second edition.
\newblock General theory and structure.

\bibitem[Di~Renzo et~al., 2016]{Renzo2016}
Di~Renzo, M., Lu, W., and Guan, P. (2016).
\newblock The intensity matching approach: A tractable stochastic geometry
  approximation to system-level analysis of cellular networks.
\newblock Preprint \url{http://arxiv.org/abs/1604.02683}.

\bibitem[Gabriel et~al., 2014]{stpp}
Gabriel, E., Diggle, P.~J., and stan function~by Barry~Rowlingson (2014).
\newblock stpp: Space-time point pattern simulation, visualisation and
  analysis.
\newblock \url{http://CRAN.R-project.org/package=stpp}.
\newblock R package version 1.0-5.

\bibitem[Gelfand et~al., 2010]{HandbookSpat2010}
Gelfand, A.~E., Diggle, P.~J., Fuentes, M., and Guttorp, P., editors (2010).
\newblock {\em Handbook of spatial statistics}.
\newblock Chapman \& Hall/CRC Handbooks of Modern Statistical Methods. CRC
  Press, Boca Raton, FL.

\bibitem[George et~al., 2016]{George2016}
George, G., Mungara, R.~K., Lozano, A., and Haenggi, M. (2016).
\newblock Ergodic spectral efficiency in mimo cellular networks.
\newblock Preprint \url{https://arxiv.org/abs/1607.04352}.

\bibitem[Goldsmith, 2005]{Goldsmith2005}
Goldsmith, A. (2005).
\newblock {\em Wireless communications}.
\newblock Cambridge university press.

\bibitem[Gudmundson, 1991]{Gudmundson1991}
Gudmundson, M. (1991).
\newblock Correlation model for shadow fading in mobile radio systems.
\newblock {\em Electronics Letters}, 27(23):2145--2146.

\bibitem[Haenggi and Ganti, 2008]{Haenggi2008}
Haenggi, M. and Ganti, R.~K. (2008).
\newblock Interference in large wireless networks.
\newblock {\em Foundations and Trends in Networking}, 3(2):127--248.

\bibitem[Heinrich, 2013]{Heinrich2013a}
Heinrich, L. (2013).
\newblock {\em Asymptotic Methods in Statistics of Random Point Processes},
  pages 115--150.
\newblock Springer Berlin Heidelberg, Berlin, Heidelberg.

\bibitem[Heinrich and Pawlas, 2013]{Heinrich2013}
Heinrich, L. and Pawlas, Z. (2013).
\newblock Absolute regularity and {B}rillinger-mixing of stationary point
  processes.
\newblock {\em Lith. Math. J.}, 53(3):293--310.

\bibitem[Kallenberg, 1983]{Kallenberg1983}
Kallenberg, O. (1983).
\newblock {\em Random Measures}.
\newblock Academic Press, London, 3rd edition.

\bibitem[Kallenberg, 2002]{Kallenberg2002}
Kallenberg, O. (2002).
\newblock {\em Foundations of modern probability}.
\newblock Probability and its Applications (New York). Springer-Verlag, New
  York, second edition.

\bibitem[Keeler et~al., 2014]{Keeler2014}
Keeler, H.~P., Ross, N., and Xia, A. (2014).
\newblock When do wireless network signals appear poisson?
\newblock Preprint \url{http://arxiv.org/abs/1411.3757}.

\bibitem[Keeler et~al., 2016]{Keeler2016}
Keeler, H.~P., Ross, N., Xia, A., and B{\l}aszczyszyn, B. (2016).
\newblock Stronger wireless signals appear more poisson.
\newblock Preprint \url{http://arxiv.org/abs/1604.02986}.

\bibitem[Klingenbrunn and Mogensen, 1999]{Klingenbrunn1999}
Klingenbrunn, T. and Mogensen, P. (1999).
\newblock Modelling cross-correlated shadowing in network simulations.
\newblock In {\em Proceedings of the 1999 IEEE 50th Vehicular Technology
  Conference, VTC'99-Fall, Amsterdam, Netherlands}, pages 1407--1411.
  Electrical Engineering/Electronics, Computer, Communications and Information
  Technology Association.

\bibitem[Lee et~al., 2013]{Lee2013}
Lee, C.-H., Shih, C.-Y., and Chen, Y.-S. (2013).
\newblock Stochastic geometry based models for modeling cellular networks in
  urban areas.
\newblock {\em Wireless Networks}, 19(6):1063--1072.

\bibitem[Miyoshi and Shirai, 2014]{Miyoshi2014}
Miyoshi, N. and Shirai, T. (2014).
\newblock A cellular network model with {G}inibre configured base stations.
\newblock {\em Adv. in Appl. Probab.}, 46(3):832--845.

\bibitem[M{\o}ller and Waagepetersen, 2004]{Moller2004}
M{\o}ller, J. and Waagepetersen, R.~P. (2004).
\newblock {\em Statistical inference and simulation for spatial point
  processes}, volume 100 of {\em Monographs on Statistics and Applied
  Probability}.
\newblock Chapman \& Hall/CRC, Boca Raton, FL.

\bibitem[Nychka et~al., 2015]{fields}
Nychka, D., Furrer, R., and Sain, S. (2015).
\newblock fields: Tools for spatial data.
\newblock \url{http://CRAN.R-project.org/package=fields}.
\newblock R package version 8.2-1.

\bibitem[{R Core Team}, 2016]{R}
{R Core Team} (2016).
\newblock {\em R: A Language and Environment for Statistical Computing}.
\newblock R Foundation for Statistical Computing, Vienna, Austria.

\bibitem[Renzo et~al., 2013]{Renzo2013}
Renzo, M.~D., Guidotti, A., and Corazza, G.~E. (2013).
\newblock Average rate of downlink heterogeneous cellular networks over
  generalized fading channels: A stochastic geometry approach.
\newblock {\em IEEE Transactions on Communications}, 61(7):3050--3071.

\bibitem[Reynaud-Bouret, 2003]{Reynaud-Bouret2003}
Reynaud-Bouret, P. (2003).
\newblock Adaptive estimation of the intensity of inhomogeneous {P}oisson
  processes via concentration inequalities.
\newblock {\em Probab. Theory Related Fields}, 126(1):103--153.

\bibitem[Ristic, 2013]{Ristic2013}
Ristic, B. (2013).
\newblock {\em Particle filters for random set models}, volume 798.
\newblock Springer.

\bibitem[Ross, 2011]{Ross2011}
Ross, N. (2011).
\newblock Fundamentals of {S}tein's method.
\newblock {\em Probab. Surv.}, 8:210--293.

\bibitem[Schlather et~al., 2015]{Schlather2015}
Schlather, M., Malinowski, A., Menck, P., Oesting, M., and Strokorb, K. (2015).
\newblock Analysis, simulation and prediction of multivariate random fields
  with package randomfields.
\newblock {\em Journal of Statistical Software}, 63(1):1--25.

\bibitem[Schuhmacher, 2005]{Schuhmacher2005a}
Schuhmacher, D. (2005).
\newblock Distance estimates for dependent superpositions of point processes.
\newblock {\em Stochastic Process. Appl.}, 115(11):1819--1837.

\bibitem[Schuhmacher et~al., 2008]{SVV08}
Schuhmacher, D., Vo, B.~T., and Vo, B.~N. (2008).
\newblock A consistent metric for performance evaluation of multi-object
  filters.
\newblock {\em {IEEE} Transactions on Signal Processing}, 56(8):3447--3457.

\bibitem[Schuhmacher and Xia, 2008]{SX08}
Schuhmacher, D. and Xia, A. (2008).
\newblock A new metric between distributions of point processes.
\newblock {\em Adv. in Appl. Probab.}, 40(3):651--672.

\bibitem[Stoyan et~al., 1987]{Stoyan1987}
Stoyan, D., Kendall, W.~S., and Mecke, J. (1987).
\newblock {\em Stochastic geometry and its applications}, volume~69 of {\em
  Mathematische Lehrb\"ucher und Monographien, II. Abteilung: Mathematische
  Monographien [Mathematical Textbooks and Monographs, Part II: Mathematical
  Monographs]}.
\newblock Akademie-Verlag, Berlin.
\newblock With a foreword by David Kendall.

\bibitem[Szyszkowicz and Yanikomeroglu, 2014]{Szyszkowicz2014}
Szyszkowicz, S.~S. and Yanikomeroglu, H. (2014).
\newblock A simple approximation of the aggregate interference from a cluster
  of many interferers with correlated shadowing.
\newblock {\em {IEEE} Transactions on Wireless Communications},
  13(8):4415--4423.

\bibitem[Szyszkowicz et~al., 2010]{Szyszkowicz2010}
Szyszkowicz, S.~S., Yanikomeroglu, H., and Thompson, J.~S. (2010).
\newblock On the feasibility of wireless shadowing correlation models.
\newblock {\em {IEEE} Transactions on Vehicular Technology}, 59(9):4222--4236.

\bibitem[Wendland, 2005]{Wendland2005}
Wendland, H. (2005).
\newblock {\em Scattered data approximation}, volume~17 of {\em Cambridge
  Monographs on Applied and Computational Mathematics}.
\newblock Cambridge University Press, Cambridge.

\bibitem[Wickham, 2009]{Wickham2009}
Wickham, H. (2009).
\newblock {\em ggplot2: elegant graphics for data analysis}.
\newblock Springer New York.

\bibitem[Win et~al., 2009]{Win2009}
Win, M.~Z., Pinto, P.~C., and Shepp, L.~A. (2009).
\newblock A mathematical theory of network interference and its applications.
\newblock {\em Proceedings of the {IEEE}}, 97(2):205--230.

\bibitem[Zhou et~al., 2015]{Zhou2015}
Zhou, Y., Zhao, Z., Lou\"et, Y., Ying, Q., Li, R., Zhou, X., Chen, X., and
  Zhang, H. (2015).
\newblock Large-Scale Spatial Distribution Identification of Base Stations in Cellular Networks.
\newblock {\em IEEE Access}, 3:2987--2999.

\end{thebibliography}
\end{document}